\pgfplotsset{compat=newest}
\pgfplotsset{plot coordinates/math parser=false}
\newcommand{\tsum}{\textstyle\sum}
\def\BState{\State\hskip-\ALG@thistlm}
\def\1{1\!{\rm l}}
\newlength\figureheight
\newlength\figurewidth
\renewcommand{\algocf@captiontext}[2]{#1\algocf@typo. \AlCapFnt{}#2} % text of caption
\def\@algocf@capt@plain{top}
\renewcommand{\algocf@makecaption}[2]{%
  \addtolength{\hsize}{\algomargin}%
  \sbox\@tempboxa{\algocf@captiontext{#1}{#2}}%
  \ifdim\wd\@tempboxa >\hsize%     % if caption is longer than a line
    \hskip .5\algomargin%
    \parbox[t]{\hsize}{\algocf@captiontext{#1}{#2}}% then caption is not centered
  \else%
    \global\@minipagefalse%
    \hbox to\hsize{\box\@tempboxa}% else caption is centered
  \fi%
  \addtolength{\hsize}{-\algomargin}%
}
\begin{document}

\jname{}
%% The year, volume, and number are determined on publication
\jyear{8 May 2018}
\jvol{}
\jnum{}
%% The \doi{\dots} and \accessdate commands are used by the production team
%\doi{10.1093/biomet/asm023}
%\accessdate{Advance Access publication on 31 July 2016}

%% These dates are usually set by the production team
\received{March 2018}
\revised{May 2018}

%% The left and right page headers are defined here:
\markboth{D. T. Frazier, G. M. Martin, C. P. Robert, \and J. Rousseau}{Asymptotics of Approximate Bayesian Computation}

%% Here are the title, author names and addresses
\title{Asymptotic Properties of Approximate Bayesian Computation}

\author{D. T. FRAZIER, G. M. MARTIN}
\affil{Department of Econometrics and Business Statistics, Monash University, Scenic Boulevard, Clayton Victoria 3800, Australia
\email{david.frazier,gael.martin@monash.edu}}

\author{C. P. ROBERT}
\affil{Universit\'e Paris Dauphine, PSL Research University, 75775 Paris cedex 16, France
\email{xian@ceremade.dauphine.fr}}

\author{J. Rousseau} 
\affil{Statistics Department, University of Oxford, 24-29 St Giles, OX1 3LB, U.K.
\email{judith.rousseau@stats.ox.ac.uk}}
\maketitle

\begin{abstract}
Approximate Bayesian computation allows for statistical analysis in models
with intractable likelihoods. In this paper we consider the asymptotic
behaviour of the posterior distribution obtained by this method. We give
general results on the rate at which the posterior distribution concentrates
on sets containing the true parameter,  its limiting shape, 
 and the asymptotic distribution of the posterior mean.
These results hold under given rates for the tolerance used within the method,
mild regularity conditions on the summary statistics, and a condition linked to
identification of the true parameters. Implications  for
practitioners are discussed.
\end{abstract}

\begin{keywords}
Approximate Bayesian computation;
Asymptotics; 
Bernstein--von Mises theorem;
Likelihood-free method;
Posterior concentration.
\end{keywords}

\section{Introduction}

Interest in approximate Bayesian computation methods has begun to shift from
its initial focus as a computational tool toward its validation as a statistical inference procedure;
see, e.g., Fearnhead and Prangle (2012), Marin {et al. }(2014),
Creel and Kristensen (2015), Drovandi {et al}. (2015), Creel et al. (arxiv:1512.07385), {Martin et al. (arxiv:1604.07949)} and Li and Fearnhead (2018a,b). Hereafter we denote these preprints by Creel {et al.} (2015), and Martin {et al.} (2016). 

We study large sample properties of posterior
distributions and posterior means obtained from approximate Bayesian computation algorithms.
Under mild regularity conditions on the underlying summary statistics, we
characterize the rate of posterior concentration and show that the limiting posterior
shape crucially depends on the interplay between the rate
at which the summaries converge and the rate at which the
tolerance used to select parameters shrinks to zero. Bayesian consistency
places a less stringent condition on the speed with which the tolerance
declines to zero than does asymptotic normality of the posterior distribution.
Further, and in contrast to textbook Bernstein--von Mises results, 
asymptotic normality of the  posterior mean does not
require asymptotic normality of the posterior distribution, the former
being attainable under weaker conditions on the tolerance than required for
the latter. Validity of these results requires that the 
summaries converge toward a well-defined limit and
that this limit, viewed as a mapping from parameters to
summaries, be injective. These conditions have a close
correspondence with those required for theoretical validity of indirect
inference and related frequentist estimators, {see, e.g., Gouri\'eroux et al. (1993)}.

We focus on three aspects of asymptotic behaviour: 
posterior consistency, limiting posterior shape, and the
asymptotic distribution of the posterior mean. Our focus is broader than that of
existing studies on the large sample properties of approximate Bayesian
computation algorithms, in which the
asymptotic properties of resulting point estimators have been the
primary focus; see Creel {et al}. (2015) and Li and Fearnhead (2018a). Our approach allows both weaker conditions 
%than those given in the aforementioned papers, 
and a complete characterization of the limiting posterior shape. We
distinguish between the conditions, on both the summaries and the
tolerance, required for concentration and those required for 
distributional results. These results suggest how the tolerance in approximate Bayesian computation should be chosen to ensure posterior concentration, valid coverage levels for credible sets, and asymptotically normal and efficient point estimators.

\section{Preliminaries and Background\label{prelim}}

We observe data ${y}=(y_{1},\dots,y_{T})^\intercal$, $T\geq 1$, drawn
from the model $\{P_{{\theta }}:{\theta \in \Theta }\}$, where 
$P_{{\theta }}$ admits the corresponding conditional density $p(\cdot
\mid{\theta })$, and ${\theta }\in {\Theta }\subset \mathbb{R}^{k_{\theta }}$.
Given a prior measure $\Pi(\theta)$ with density $\pi({\theta })$, the aim of the algorithms under study is to
produce draws from an approximation to the exact posterior density 
$
\pi({\theta \mid y})\propto p({y\mid\theta })\pi({\theta })
$, when both parameters and pseudo-data $({\theta },{z})$ can
easily be simulated from $\pi({\theta })p({z\mid\theta })$, but  $p({z\mid\theta })$ is intractable. The simplest accept/reject form of the
algorithm (Tavar\'{e}{ {et al.}, }1997; Pritchard {{et al.}},
1999) is detailed in Algorithm 1. 

\bigskip
\noindent Algorithm 1.
Approximate Bayesian Computation % \label{ABC}
%\vspace*{-12pt}
\begin{tabbing}
   \enspace (1) Simulate ${\theta }^{i}$ $(i=1,\dots,N)$ from $\pi({\theta }),$\\
   \enspace (2) Simulate ${z}^{i}=(z_{1}^{i},\dots,z_{T}^{i})^{\intercal}$ $(i=1,\dots,N)$ from the likelihood, $p(\cdot\mid{\theta }^{i})$,\\
\enspace (3) Select ${\theta }^{i}$ such that
$%\begin{equation*}
d\{{\eta }({y}),{\eta }({z}^{i})\}\leq
\varepsilon , 
$ 
where ${\eta }({\cdot })$ is a statistic, $d(\cdot
, \cdot )$ is a distance function, \\ and $\varepsilon>0$ is the tolerance level.
\end{tabbing}

Algorithm 1 thus samples ${\theta }$ and ${z}$ from the joint posterior density
$$%\begin{equation*}
\pi_{\varepsilon }\{{\theta },{z\mid\eta }({y})\}={\pi({\theta })p({z\mid\theta }%
)1\!\mathrm{l}_{\varepsilon }({z})}\big/\int\pi({%
\theta })p({z\mid\theta })1\!\mathrm{l}_{\varepsilon }({z})d{z}d{\theta},
$$%\end{equation*}%
where $1\!\mathrm{l}_{\varepsilon }({z})=1\!\mathrm{l}[d\{{\eta }({y}),{\eta 
}({z})\}\leq \varepsilon ]=1$ if $d\left\{ {\eta }({y}),{\eta }({z}%
)\right\} \leq \varepsilon $, and zero otherwise. The approximate Bayesian computation posterior density is defined as
$$%\begin{equation}
\pi_{\varepsilon }\{{\theta \mid\eta }({y})\}=\textstyle\int_{{}}\pi_{\varepsilon }\{{%
\theta },{z\mid\eta }({y})\}d{z}.  
%%\label{abc_post}
$$%\end{equation}%
Below, we refer to $\pi_{\varepsilon }\{{\theta \mid\eta }({y})\}$ as the approximate posterior density. Likewise, the posterior probability of a set $A\subset\Theta$ associated with Algorithm 1 is 
\begin{equation*}
\Pi _{\varepsilon }\{A\mid \eta (y)\}=\Pi \left[ A\mid d_{}\{{\eta }({y}),{\eta }({z}%
)\}\leq \varepsilon \right] =\int_{A}\pi_{\varepsilon }\{{\theta }\mid {\eta }({y}%
)\}d{\theta },
\end{equation*}and we refer to $\Pi_{\varepsilon}\{\cdot\mid \eta(y)\}$ as the approximate posterior distribution. When $\eta(\cdot)$ is sufficient for the observed data $y$ and $\varepsilon$ is close to zero, $\pi_{\varepsilon }\{{\theta \mid\eta }({y})\}$ will be a good approximation to $\pi({\theta \mid y})$, and draws of ${\theta }$
from $\pi_{\varepsilon }\{\theta \mid\eta({y})\}$ can be used to estimate
features of $\pi(\theta \mid y)$. 

In practice $\eta(y)$ is rarely sufficient for $y$, and draws of $\theta$
can only be used to approximate $\pi\{\theta \mid \eta(y)\}=\lim_{\varepsilon\rightarrow0}\pi_{\varepsilon}\{\theta \mid \eta(y)\}$. Given the
general lack of sufficient statistics, we need to assess the behavior of the
approximate posterior distribution $\Pi_{\varepsilon }\{\cdot\mid \eta(y)\}$,
and to establish whether or not $\Pi_{\varepsilon }\{\cdot\mid \eta(y)\}$
behaves in a manner that is appropriate for statistical inference, with asymptotic theory being one obvious approach. 

Establishing the large sample behavior of $\Pi_{\varepsilon }\{\cdot\mid \eta(y)\}$,
%\{\theta \mid \eta (y)\}$, 
including point and interval estimates derived from this distribution,
%\{\theta \mid \eta (y)\}$, 
gives practitioners 
guarantees on the reliability of approximate Bayesian computations.
Furthermore, these results allow us to provide guidelines for choosing the tolerance $\varepsilon$ so that $\Pi_{\varepsilon }\{\cdot\mid \eta(y)\}$ possesses desirable statistical properties.

Before presenting our results, we set notation used
throughout the paper. Let $\mathcal{B}\subset \mathbb{R}^{k_{\eta }}$ denote
the range of the simulated summaries $\eta(z)$. Let $d_{1}(\cdot ,\cdot )$ be a
metric on ${\Theta }$ and $d_{2}(\cdot ,\cdot)$ a metric on $\mathcal{B}$. Take $\Vert \cdot \Vert $ to be the Euclidean norm. Throughout, $C$
denotes a generic positive constant. For real-valued sequences $\{a_{T}\}_{T\geq 1}$ and
$\{b_{T}\}_{T\geq 1}$, $a_{T}\lesssim b_{T}$ denotes $a_{T}\leq Cb_{T}$ for
some finite $C>0$ and $T$ large, $a_{T}\asymp b_{T}$ implies  that $a_{T}\lesssim b_{T} \lesssim a_{T}$, and $a_{T}{\gg }b_{T}$ indicates a larger order of
magnitude. For $x_{T}$ a random variable, $x_{T}=o_{P}(a_{T})$ if
$\lim_{T\rightarrow \infty }\text{pr} (|x_{T}/a_{T}|\geq C)=0$ for any $C>0$
and $x_{T}=O_{P}(a_{T})$ if for any $C>0$ there exists a finite $M>0$ and a
finite $T$ such that $\text{pr}(|x_{t}/a_{t}|\geq M)\leq C$, for all $t>T$. All limits are taken as $T\rightarrow\infty$. When no confusion will result, $\lim_{T}$ replaces $\lim_{T\rightarrow\infty}$.

\section{Concentration of the Approximate Bayesian Computation Posterior 
\label{Aux}}

We assume throughout that the model is correctly specified: for some $\theta_0$ in the interior of $\Theta$, we have $P_{{\theta }}=P_{0}$.  Asymptotic validity of any Bayesian procedure
requires posterior concentration, which is often referred to as Bayesian consistency. In our 
context, this equates to the following posterior
concentration property: for any $\delta >0$, and for some $\varepsilon>0$, 
$$%\begin{equation*}
\Pi _{\varepsilon }\{d_{1}(\theta ,\theta _{0})>\delta\mid \eta (y)\}=\Pi \left[ d_{1}(\theta ,\theta _{0})>\delta \mid d_{2}\{{\eta }({y}),{\eta }({%
z})\}\leq \varepsilon \right] =\int_{d_{1}(\theta ,\theta _{0})>\delta
}\pi_{\varepsilon }\{\theta \mid \eta (y)\}d\theta =o_{P}(1).   
%%\label{concentration}
$$%\end{equation*}%
This property is paramount since, for any $A\subset \Theta $, $\Pi \left[
A\mid d_{2}\{{\eta }({y}),{\eta }({z})\}\leq \varepsilon \right] $ will differ from
the exact posterior\ probability.
Without the guarantees of exact posterior inference,
knowing that $\Pi_{\varepsilon}\{\cdot\mid\eta(y)\}$ will 
concentrate on $\theta _{0}$ gives validity to its use as a means of expressing our uncertainty about $\theta$.

Posterior concentration is related to the rate at which information about $%
\theta _{0}$ accumulates in the sample. The amount of information Algorithm %
1 provides depends on the rate
at which the observed summaries $\eta(y)$ and the simulated summaries $\eta(z)$ converge to well-defined
limit counterparts ${b}({\theta }_{0})$ and ${b}({\theta })$, and the
rate at which information about $\theta _{0}$ accumulates within the
algorithm, governed by the rate at which $\varepsilon $ goes to $0$%
. To link both factors we consider $\varepsilon $ as a $T$-dependent
sequence $\varepsilon _{T}\rightarrow 0$ as ${T\rightarrow \infty }$. 
We can now state the technical assumptions used
to establish our first result. These assumptions are
applicable to a broad range of data structures, including 
weakly dependent data.

\begin{assumption}\label{[A1]}
There exist a non-random map ${b}:{\Theta }%
\rightarrow \mathcal{B}$, and a sequence of functions $\rho _{T}(u)$ that are monotone non-increasing in $u$ for any $T$ and satisfy $\rho _{T}(u)\rightarrow 0$ as $%
T\rightarrow \infty $. For fixed $u$, and for all $\theta\in\Theta$,
\begin{equation*}
P_{{\theta }}\left[ d_{2}\{{\eta }({z}),{b}({\theta })\}>u\right] \leq c({%
\theta })\rho _{T}(u),\quad \int_{\Theta }c({\theta })d\Pi ({\theta }%
)<\infty,
\end{equation*}%
with either of the following assumptions on $c(\cdot )$:

\noindent{(i)} there exist $c_{0}<\infty $ and $\delta >0$ such that
for all ${\theta }$ satisfying $d_{2}\{{b}({\theta }%
),{b}({\theta}_{0})\}\leq \delta $ then $c({\theta })\leq c_{0}$;

\noindent{(ii)} there exists $a>0$ such that $\int_{\Theta }c({\theta }%
)^{1+a}d\Pi ({\theta })<\infty .$
\end{assumption}

\begin{assumption} \label{[A2]} There exists some $D>0$ such that, for all $\xi >0$
and some $C>0$, the prior probability satisfies
$\Pi \left[ d_{2}\{{b}({\theta }),{b}({\theta }_{0})\}\leq \xi \right]
\geq C \xi ^{D}.$
\end{assumption}

\begin{assumption}\label{[A3]} {(i)} The map ${b}$
%:{\Theta }\rightarrow \mathcal{B}$ 
is continuous. {(ii)} The map ${b}$
%:{\Theta }\rightarrow \mathcal{B}$ 
is injective and satisfies:
$\Vert {\theta }-{\theta }_{0}\Vert \leq L\Vert {b}({\theta })-{b}({\theta }%
_{0})\Vert ^{\alpha }$ on some open neighbourhood of ${\theta }_{0}$ with $L>0$ and $\alpha >0$.
\end{assumption}

\begin{remark}\label{rem2}
The convergence of $\eta(z)$ to $b(\theta)$ in Assumption \ref{[A1]} is the key to
posterior concentration and without it, or a similar assumption,
Bayesian consistency will not occur. The function $\rho_{T}(u)$ in Assumption \ref{[A1]} typically takes the form $\rho_{T}(u)=\rho_{}(uv_{T})$, for $v_{T}$ a sequence such that $d_{2}\{\eta(z),b(\theta)\}=O_{P}(1/v_{T})$, and where $\rho(u v_{T})$ controls the tail behavior of $d_{2}\{\eta(z),b(\theta)\}$. The specific structure of $\rho(uv_{T})$ will depend on what is assumed about the properties of the underlying summaries $\eta(z)$. In most cases, $\rho(uv_{T})$ will have either a polynomial or exponential structure in $uv_{T}$, and thus satisfy one of the following rates.
\smallskip

\noindent{(a)} {Polynomial:} there exist a diverging positive sequence $\{v_{T}\}_{T\ge 1}$
%\rightarrow\infty$ 
and $u_{0},\kappa>0$ such that
\begin{equation}
P_{\theta}\left[d_{2}\{\eta(z),b(\theta)\}>u\right]\leq c(\theta)\rho_{T}(u),\quad\rho _{T}(u)=1\big/(uv_{T})^{\kappa },\quad u\leq u_{0},
\label{rho_poly}
\end{equation}where, for some $c_0>0$ and $\delta>0$, $\int_{\Theta}c(\theta)d\Pi(\theta)<\infty$ and if $d_{2}\{b(\theta),b(\theta_0)\}\leq\delta$, then $c(\theta)\leq c_0$.
\smallskip 

\noindent{(b)} {Exponential:} there exist $h_{\theta}(\cdot)>0$ and $u_0>0$ such that
\begin{equation}
P_{\theta}\left[d_{2}\{\eta(z),b(\theta)\}>u\right]\leq c(\theta)\rho_{T}(u),\quad\rho _{T}(u)=\exp\{-h_{{\theta }}(uv_{T})\},\quad u\leq u_{0}, \label{rho_exp}
\end{equation}
where, for some $c,C>0$, $\int_{\Theta} c(\theta)\exp\{-h_{\theta}(u v_{T})\}d\Pi(\theta)\leq C\exp\{-c(u v_{T})^{\tau}\}.$ \smallskip

\noindent To illustrate these cases for $\rho_{T}(\cdot)$, consider the summary statistics $\eta ({z})=T^{-1}\sum_{i=1}^{T}g(z_{i})$ where, for simplicity, $\{g(z_{i})\}_{i\leq T}$ is independent and identically distributed, and $b({\theta })=E_{{\theta }}\{g(Z)\}$. 

If $g(z_{i})-b(\theta)$ has a
finite moment of order $\kappa $, $\rho_{T}(u)$ will satisfy \eqref{rho_poly}: from Markov's
inequality, 
\begin{equation*}
P_{\theta}\left\{\Vert {\eta }({z})-b({\theta })\Vert >u\right\}\leq {%
C_{}E_{{\theta }}\left\{ | g(Z)|^{\kappa }\right\} }\big/{({uT^{1/2}})^{\kappa }%
}.
\end{equation*}%
With reference to \eqref{rho_poly}, $\rho_{T}(u)=1/(uv_{T})^{\kappa}$, $v_{T}=T^{1/2}$ and $c(\theta)=C\mathbb{E}_{\theta}\{|g(Z)|^{\kappa}\}<\infty$. If the map ${\theta }\mapsto E_{{\theta }}\left\{
|g(Z)|^{\kappa }\right\}$ is continuous at ${\theta }_{0}$ and positive, 
Assumption \ref{[A1]} is satisfied. 

If $\{g(z_i)-b(\theta)\}$ has a finite exponential moment, $\rho_{T}(u)$ will satisfy \eqref{rho_exp}: from a version of the Bernstein inequality,
\begin{equation*}
\begin{split}
P_{\theta}\left\{ \Vert {\eta }({z})-b({\theta })\Vert >u\right\}
&\leq \exp\left[-u^{2}T/\{2c(\theta )\}\right].
\end{split}%
\end{equation*}%
With reference to (\ref{rho_exp}), $\rho_{T}(u)=\exp\{-h_{\theta}(uv_{T})\}$, $h_{{\theta }}(uv_{T})=u^{2}v_{T}^{2}/\{2c(\theta )\}$ and $v_{T}=T^{1/2}$.
If the map ${\theta }\mapsto c(\theta )$ is continuous at ${\theta }_{0}$ and positive, Assumption 
\ref{[A1]} is satisfied.
\end{remark}
\begin{remark}\label{remark2}  Assumption \ref{[A2]} controls the
degree of prior mass in a neighbourhood of ${\theta }_{0}$ and is standard
in Bayesian asymptotics. For $\xi$ small, the larger $D$,
the smaller the amount of prior mass near $\theta _{0}$. If the prior measure $\Pi(\theta)$ is absolutely continuous with prior density $\pi(\theta)$ and
if $\pi$ is bounded, above and below, near $\theta _{0}$, then $%
D=\dim (\theta )=k_{\theta }$. Assumption \ref{[A3]} is an identification condition
that is critical for obtaining posterior concentration around ${%
\theta }_{0}$. Injectivity of $b$ depends on both the true structural model and the particular
choice of ${\eta }$. Without this identification condition posterior concentration 
at $\theta_0$ 
cannot occur. 
\end{remark}

\begin{theorem}\label{thm1} If Assumptions \ref{[A1]}--\ref{[A2]} are satisfied, then, for $M$ large enough, as $T\rightarrow\infty$ and $\varepsilon_T=o(1)$, with $P_{0}$ probability going to one,
\begin{equation}
\Pi \left[ d_{2}\{{b}({\theta }),{b}({\theta }_{0})\}>\lambda_{T}\mid d_{2}\{{\eta }({y}),{\eta }({z}%
)\}\leq \varepsilon _{T}\right] \lesssim 1/M,  \label{i}
\end{equation} with $\lambda_{T}=4\varepsilon_{T}/3 + \rho_{T}^{-1}(\varepsilon_{T}^{D}/M).$ Moreover, if Assumption \ref{[A3]} also holds, as $T\rightarrow\infty$,
\begin{equation}
\Pi \left[ d_{1}({\theta },{\theta }_{0})>L\lambda_{T}^{\alpha }\mid d_{2}\{{\eta }({y}),{\eta }({z}%
)\}\leq \varepsilon _{T}\right] \lesssim 1/M,  \label{ii}
\end{equation}
\end{theorem}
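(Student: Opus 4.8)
The plan is to write the left-hand side of \eqref{i} as a ratio $N_T/D_T$, with
\[
N_T=\int_{\{d_2(b(\theta),b(\theta_0))>\lambda_T\}}P_\theta\{d_2(\eta(y),\eta(z))\le\varepsilon_T\}\,d\Pi(\theta),\qquad
D_T=\int_{\Theta}P_\theta\{d_2(\eta(y),\eta(z))\le\varepsilon_T\}\,d\Pi(\theta),
\]
and to prove $N_T\lesssim\varepsilon_T^D/M$ and $D_T\gtrsim\varepsilon_T^D$ on an event of $P_0$-probability tending to one. That event is $\mathcal A_T=\{d_2(\eta(y),b(\theta_0))\le\varepsilon_T/3\}$: since $\eta(y)$ has the same law under $P_0$ as $\eta(z)$ under $P_{\theta_0}$, Assumption \ref{[A1]} at $\theta_0$ gives $P_0(\mathcal A_T^c)\le c(\theta_0)\rho_T(\varepsilon_T/3)\to0$, where the convergence uses that $\varepsilon_T$ does not vanish faster than the rate at which $\eta(y)$ concentrates on $b(\theta_0)$.

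For the numerator the key step is a triangle inequality: if $d_2(b(\theta),b(\theta_0))>\lambda_T$ and the draw is accepted, i.e.\ $d_2(\eta(y),\eta(z))\le\varepsilon_T$, then on $\mathcal A_T$
\[
\lambda_T<d_2(b(\theta),b(\theta_0))\le d_2(b(\theta),\eta(z))+d_2(\eta(z),\eta(y))+d_2(\eta(y),b(\theta_0))\le d_2(b(\theta),\eta(z))+\tfrac43\varepsilon_T,
\]
so $d_2(\eta(z),b(\theta))>\lambda_T-\tfrac43\varepsilon_T=\rho_T^{-1}(\varepsilon_T^D/M)$. Assumption \ref{[A1]} then gives $P_\theta\{d_2(\eta(y),\eta(z))\le\varepsilon_T\}\le c(\theta)\rho_T\{\rho_T^{-1}(\varepsilon_T^D/M)\}\le c(\theta)\varepsilon_T^D/M$, and integrating against $\Pi$ using $\int_\Theta c\,d\Pi<\infty$ (valid under either form of Assumption \ref{[A1]}) yields $N_T\lesssim\varepsilon_T^D/M$.

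I expect the lower bound on $D_T$ to be the main obstacle, since this is where Assumptions \ref{[A1]} and \ref{[A2]} must be made to interlock. Restricting the integral to $B_T=\{d_2(b(\theta),b(\theta_0))\le\varepsilon_T/3\}$, Assumption \ref{[A2]} gives $\Pi(B_T)\gtrsim\varepsilon_T^D$; for $\theta\in B_T$ and on $\mathcal A_T$ one has $d_2(\eta(y),b(\theta))\le\tfrac23\varepsilon_T$, so $\{d_2(\eta(z),b(\theta))\le\varepsilon_T/3\}\subseteq\{d_2(\eta(y),\eta(z))\le\varepsilon_T\}$ and hence $P_\theta\{d_2(\eta(y),\eta(z))\le\varepsilon_T\}\ge1-c(\theta)\rho_T(\varepsilon_T/3)$. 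Under Assumption \ref{[A1]}(i), $c(\theta)\le c_0$ on $B_T$ for $T$ large and $\rho_T(\varepsilon_T/3)\to0$, so this probability exceeds $1/2$ and $D_T\ge\tfrac12\Pi(B_T)\gtrsim\varepsilon_T^D$; under Assumption \ref{[A1]}(ii) one first removes the part of $B_T$ on which $c(\theta)$ exceeds a slowly diverging threshold, bounding its prior mass by a Markov inequality applied to $\int c^{1+a}\,d\Pi$, and argues as before on the remainder. Combining the two bounds gives \eqref{i}. The delicate points are that $\varepsilon_T/3$ must be small enough to fall inside the neighbourhoods where Assumptions \ref{[A1]}(i) and \ref{[A2]} apply, and that $\rho_T\circ\rho_T^{-1}$ must be handled with the appropriate generalized-inverse convention.

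Finally, \eqref{ii} follows from \eqref{i} using Assumption \ref{[A3]}. On the neighbourhood $V_0$ of $\theta_0$ from Assumption \ref{[A3]}(ii), contraposing $\|\theta-\theta_0\|\le L\|b(\theta)-b(\theta_0)\|^\alpha$ gives $\{d_1(\theta,\theta_0)>L\lambda_T^\alpha\}\cap V_0\subseteq\{d_2(b(\theta),b(\theta_0))>\lambda_T\}$; and since $b$ is continuous and injective (Assumption \ref{[A3]}), $\inf_{\theta\notin V_0}d_2(b(\theta),b(\theta_0))>0$, so for $T$ large $\Theta\setminus V_0$ is also contained in $\{d_2(b(\theta),b(\theta_0))>\lambda_T\}$. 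Hence $\{d_1(\theta,\theta_0)>L\lambda_T^\alpha\}\subseteq\{d_2(b(\theta),b(\theta_0))>\lambda_T\}$ for $T$ large, and applying \eqref{i} gives \eqref{ii}.
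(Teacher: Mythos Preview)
Your proof is correct and follows essentially the same route as the paper: restrict to the good event $\{d_2(\eta(y),b(\theta_0))\le\varepsilon_T/3\}$, bound the numerator via the triangle inequality and Assumption~\ref{[A1]}, and bound the denominator from below by restricting to $\{d_2(b(\theta),b(\theta_0))\le\varepsilon_T/3\}$ and applying Assumption~\ref{[A2]}. The only minor differences are that under Assumption~\ref{[A1]}(ii) the paper controls $\int_{B_T}c\,d\Pi$ by a H\"older inequality (giving $\lesssim\Pi(B_T)^{a/(1+a)}$, which likewise needs $\rho_T(\varepsilon_T)=o(\varepsilon_T^{D/(1+a)})$) rather than your Markov truncation, and that the paper leaves the passage from \eqref{i} to \eqref{ii} entirely implicit whereas you spell it out.
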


Since equations \eqref{i} and \eqref{ii} hold for any $M$ large enough, we can conclude that the posterior distribution behaves like an $o_{P}(1)$ random variable on sets that do not include $\theta_0$, and Bayesian consistency of $\Pi_{\varepsilon}\{\cdot \mid \eta(y)\}$ follows. More generally, \eqref{i} and \eqref{ii} give a posterior concentration rate, denoted by $\lambda_{T}$ in Theorem \ref{thm1}, that depends on $\varepsilon_{T}$ and on the underlying behavior of $\eta(z)$, as described by $\rho_{T}(u)$. We must consider the nature of this concentration rate in order to understand which choices for $\varepsilon_{T}$ are appropriate under different assumptions on the summary statistics.

As mentioned above, the deviation control function $\rho_{T}(u)$ will often be of a polynomial \eqref{rho_poly} or exponential \eqref{rho_exp} form. Under these two assumptions, $\rho_{T}(u)$ has an explicit representation and the concentration rate $\lambda_{T}$ can be obtained by solving the equation $$\lambda_{T}= 4\varepsilon_{T}/3 + \rho_{T}^{-1}(\varepsilon_{T}^{D}/M).$$

\noindent (a) {polynomial case:} From equation (\ref{rho_poly}), the deviation control function is
$\rho_{T}(u)=1/(uv_{T})^{\kappa}$. To obtain the posterior concentration rate, we invert $\rho_{T}(u)$ to obtain $\rho _{T}^{-1}(\varepsilon
_{T}^{D})=1/(\varepsilon _{T}^{D/\kappa }v_{T})$, and then equate $%
\varepsilon _{T}$ and $\rho _{T}^{-1}(\varepsilon _{T}^{D})$, to obtain $%
\varepsilon _{T}\asymp v_{T}^{-\kappa /(\kappa +D)}$. This choice of $\varepsilon_{T}$ implies concentration of the approximate posterior distribution at the  rate 
\begin{equation*}
\lambda _{T}\asymp v_{T}^{-\kappa /(\kappa +D)}.
\end{equation*}

\smallskip
\noindent (b) {exponential case:} If the summary statistics admit an exponential moment, a faster rate of posterior concentration obtains. From equation \eqref{rho_exp}, $\rho _{T}(u)=\exp\{-h_{{\theta }}(uv_{T})\}$
and there exist finite $u_{0},c,C>0$ such that 
\begin{equation*}
\int_{\Theta }c({\theta })e^{-h_{\theta }(uv_{T})}d\Pi ({\theta })\leq
Ce^{-c(uv_{T})^{\tau }},\quad u\leq u_{0}.
\end{equation*}%
Hence if $c({\theta })$ is bounded from above and if $h_{{\theta }}(u)\ge
u^{\tau }$ for ${\theta }$ in a neighbourhood of $\theta_0$, then $\rho
_{T}(u)\asymp \exp\{-c_{0}(uv_{T})^{\tau }\}$; thus, $\rho
_{T}^{-1}(\varepsilon _{T}^{D})\asymp (-\log \varepsilon _{T})^{1/\tau
}/v_{T}$. Following arguments   similar  to those used in (a) immediately above, if we take $\varepsilon _{T}\asymp (\log v_{T})^{1/\tau }/v_{T}$,
the approximate posterior distribution concentrates at the rate
\begin{equation*}
\lambda _{T}\asymp {(\log v_{T})^{1/\tau }}/{v_{T}}. 
\end{equation*}%

\begin{example} We now illustrate the conditions of Theorem %
\ref{thm1} in a 
moving average model of order two:%
$$%\begin{equation}
y_{t}=e_{t}+\theta _{1}e_{t-1}+\theta _{2}e_{t-2}\;\;(t=1,\dots,T),  
%%\label{MA2}
$$%\end{equation}%
where $\{e_{t}\}_{t=1}^{T}$ is a sequence of white noise random variables such
that $E(e_{t}^{4+\delta })<\infty $ and some $\delta >0$. Our prior 
for $\theta=(\theta _{1},\theta _{2})^{\intercal}$ is uniform over the
following invertibility region, 
\begin{equation}
-2\leq \theta _{1}\leq 2,\;\theta _{1}+\theta _{2}\geq -1,\theta _{1}-\theta
_{2}\leq 1.  \label{const1}
\end{equation}%
Following Marin {et al.} (2011), we choose as summary statistics for
Algorithm 1 the sample autocovariances $\eta _{j}({y})=T^{-1}\sum_{t=1+j}^{T}y_{t}y_{t-j}$, for $j=0,1,2$.
% so that ${\eta }\left( {y}% \right) =(\eta _{0}({y}),\eta _{1}({y}),\eta _{2}({y}))^{\prime }$. 
For this choice the $j$-th component
of $b(\theta )$ is $b_{j}(\theta )=E_{\theta }(z_{t}z_{t-j})$.

Now, take $d_{2}\{\eta (z),b(\theta )\}=\left\Vert \eta (z)-b(\theta
)\right\Vert $. Under the moment condition for $e_{t}$ above, it follows that $V(\theta )=E[\{\eta (z)-b(\theta )\}\{\eta (z)-b(\theta
)\}^{\intercal }]$ satisfies $\text{tr}\{V(\theta )\}<\infty $ for all $\theta 
$ in \eqref{const1}. By an application of Markov's inequality, we can
conclude that 
\begin{flalign*}
P_{\theta}\left\{\|\eta(z)-b(\theta)\|>u\right\}=P_{\theta}\left\{\|\eta(z)-b(\theta)\|^{2}>u^2\right\}&\leq \frac{\text{tr}\{V(\theta)\}}{u^2 T}+o(1/T),
\end{flalign*}where the $o(1/T)$ term comes from the fact that there are
finitely many non-zero covariance terms due to the $m$-dependence of
the series, and Assumption \ref{[A1]} is satisfied. 
Given the structure of $b(\theta )$, the uniform prior $\pi(\theta )$
over \eqref{const1} fulfills Assumption \ref{[A2]}. Furthermore, $\theta\mapsto b(\theta)=(1+\theta _{1}^{2}+\theta _{2}^{2},(1+\theta _{2})\theta
_{1},\theta _{2})^{\intercal }$ is  injective  and satisfies Assumption \ref{[A3]}. As noted in Remark \ref{remark2}, the injectivity of $\theta\mapsto b(\theta)$ is required for posterior concentration, and without it there is no guarantee that the posterior will concentrate on $\theta_0$. Since the sufficient conditions for Theorem \ref{thm1} are satisfied, approximate Bayesian computation based on this choice of statistics will yield an approximate posterior density that concentrates on $\theta_0$.

\end{example}

Theorem \ref{thm1} can also be
visualized by fixing a particular value of $\theta $, say $%
\tilde{\theta}$, and generating observed data sets $\tilde{y}$
of increasing length, then running Algorithm 1 on these 
data sets. If the conditions of Theorem \ref{thm1} are
satisfied, the approximate posterior density will become increasingly peaked at $\tilde{\theta}$ as $T$ increases. Using  Example 1, we demonstrate this behavior in the
Supplementary Material.

\section{Shape of the Asymptotic Posterior Distribution\label{norm}}
\subsection{Assumptions and Theorem} 
While posterior concentration states that $\Pi \left[ d_{1}(\theta
,\theta _{0})>\delta \mid d_{2}\{{\eta }({y}),{\eta }({z})\}\leq \varepsilon
_{T}\right] =o_{P}(1)$ for an appropriate choice of $\varepsilon _{T}$, it
does not indicate precisely how this mass accumulates, or the approximate
amount of posterior probability within any neighbourhood of $\theta _{0}$. This
information is needed to obtain accurate expressions of
uncertainty about point estimators of $\theta _{0}$ and to ensure that credible
regions have proper frequentist coverage. To this end, we now
analyse the limiting shape of 
$\Pi \left[ \cdot \mid d_{2}\{{\eta }({y}),{\eta }({z})\}\leq \varepsilon _{T}%
\right] $ for various relationships between $\varepsilon _{T}$ and the rate
at which summary statistics satisfy a central limit theorem. In this and the following sections, we denote $\Pi \left[ \cdot\mid d_{2}\{{\eta }({y}),{\eta }({z})\}\leq \varepsilon _{T}\right] $ by $\Pi
_{\varepsilon }\{\cdot \mid {\eta }(y)\}$. Let $%
\Vert \cdot \Vert _{\ast }$ denote the spectral norm.

In addition to Assumption \ref{[A2]} the following conditions
are needed to establish the results of this section.

\begin{assumption}\label{[A1']}
 Assumption \ref{[A1]} holds. There exists a sequence of positive definite matrices $\{\Sigma _{T}({\theta }%
_{0})\}_{T\ge 1}$, $c_0>0$, $\kappa >1$ and $\delta >0$ such that for all $\Vert {\theta }-{%
\theta }_{0}\Vert \leq \delta $, $P_{{\theta }}\left[ \Vert \Sigma _{T}({%
\theta }_{0})\{{\eta }({z})-{b}({\theta })\}\Vert >u\right] \leq {c_{0}%
}{u^{-\kappa }}$ for all $0<u\leq \delta \|\Sigma_T(\theta_0)\|_*$, uniformly in $T$.
\end{assumption}
 
\begin{assumption}\label{[A3']} Assumption \ref{[A3]} holds. The
map $\theta\mapsto{b}(\theta)$ is continuously differentiable at ${\theta
_{0}}$ and the Jacobian $\nabla _{\theta }b({\theta }_{0})$ has full column
rank $k_{\theta }$.
\end{assumption}

\begin{assumption}
\label{[A4]} The value
$\theta_0$ is in the interior of $\Theta$. For some $\delta >0$  and for all $\Vert {%
\theta }-{\theta }_{0}\Vert \leq \delta $, there exists a sequence of $(k_{\eta }\times k_{\eta
}) $ positive definite matrices $\{\Sigma_{T}({\theta })\}_{T\ge 1}$, with $k_{\eta
}=\dim \{\eta (z)\}$, such that for all open sets $B$
\begin{equation*}
\sup_{|\theta - \theta_0|\leq \delta} \left| P_\theta\left[ {\Sigma }_{T}({\theta })\{{\eta }({z})-{b}({\theta })\} \in B\right] - P\left\{ \mathcal{N%
}(0,I_{k_{\eta }}) \in B\right\} \right| \rightarrow 0
\end{equation*} in distribution as $T\rightarrow\infty$, where $I_{k_{\eta }}$ is the $(k_{\eta }\times k_{\eta })$ identity
matrix.
\end{assumption}
\begin{assumption}
\label{[A5]} There exists $v_T\rightarrow\infty$ such that for all $\Vert \theta -\theta _{0}\Vert \leq \delta $%
, the sequence of functions ${\theta }\mapsto {\Sigma }_{T}({\theta }%
)v_{T}^{-1}$ converges to some positive definite $A(\theta )$ and is
equicontinuous at ${\theta }_{0}$. \smallskip
\end{assumption}

\begin{assumption}\label{[A6]} For some positive $\delta $, all $\Vert {\theta }-{\theta }_{0}\Vert \leq \delta $, all ellipsoids $B_{T}=\big\{%
(t_{1},\dots ,t_{k_{\eta }}):\sum_{j=1}^{k_{\eta }}t_{j}^{2}/h_{T}^{2}\leq 1%
\big\}$ and all $u\in \mathbb{R}^{k_{\eta }}$ fixed, for all $%
h_{T}\rightarrow 0$, as $T\rightarrow \infty $, 
\begin{equation*}
\begin{split}
\lim_{T}
\sup_{|\theta - \theta_0|\leq \delta } \left| h_{T}^{-k_{\eta }} P_{\theta }\left[ {\Sigma }_{T}({\theta })\{{\eta }({z})-{b}(%
{\theta })\}-u\in B_{T}\right]  - 
 \varphi _{k_{\eta}}(u) \right| &= 0
, \\
{h_{T}^{-k_{\eta }}}\,{P_{{\theta }}\left[ {\Sigma }_{T}({\theta })\{{\eta }({z})-{b}(%
{\theta })\}-u\in B_{T}\right] } & \leq H(u),\quad \int
H(u)du<\infty ,
\end{split}
%%\label{dens:cond}
\end{equation*}%
for $\varphi _{k_{\eta }}(\cdot )$ the density of a $k_{\eta }$-dimensional
normal random variate.
\end{assumption}
\smallskip

\begin{remark}
 Assumption \ref{[A1']} is similar to Assumption \ref{[A1]} but for $\Sigma _{T}(\theta _{0})\{\eta
(z)-b(\theta )\}$. Assumption \ref{[A4]} is a
central limit theorem for $%
\{\eta (z)-b(\theta )\}$ and, as such, requires the existence of a 
positive-definite matrix $\Sigma _{T}(\theta )$. In simple cases,
such as independent and identically distributed data with $\eta(z)=T^{-1}\sum_{i=1}^{T}g(z_{i})$, $\Sigma
_{T}(\theta )=v_{T}A_{T}^{}(\theta )$ with $A_{T}(\theta)=A(%
\theta)+o_{P}(1)$ and $V(\theta )=E[\{g(Z)-b(\theta )\}\{g(Z)-b(\theta )\}^{\intercal }]=\{A(\theta)^{\intercal}A(\theta)\}^{-1}$. Assumptions \ref{[A3']}  and \ref{[A6]}  ensure that $\theta \mapsto b(\theta )$ and the covariance
matrix of $\{\eta (z)-b(\theta )\}$ are
well-behaved, which allows the posterior behavior of a normalized version of $%
(\theta -\theta _{0})$ to be governed by that of $\Sigma
_{T}(\theta _{0})\{b(\theta )-b(\theta _{0})\}$. Assumption \ref{[A6]}
governs the pointwise convergence of a normalized version of the measure $%
P_{\theta }$, therein dominated by $H(u)$, and allows the application of the dominated
convergence theorem in Case {(iii)} of the following result.
\end{remark}
\begin{theorem}
\label{normal_thm} Under Assumptions \ref{[A2]}, \ref{[A1']}--\ref{[A5]}, with $\kappa
>k_{\theta }$, the following hold with probability going to 1.
%where for $\eta _{1},\eta _{2}\in \mathcal{B}$%
%, $d_{2}\{\eta _{1},\eta _{2}\}=\Vert \eta _{1}-\eta _{2}\Vert $. The
%following results hold: \smallskip

\noindent{(i)} If $\lim_{T}v_{T}\varepsilon _{T}=\infty $, the posterior distribution of $\varepsilon
_{T}^{-1}({\theta }-{\theta }_{0})$ converges to the uniform distribution
over the ellipsoid $\{w:w^{\intercal }B_{0}w\leq 1\}$ with $B_{0}=\nabla _{\theta
}b({\theta _{0}})^{\intercal }\nabla _{\theta }b({\theta _{0}})$, meaning that
for $f(\cdot)$ continuous and bounded,
$$%\begin{equation}
\int f\{\varepsilon _{T}^{-1}({\theta }-{\theta }_{0})\}d\Pi _{\varepsilon }\{%
{\theta }\mid {\eta(y) }\}\overset{}{ \rightarrow }{\int_{u^{\intercal
}B_{0}u\leq 1}f(u)du}\Big/{\int_{u^{\intercal }B_{0}u\leq 1}du},  
%%\label{aneninf}
\quad T\rightarrow\infty. 
$$%\end{equation}

\noindent{(ii)} If $\lim_{T}v_{T}\varepsilon _{T}=c>0$, there exists a
non-Gaussian distribution on $\mathbb{R}^{k_{\eta }}$, $Q_{c}$, such that
$$%\begin{equation}
\Pi _{\varepsilon }\left[ {\Sigma }_{T}({\theta }_{0})\nabla_{\theta}b(\theta_0)({\theta }-
\theta_{0})-{\Sigma }_{T}({\theta }_{0})\{\eta(y)-b(\theta_0)\}\in B\mid {\eta }(y)\right] \rightarrow Q_{c}(B),
\quad T\rightarrow\infty.
%%\label{anenc} 
$$%\end{equation}
In particular, $Q_{c}(B) \propto \int_{B}\int_{\mathbb{R}^{k_{\eta}} }\1\!\{(z-x)^{\intercal}A(\theta_0)^{\intercal}A(\theta_0)(z-x)\leq c\}{\varphi}_{k_\eta}(z) dzdx.$
%\begin{equation*}
%Q_{c}(B) \propto \int_{B}\int \1_{(z-x)^{\intercal}A(\theta_0)^{\intercal}A(\theta_0)(z-x)}{\varphi}_{k_\eta}(z) dzdx.%{\int 
%\end{equation*}

\noindent{(iii)} If $\lim_{T}v_{T}\varepsilon _{T}=0$ and Assumption \ref{[A6]} holds  then, %for ${\Phi}_{k_\eta}(B)$ the cumulative distribution function of a $k_{\eta}$-dimensional standard normal random variable,
$$%\begin{equation}
\Pi _{\varepsilon }\left[ {\Sigma }_{T}({\theta }_{0})\nabla_{\theta}b(\theta_0)({\theta }-
\theta_{0})-\Sigma_{T}({\theta }_{0})\{\eta(y)-b(\theta_0)\}\in B\mid {\eta }(y)\right] \rightarrow\int_{B}\varphi_{k_{\eta}}(x)dx,\quad  T\rightarrow\infty.
%%%\label{anennorm}
$$%\end{equation}
\end{theorem}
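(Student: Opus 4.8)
The plan is to work directly with the density of the approximate posterior. Since Algorithm~1 gives $\pi_\varepsilon\{\theta\mid\eta(y)\}\propto\pi(\theta)\,g_T(\theta)$ with $g_T(\theta):=P_\theta[d_2\{\eta(y),\eta(z)\}\le\varepsilon_T]$, the law under $\Pi_\varepsilon\{\cdot\mid\eta(y)\}$ of any reparametrisation $\psi(\theta)$ is the pushforward under $\psi$ of the measure with density $\propto\pi(\theta)g_T(\theta)$, so the argument splits into (a) localising and (b) a sharp asymptotic expansion of $g_T$ on the localising set. For (a) I would invoke Theorem~\ref{thm1}, whose hypotheses are contained in Assumptions~\ref{[A2]} and \ref{[A1']}--\ref{[A3']}: with $P_0$-probability tending to one the posterior concentrates on a shrinking $\|\cdot\|$-neighbourhood $\mathcal U_T$ of $\theta_0$; using the polynomial tail of Assumption~\ref{[A1']} together with $\kappa>k_\theta$---so that $u^{-\kappa}$ is integrable against $u^{k_\theta-1}$---one may take $\mathcal U_T$ of radius a slowly growing multiple of the natural scale ($\varepsilon_T$ in case~(i), $1/v_T$ in cases~(ii)--(iii)) and discard the complement at the cost of an $o_P(1)$ error in both the numerator and the normalising constant.

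On $\mathcal U_T$ I would expand $g_T(\theta)$. Write $\eta(z)=b(\theta)+\Sigma_T(\theta)^{-1}\zeta_{T,\theta}$ with $\zeta_{T,\theta}:=\Sigma_T(\theta)\{\eta(z)-b(\theta)\}$ asymptotically standard $k_\eta$-normal by Assumption~\ref{[A4]}. Then $\{d_2\{\eta(y),\eta(z)\}\le\varepsilon_T\}$ is the event that $\zeta_{T,\theta}$ lands in the ellipsoid centred at $c_T(\theta):=\Sigma_T(\theta)\{\eta(y)-b(\theta)\}$ with half-axes of order $v_T\varepsilon_T$. By continuous differentiability of $b$ at $\theta_0$ (Assumption~\ref{[A3']}) and the convergence and equicontinuity of $\Sigma_T(\cdot)/v_T$ to $A(\cdot)$ (Assumption~\ref{[A5]}), uniformly on $\mathcal U_T$ one has $\Sigma_T(\theta)=\Sigma_T(\theta_0)\{I_{k_\eta}+o(1)\}$ and
$$
c_T(\theta)=\Sigma_T(\theta_0)\{\eta(y)-b(\theta_0)\}-\Sigma_T(\theta_0)\nabla_\theta b(\theta_0)(\theta-\theta_0)+o(1).
$$
The trichotomy of the theorem is exactly the trichotomy for $v_T\varepsilon_T$: (i) if $v_T\varepsilon_T\to\infty$ the ellipsoid swallows $\zeta_{T,\theta}$ with probability $1-o(1)$ when $\|c_T(\theta)\|=o(v_T\varepsilon_T)$ and misses it with probability $1-o(1)$ when $\|c_T(\theta)\|$ is of larger order; since $\eta(y)-b(\theta_0)=O_P(1/v_T)=o_P(\varepsilon_T)$, in the scaling $w=\varepsilon_T^{-1}(\theta-\theta_0)$ one gets $g_T(\theta_0+\varepsilon_T w)\to\1\{w^{\intercal}B_0w<1\}$ pointwise, with $B_0=\nabla_\theta b(\theta_0)^{\intercal}\nabla_\theta b(\theta_0)$; (ii) if $v_T\varepsilon_T\to c$ the ellipsoid has a nondegenerate limiting shape and $g_T(\theta)$ converges to the standard normal mass of a fixed ellipsoid determined by $t:=\Sigma_T(\theta_0)\nabla_\theta b(\theta_0)(\theta-\theta_0)-\Sigma_T(\theta_0)\{\eta(y)-b(\theta_0)\}$; (iii) if $v_T\varepsilon_T\to0$ the ellipsoid shrinks and Assumption~\ref{[A6]}---the local limit theorem $h_T^{-k_\eta}P_\theta[\Sigma_T(\theta)\{\eta(z)-b(\theta)\}-u\in B_T]\to\varphi_{k_\eta}(u)$ with integrable dominating function $H$---gives that $g_T(\theta)$ equals the value of $\varphi_{k_\eta}$ at the centre, up to a factor independent of $\theta$ on $\mathcal U_T$ up to $1+o(1)$; in the variable $t$ this centre is $-t+o(1)$.

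It then remains to change variables and pass to the limit in the posterior ratio: in case~(i) map $\theta\mapsto w=\varepsilon_T^{-1}(\theta-\theta_0)$, and in cases~(ii)--(iii), $\theta\mapsto t$ as above. In both the map is affine with constant Jacobian and $\pi(\theta)\to\pi(\theta_0)$ on $\mathcal U_T$, so these factors cancel between numerator and normalising constant, and the posterior density of $w$ (respectively $t$) converges to the normalisation of $\1\{w^{\intercal}B_0w<1\}$, respectively of the limiting $g_T$---the uniform law on $\{w^{\intercal}B_0w\le1\}$ in case~(i), the distribution $Q_c$ in case~(ii), and $\mathcal N(0,I_{k_\eta})$ in case~(iii). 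The interchange of limit and integral is by dominated convergence, the envelope coming from the polynomial tail of Assumption~\ref{[A1']} with $\kappa>k_\theta$ in cases~(i)--(ii) and from $H$ in case~(iii).

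The step I expect to be the main obstacle is the uniform, quantitative version of the expansion of $g_T$ on $\mathcal U_T$: one must simultaneously control the replacement of $\Sigma_T(\theta)$ by $\Sigma_T(\theta_0)$ and of $b(\theta)-b(\theta_0)$ by its linearisation, with errors $o(1)$ after the relevant rescaling and small enough to survive integration against $\pi$ and the Jacobian; and in case~(i) one must additionally show that the ``boundary layer'' of parameters for which $\|c_T(\theta)\|/(v_T\varepsilon_T)$ stays bounded away from both $0$ and $\infty$---where $g_T$ is neither negligible nor near its supremum---carries asymptotically no posterior mass. A pervasive technical point is that $\eta(y)$, equivalently $\Sigma_T(\theta_0)\{\eta(y)-b(\theta_0)\}=O_P(1)$, is itself random, so every step must be run on a single event of $P_0$-probability tending to one on which Assumptions~\ref{[A4]} and \ref{[A1']} hold uniformly in $\theta$ over $\mathcal U_T$.
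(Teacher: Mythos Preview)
Your proposal is correct and follows essentially the same route as the paper: localise via Theorem~\ref{thm1}, write the acceptance probability as the probability that the (asymptotically Gaussian) standardised summary lands in a shrinking/fixed/growing ellipsoid centred at the standardised difference $\Sigma_T(\theta_0)\{\eta(y)-b(\theta)\}$, linearise $b(\theta)$ near $\theta_0$, and then split into the three regimes for $v_T\varepsilon_T$. The paper handles case~(i) exactly by the interior/boundary/exterior decomposition you anticipate, bounding the boundary shell of width $O(1/(v_T\varepsilon_T))$ by its Lebesgue measure and the exterior by the polynomial tail with $\kappa>k_\theta$; cases~(ii) and~(iii) use Assumption~\ref{[A4]} and Assumption~\ref{[A6]} with dominated convergence, as you describe.

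Two small points where the paper's implementation differs slightly from yours. First, the paper standardises throughout by $\Sigma_T(\theta_0)$ rather than $\Sigma_T(\theta)$, which avoids having to control the replacement $\Sigma_T(\theta)\to\Sigma_T(\theta_0)$ inside the acceptance event itself; your route via equicontinuity (Assumption~\ref{[A5]}) is fine but adds a layer. Second, in cases~(ii)--(iii) the map $\theta\mapsto t=\Sigma_T(\theta_0)\nabla_\theta b(\theta_0)(\theta-\theta_0)-\Sigma_T(\theta_0)\{\eta(y)-b(\theta_0)\}$ goes from $\mathbb{R}^{k_\theta}$ into $\mathbb{R}^{k_\eta}$, so when $k_\eta>k_\theta$ it is not a change of variables with constant Jacobian; the paper instead integrates over a $k_\theta$-dimensional variable $u$ (essentially $\{\nabla_\theta b(\theta_0)^\intercal\Sigma_T(\theta_0)^\intercal\Sigma_T(\theta_0)\nabla_\theta b(\theta_0)\}^{1/2}(\theta-\theta_0)$) and records $x(u)\in\mathbb{R}^{k_\eta}$ as a function of $u$. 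This is a bookkeeping matter that you would discover immediately on writing the integrals out, but it is worth flagging.
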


\begin{remark}
 Theorem \ref{normal_thm} generalizes to
the case where the components of $\eta (z)$ have different rates of
convergence. The statement and proof of this more general result are
deferred to the Supplementary Material. Furthermore, as with Theorem \ref{thm1}, the behavior of $\Pi_{\varepsilon}\{\cdot\mid \eta(y)\}$ described by Theorem \ref{normal_thm} can be visualized.
This is demonstrated in  the Supplementary Material. Formal verification of the conditions underpinning Theorem 2 is quite
challenging, even in this case. Numerical results nevertheless suggest that for this particular
choice of model and summaries a Bernstein--von Mises result holds, conditional on $\varepsilon_{T}=o(1/v_{T})$, with $v_{T}={T}^{1/2}$.
\end{remark}
\subsection{Discussion of the Result}

Theorem \ref{normal_thm} asserts that
the crucial feature in determining the limiting shape of
$\Pi_{\varepsilon}\{\cdot\mid \eta(y)\}$ is the behaviour of $v_{T}\varepsilon _{T}$. The implication of Theorem \ref%
{normal_thm} is that only in the regime where $\lim_{T} v_{T}\varepsilon_{T}=0$ will $100(1-\alpha)\%$ Bayesian credible regions calculated from $\Pi_{\varepsilon}\{\cdot\mid \eta(y)\}$ have frequentist coverage of $100(1-\alpha)\%$. If 
$\lim_{T}v_{T}\varepsilon _{T}=c>0$, for $c$ finite, $\Pi_{\varepsilon}\{\cdot\mid \eta(y)\}$ is not asymptotically Gaussian and credible regions will have incorrect magnitude, i.e., the coverage will not be at the nominal level. If $\lim_T v_{T}\varepsilon_{T}=\infty$, i.e., $\varepsilon _{T}\gg v_{T}^{-1}$, credible regions will have coverage that converges to 100\%.

In Case 
{(i)}, which corresponds to a large tolerance $\varepsilon _{T}$,
the approximate posterior distribution has nonstandard asymptotic behaviour. In Case (i) $\Pi_{\varepsilon}\{\cdot\mid \eta(y)\}$ behaves like the prior distribution over $\{\theta:\Vert \nabla _{\theta }b({\theta }_{0})({\theta }-{\theta }_{0})\Vert \leq \varepsilon _{T}\{1+o_{P}(1)\}\}$, which, by prior continuity, implies that $\Pi_{\varepsilon}\{\cdot\mid \eta(y)\}$ is equivalent to a
uniform distribution over this set. Li and Fearnhead (2018a) also establish this behaviour, and observe that asymptotically $\Pi_{\varepsilon}\{\cdot\mid \eta(y)\}$ behaves like a convolution of a Gaussian distribution, with variance of order $1/v_T^2$, and a uniform distribution over a ball of radius $\varepsilon_T$, and, where, depending on the order $v_T\varepsilon_T $, one distribution will dominate.

Assumption \ref{[A6]} applies to random
variables ${\eta }({z})$ that are absolutely continuous with respect to the
Lebesgue measure, or in the case of sums of random variables, to sums 
{that} are non-lattice; see Bhattacharya and Rao (1986). For
discrete ${\eta }({z})$, Assumption \ref{[A6]} must be adapted for
Theorem \ref{normal_thm} to be satisfied. One such adaptation is 
\smallskip

\begin{assumption}
There exist $\delta >0$ and a
countable set $E_{T}$ such that for all $\left\Vert {\theta }-{\theta }%
_{0}\right\Vert <\delta $, for all $x\in E_{T}$ such that $\text{ pr}\left\{ {\eta }({z})={x}\right\} >0$, $\text{ pr}\left\{ {\eta }({z})\in E_{T}\right\} =1$
and %there exists a continuous and positive map ${\theta }\mapsto {A}({\theta })$ at ${\theta }_{0}$ such that, $v_{T}^{}\Sigma_{T}(\theta) 
\begin{equation*}
\sup_{\left\Vert {\theta }-{\theta }_{0}\right\Vert \leq \delta }\sum_{x\in
E_{T}}\left\vert p[{\Sigma }_{T}({\theta })\{{x}-{b}({\theta })\}\mid {\theta }%
]-v_{T}^{-k_{\eta}} | A(\theta_0) |^{-1/2}\varphi_{k_{\eta}} \lbrack {\Sigma }_{T}({\theta })\{{x%
}-{b}({\theta })\}]\right\vert =o(1).
\end{equation*}
\end{assumption}
This is satisfied when ${\eta }({z})$ is a sum of independent lattice random variables,
as in the population genetics experiment detailed in Section 3.3 of Marin {{et al.}} (2014), which compares evolution scenarios of separated
populations from a most recent common ancestor. Furthermore, this example satisfies  Assumptions \ref{[A2]} and \ref{[A1']}--\ref{[A5]}. Thus the
conclusions of both Theorems \ref{thm1} and \ref{normal_thm} apply to this
model.

\section{Asymptotic Distribution of the Posterior Mean\label{mean}}

\subsection{Main Result}

The literature on the asymptotics of approximate
Bayesian computation has so far focused primarily on
asymptotic normality of the posterior mean. The posterior normality result in
Theorem \ref{normal_thm} is not weaker, or stronger, than the
asymptotic normality of an approximate point estimator, as the results
focus on different objects. However, existing proofs for asymptotic
normality of the posterior mean all require asymptotic normality of the
posterior distribution $\Pi_{\varepsilon}\{\cdot\mid \eta(y)\}$. We demonstrate that this is not a necessary condition.

For clarity we focus on the case of a scalar parameter $\theta $ and
scalar summary $\eta ({y})$, i.e., $k_{\theta}=k_{\eta}=1$, but present an extension to the multivariate case in Section \ref{subsec:comp}. This result requires a further assumption on the prior in addition to Assumption \ref{[A2]}.

\begin{assumption}\label{[A7]} The prior density $\pi(\theta)$ is such that
{(i)} for $\theta _{0}$ in the interior of ${\Theta }$, $%
\pi(\theta _{0})>0$; {(ii)} the density function 
$\pi(\theta)$ is $\beta $- H\"{o}lder in a neighbourhood of $\theta _{0}$: {%
there} exist $\delta ,L>0$ such that for all $|\theta -\theta
_{0}|\leq \delta $, and $\nabla_{\theta}^{(j)}\pi(\theta _{0})$ the $(j)$-th derivative of $\pi(\theta_0)$,
\begin{equation*}
\big\vert \pi(\theta )-\sum_{j=0}^{\lfloor \beta\rfloor }(\theta -\theta
_{0})^{j}\frac{\nabla_{\theta}^{(j)}\pi(\theta _{0})}{j!}\big\vert \leq L|\theta -\theta
_{0}|^{\beta }.
\end{equation*}%
{(iii)} For $\Theta \subset \mathbb{R}$, $\int_{\Theta
}|\theta |^{\beta }\pi(\theta )d\theta <\infty $.
\end{assumption}

\begin{theorem}\label{mean_thm} Let Assumptions \ref{[A2]},
\ref{[A1']}--\ref{[A5]},  with $\kappa >\beta +1$, and  \ref{[A7]}
be satisfied. Furthermore, let $\theta\mapsto b(\theta)$ be $\beta $-H\"{o}lder in a neighbourhood of $\theta_{0}$. Denoting $E_{\Pi
_{\varepsilon }}(\theta )$  as the posterior
mean of $\theta $, the following characterisation holds with probability going to one:

\noindent{(i)} If $\lim_{T}v_{T}\varepsilon _{T}=\infty $ and $v_{T}\varepsilon _{T}^{2\wedge (1+\beta )}=o(1)$, then 
\begin{equation}E_{\Pi
_{\varepsilon }}\left\{ v_{T}(\theta -\theta _{0})\right\} \rightarrow 
\mathcal{N}[0,V(\theta _{0})/\{\nabla _{\theta }b(\theta _{0})\}^{2}],
\label{normality:postmean}
\end{equation} in distribution as $T\rightarrow\infty$, 
where $V(\theta _{0})=\lim_{T}\text{\em var}[v_{T}\{\eta ({y})-b(\theta _{0})\}]$.

\noindent{(ii)} If $\lim_{T}v_{T}\varepsilon _{T}=c\geq 0$, and if when $c= 0$ Assumption \ref{[A6]} holds, then \eqref{normality:postmean} also holds.
\end{theorem}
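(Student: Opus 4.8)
The plan is to work with the representation of the posterior mean as a ratio of integrals and to exploit the distributional result of Theorem \ref{normal_thm}. Write
$E_{\Pi_{\varepsilon}}\{v_{T}(\theta-\theta_{0})\}
= \frac{\int v_{T}(\theta-\theta_{0})\,\1_{\varepsilon}(z)\,p(z\mid\theta)\,\pi(\theta)\,dz\,d\theta}
       {\int \1_{\varepsilon}(z)\,p(z\mid\theta)\,\pi(\theta)\,dz\,d\theta}$.
After the change of variables $t=\Sigma_{T}(\theta_{0})\nabla_{\theta}b(\theta_{0})(\theta-\theta_{0})-\Sigma_{T}(\theta_{0})\{\eta(y)-b(\theta_{0})\}$, Theorem \ref{normal_thm} tells us that the conditional law of $t$ converges (to the uniform-on-ellipsoid law in case (i), to $Q_{c}$ in case (ii), to $\varphi_{1}$ in case (ii) with $c=0$), and in each case the limiting law is symmetric about $0$. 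Since $v_{T}(\theta-\theta_{0})=v_{T}\Sigma_{T}(\theta_{0})^{-1}\{\nabla_{\theta}b(\theta_{0})\}^{-1}t + v_{T}\{\nabla_{\theta}b(\theta_{0})\}^{-1}\{\eta(y)-b(\theta_{0})\}$, and $v_{T}\Sigma_{T}(\theta_{0})^{-1}\to A(\theta_{0})^{-1}$ by Assumption \ref{[A5]}, the ``deterministic'' piece $v_{T}\{\nabla_{\theta}b(\theta_{0})\}^{-1}\{\eta(y)-b(\theta_{0})\}$ converges to $\mathcal{N}[0,V(\theta_{0})/\{\nabla_{\theta}b(\theta_{0})\}^{2}]$ by Assumption \ref{[A4]}, while the contribution of $t$ should vanish \emph{in expectation} because the limiting laws are centred. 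That is the conceptual skeleton of both (i) and (ii).

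The work is in controlling the error terms so that integration against the (random, $T$-dependent) posterior is legitimate and the $t$-contribution is genuinely $o_{P}(1)$ rather than merely tight. First I would localize: Theorem \ref{thm1} gives posterior concentration on $\{|\theta-\theta_{0}|\le L\lambda_{T}^{\alpha}\}$, so the tails contribute negligibly to both numerator and denominator provided the prior moment condition Assumption \ref{[A7]}(iii) holds; the moment bound $\kappa>\beta+1$ in Assumption \ref{[A1']} is exactly what is needed to dominate $v_{T}(\theta-\theta_{0})$ by an integrable envelope (the polynomial tail $u^{-\kappa}$ beats the linear growth after rescaling). Second, on the localized region I would expand $\pi(\theta)$ using its $\beta$-Hölder expansion (Assumption \ref{[A7]}(ii)) and $b(\theta)$ using its $\beta$-Hölder property around $\theta_{0}$; the leading term reproduces the Gaussian limit, and the remainders are of order $\varepsilon_{T}^{\beta}$ or $v_{T}^{-1}$-type quantities. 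This is where the condition $v_{T}\varepsilon_{T}^{2\wedge(1+\beta)}=o(1)$ in case (i) enters: it forces the bias generated by the curvature of $b$ and by the second-order prior term — each of size $\asymp\varepsilon_{T}^{2}$ or $\asymp\varepsilon_{T}^{1+\beta}$ relative to the $\varepsilon_{T}$-scale, hence $v_{T}\varepsilon_{T}\cdot\varepsilon_{T}^{1\wedge\beta}$ after multiplication by $v_{T}$ — to be asymptotically negligible. Third, I would pass to the limit using Theorem \ref{normal_thm} together with uniform integrability supplied by the $u^{-\kappa}$ tail (case (i) and (ii), $c>0$) or by the domination $H(u)$ from Assumption \ref{[A6]} (case $c=0$), and conclude via the symmetry of the limit that the $t$-integral vanishes.

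The main obstacle, and the step I expect to be most delicate, is the uniform-integrability/bias-control argument in case (i) — i.e., showing that after rescaling by $v_{T}$ the curvature-induced bias really is killed by $v_{T}\varepsilon_{T}^{2\wedge(1+\beta)}=o(1)$ and that no hidden term of order $v_{T}\varepsilon_{T}$ (which diverges) survives. Concretely, one must check that the odd-order interactions between the $\varepsilon_{T}$-scale fluctuation of $\theta-\theta_{0}$ and the quadratic remainder of $b$ average out by symmetry of the limiting uniform-on-ellipsoid law, so that only the \emph{even} remainder contributes, and that even remainder carries the factor $\varepsilon_{T}^{2}$. The tail condition $\kappa>\beta+1$ must be invoked carefully here, since in case (i) the relevant fluctuations are at the coarser scale $\varepsilon_{T}\gg v_{T}^{-1}$ and one needs the deviation bound of Assumption \ref{[A1']} to hold at that scale; this is why $\lambda_{T}$ and the explicit constants from Theorem \ref{thm1} are needed rather than a bare concentration statement. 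Once these cancellations and envelopes are in place, cases (ii) with $c>0$ and $c=0$ follow by the same ratio argument with the limiting law $Q_{c}$ (resp.\ $\varphi_{1}$), again centred, so no extra bias condition beyond those already imposed is required.
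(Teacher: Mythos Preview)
Your high-level decomposition is the right one, and in case~(ii) (with $c>0$ or $c=0$) the argument you sketch is essentially what the paper does: write the centred posterior mean as a ratio $N_T/D_T$, split the $x$-integral into $|x|\le M$, $M\le|x|\le\delta v_T$, $|x|\ge\delta v_T$, and use Assumption~\ref{[A6]} (or the weak limit when $c>0$) plus the polynomial tail $\kappa>\beta+1$ to control each piece. So for (ii) your plan is sound.

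In case~(i), however, there is a genuine gap. You propose to invoke Theorem~\ref{normal_thm}(i), which gives weak convergence of $\varepsilon_T^{-1}(\theta-\theta_0)$ to a centred uniform law; even with uniform integrability, that only yields $E_{\Pi_\varepsilon}[\theta-\theta_0]=o_P(\varepsilon_T)$, i.e.\ $E_{\Pi_\varepsilon}[v_T(\theta-\theta_0)]=o_P(v_T\varepsilon_T)$, which diverges. You flag this as ``the main obstacle'' and suggest that odd-order terms cancel by symmetry of the limiting law, but symmetry of the \emph{limit} cannot supply the required quantitative rate: you need the finite-$T$ mean to be $o_P(1/(v_T\varepsilon_T))$ at the $\varepsilon_T$-scale, and nothing in Theorem~\ref{normal_thm} provides that. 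The paper does \emph{not} call on Theorem~\ref{normal_thm} here at all; it works directly.

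Concretely, the paper reparametrises in $b=b(\theta)$ (so that the centred variable $x=v_T(b-b_0)-Z_T^0$ is exactly linear) and splits the numerator over the three shells $|x|\le v_T\varepsilon_T-M$, $v_T\varepsilon_T-M\le|x|\le v_T\varepsilon_T+M$, and $|x|\ge v_T\varepsilon_T+M$. On the interior shell the acceptance probability is $1-O((v_T\varepsilon_T-|x|)^{-\kappa})$, so one integrates $x$ against the Taylor expansion of $\pi$ directly; this is where the explicit bias terms $\varepsilon_T^{2j}\nabla_b^{(2j-1)}\pi(b_0)$ in \eqref{expan:postmean} come from, and where $v_T\varepsilon_T^{2\wedge(1+\beta)}=o(1)$ kills them. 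The exterior shell is $O(M^{-\kappa+2})$ by Assumption~\ref{[A1']}. The delicate term is the boundary shell $N_2$, and the cancellation there comes not from symmetry of the posterior limit but from the \emph{central limit theorem for $Z_T$} (Assumption~\ref{[A4]}): after shifting $x=\pm v_T\varepsilon_T+y$, one needs
\[
\int_{-M}^{M}\{P_y(Z_T\le -y)-P_y(Z_T\ge -y)\}\,dy=o(1),\qquad
\int_{-M}^{M}y\{P_y(Z_T\le -y)+P_y(Z_T\ge -y)\}\,dy=o(1),
\]
both of which follow from the locally uniform Gaussian limit of $Z_T$. This is a finer, sampling-level symmetry (at scale $v_T^{-1}$) that operates \emph{inside} the $\varepsilon_T$-shell, and is the mechanism that makes the a~priori $O(v_T\varepsilon_T)$ boundary contribution collapse to $o(v_T\varepsilon_T)$. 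Your proposal does not contain this step, and without it case~(i) does not close. Working in $b$ rather than $\theta$ is also important: it keeps the acceptance region exactly $|Z_T+x|\le v_T\varepsilon_T$, so that the shell decomposition is clean; the conversion back to $\theta$ via the H\"older expansion of $b^{-1}$ is done only at the end.
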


There are two immediate consequences of Theorem \ref{mean_thm}: first, part (i) of Theorem \ref{mean_thm} states that if one is only interested in obtaining accurate point estimators for $\theta_0$, all we require is a  tolerance $\varepsilon_{T}$ satisfying $v_{T}\varepsilon^2_{T}=o(1)$, which can significantly reduce the computational burden of approximate Bayesian computation; secondly, if one wants accurate point estimators of $\theta_0$ and accurate expressions of the uncertainty associated with this point estimate, we require $\varepsilon_{T}=o(1/v_{T})$. The first statement follows directly from part (i) of Theorem \ref{mean_thm}, while the second statement follows from part (ii) of Theorem \ref{mean_thm} and recalling that, from Theorem  \ref{normal_thm}, credible regions constructed from $\Pi_{\varepsilon}\{\cdot\mid \eta(y)\}$ will have proper frequentist coverage only if $\varepsilon_{T}=o(1/v_{T})$. For $\varepsilon_{T}\asymp v_T^{-1}$ or $\varepsilon_{T}\gg v_{T}^{-1}$, the frequentist coverage of credible balls centered at $E_{\Pi_\varepsilon}(\theta)$ will not be equal to the nominal level.

As an intermediate step in the proof of Theorem \ref{mean_thm}, we demonstrate the following expansion for the posterior mean, 
where $k$ denotes the integer part of $(\beta+1)/2$:\begin{equation}
E_{\Pi _{\varepsilon }}( \theta -\theta _{0}) =\frac{\eta(y)-b(\theta_0)}{\nabla _{\theta }b(\theta _{0})}+\sum_{j=1}^{\lfloor \beta \rfloor }%
\frac{\nabla_{b}^{(j)}b^{-1}(b_{0})}{j!}\sum_{l=\lceil j/2\rceil }^{\lfloor
(j+k)/2\rfloor }\frac{\varepsilon _{T}^{2l}\nabla_{b}^{(2l-j)}\pi(b_{0})}{\pi(b_{0})(2l-j)!%
}+O(\varepsilon _{T}^{1+\beta })+o_{P}(1/v_{T})  \label{expan:postmean}.
\end{equation} 
This highlights a potential deviation from the expected asymptotic behaviour of the posterior
mean $E_{\Pi _{\varepsilon }}(\theta )$, i.e., the behaviour corresponding
to $T\rightarrow \infty $ and $\varepsilon _{T}\rightarrow0$. Indeed, the
posterior mean is asymptotically normal for all values of $\varepsilon _{T}=o(1)$%
, but is asymptotically unbiased only if the leading term in equation %
\eqref{expan:postmean} is $[\nabla _{\theta }b(\theta
_{0})]^{-1}\{\eta(y)-b(\theta_0)\} $, which is satisfied under Case (ii) and in Case (i) if $v_{T}\varepsilon _{T}^{2}=o(1)$, given $%
\beta \geq 1$. However, in Case (i), if $\lim\inf_{T}v_{T}\varepsilon _{T}^{2}>0$, when $\beta \geq 3$, the posterior mean has a bias
\begin{equation*}
\varepsilon _{T}^{2}\left[ \frac{\nabla_{b}\pi(b_{0})}{3\pi(b_{0})\nabla _{\theta }b(\theta
_{0})}-\frac{\nabla^{(2)}_{\theta}b(\theta _{0})}{2\{\nabla _{\theta }b(\theta
_{0})\}^{2}}\right] +O(\varepsilon _{T}^{4})+o_{P}(1/v_{T}).
\end{equation*}%

\subsection{Comparison with Existing Results}\label{subsec:comp}
 
Li and Fearnhead (2018a)  analyse the asymptotic
properties of the posterior mean and functions thereof. Under the assumption of a central limit theorem for the summary
statistic and further regularity assumptions on the convergence of the
density of the summary statistics to this normal limit, including the
existence of an Edgeworth expansion with exponential controls on the tails,
Li and Fearnhead (2018a) demonstrate asymptotic normality, with no bias, of the posterior
mean if $\varepsilon_T = o (1/v_{T}^{3/5})$. Heuristically, {the authors derive
this result using an approximation} of the posterior density $\pi_{\varepsilon
}\{\theta \mid \eta ({y})\}$, based on the Gaussian approximation of the density
of $\eta ({z})$ given $\theta$ and using properties of the maximum
likelihood estimator conditional on $\eta ({y})$. In contrast to our
analysis, these authors allow the acceptance probability defining
the algorithm to be an arbitrary density kernel in $\|\eta(y)-\eta(z)\|$.
Consequently, their approach is more general than the accept/reject version
considered in Theorem \ref{mean_thm}. 

However, the conditions Li and Fearnhead (2018a) require of $\eta ({y})$ are stronger than ours. In particular, our results on
asymptotic normality for the posterior mean only require weak convergence of 
$v_{T}\{\eta ({z}) - {b}(\theta)\}$ under $P_\theta$, with polynomial
deviations that need not be uniform in $\theta$. These assumptions
allow for the explicit treatment of models where the parameter space $%
\Theta$ is not compact. In addition,
asymptotic normality of the posterior mean requires Assumption \ref{[A6]} only if $\varepsilon_T = o(1/v_T)$. Hence if $\varepsilon_T \gg v^{-1}_T$, then only deviation bounds and weak convergence are required, which are much weaker than convergence of the densities. When $\varepsilon_T = o(1/v_T)$ then 
 Assumption \ref{[A6]} essentially implies local (in $\theta$) convergence of the
density of $v_{T}\{\eta ({z}) - {b}(\theta)\}$, but with no requirement
on the rate of this convergence. This assumption is weaker
than the uniform convergence required in Li and Fearnhead (2018a). Our
results also allow for an explicit representation of the bias that obtains for
the posterior mean when $\lim\inf_{T}v_{T}\varepsilon^{2}_T>0$.

In further contrast to Li and Fearnhead (2018a), Theorem \ref{normal_thm} completely characterizes the asymptotic behavior of the approximate posterior distribution for all $\varepsilon _{T}=o(1)$ that admit posterior
concentration. This general characterization allows us to
demonstrate, via Theorem \ref{mean_thm} part {(i)}, that asymptotic
normality and unbiasedness
of the posterior mean remain achievable even if $\lim_{T}v_{T}\varepsilon
_{T}=\infty $, provided the tolerance satisfies $\varepsilon _{T}=o(1/v_{T}^{1/2})$.

Li and Fearnhead (2018a) provide the interesting result that if $k_\eta >
k_\theta\geq1$ and $ \varepsilon_{T}=o(1/v_T^{3/5})$,  the posterior mean is
asymptotically normal, and unbiased, but is not asymptotically efficient. To help shed light
on this phenomenon, the following result gives an alternative to
Theorem 3.1 of these authors and contains an explicit
 asymptotic expansion for the posterior mean when $k_\eta >
k_\theta\geq1$.
\begin{theorem}\label{multivariatemean}
Let Assumptions \ref{[A2]}, \ref{[A1']}--\ref{[A5]} and \ref{[A7]} be satisfied. Assume that
$v_T\varepsilon_T \rightarrow \infty $ and $v_T\varepsilon_T^2 = o(1) $. 
Assume also that $b(.) $ and $\pi(.)$   are Lipschitz in a neighbourhood of $\theta
_{0}$.  Then, for $k_\eta >
k_\theta\geq1$,
$$
E_{\Pi_{\varepsilon}}\{ v_T(\theta-\theta_0) \} 
=  \{\nabla_\theta b(\theta_0)^{\intercal} \nabla_\theta b(\theta_0)\}^{-1}
\nabla_\theta b(\theta_0)^{\intercal}v_{T}\{\eta(y)-b(\theta_0)\}+ o_p(1).
$$ 
In addition, if $\{\nabla_\theta b(\theta_0)^{\intercal} \nabla_\theta
b(\theta_0)\}^{-1}\nabla_\theta b(\theta_0)^{\intercal} \neq  \nabla_\theta
b(\theta_0)^{\intercal}$, the matrix
$$
\text{\em var}\left[\{\nabla_\theta b(\theta_0)^{\intercal}\nabla_\theta
b(\theta_0)\}^{-1}\nabla_\theta b(\theta_0)^{\intercal}v_{T}\{\eta(y)-b(\theta_0)\}\right]-
\{\nabla_\theta b(\theta_0)^{\intercal}V^{-1}(\theta_0)\nabla_\theta
b(\theta_0)^{}\}^{-1},
$$  
is positive semi-definite, where $\{\nabla_\theta
b(\theta_0)^{\intercal}V^{-1}(\theta_0)\nabla_\theta b(\theta_0)^{}\}^{-1}$ is the
optimal asymptotic variance achievable given $\eta(y)$.
\end{theorem}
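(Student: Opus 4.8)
The plan is to obtain the stated expansion for $E_{\Pi_\varepsilon}\{v_T(\theta-\theta_0)\}$ by the same strategy used to derive the scalar expansion \eqref{expan:postmean} in the proof of Theorem \ref{mean_thm}, but carried out componentwise in $\mathbb R^{k_\theta}$ and then reduced via the change of variables $b=b(\theta)$. First I would write $E_{\Pi_\varepsilon}(\theta-\theta_0)$ as a ratio of integrals over $\Theta$ of $(\theta-\theta_0)$ and of $1$, each weighted by $\pi(\theta)\,P_\theta[d_2\{\eta(y),\eta(z)\}\le\varepsilon_T]$. Using Theorem \ref{thm1}, the effective domain of integration may be restricted to a shrinking neighbourhood $\|\theta-\theta_0\|\le M\lambda_T$ of $\theta_0$ (up to $o_p(1/v_T)$ after multiplying by $v_T$, which is where $v_T\varepsilon_T^2=o(1)$ and the rate from the polynomial case enter). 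On this neighbourhood I would change variables to $w=\Sigma_T(\theta_0)\nabla_\theta b(\theta_0)(\theta-\theta_0)$, use Assumption \ref{[A3']} (full column rank of $\nabla_\theta b(\theta_0)$ and $C^1$ of $b$) to linearise $b(\theta)-b(\theta_0)$ with a Lipschitz remainder, use Assumption \ref{[A5]} to replace $\Sigma_T(\theta)v_T^{-1}$ by $A(\theta_0)$ uniformly, and Lipschitz continuity of $\pi$ to replace $\pi(\theta)$ by $\pi(\theta_0)$; the acceptance probability becomes, up to negligible terms, $P\{\|A(\theta_0)(Z - s_T + r_T)\|\le v_T\varepsilon_T\}$ where $Z$ is the weak limit from Assumption \ref{[A4]}, $s_T=\Sigma_T(\theta_0)\{\eta(y)-b(\theta_0)\}$ and $r_T$ is the rescaled deviation $\theta\mapsto w$.

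Next, since $v_T\varepsilon_T\to\infty$, the indicator $\1\{\|A(\theta_0)(Z-s_T+r_T)\|\le v_T\varepsilon_T\}$ concentrates: for each fixed $(s_T,r_T)$ of order $O_p(1)$ the probability tends to $1$, and more precisely the $O(1)$ shifts $s_T$ and $r_T$ become asymptotically irrelevant because the ellipsoid $\{\|A(\theta_0)x\|\le v_T\varepsilon_T\}$ has radius diverging. Hence both the numerator and denominator integrals are, to leading order, integrals of $w$ (resp.\ $1$) against the pushforward of Lebesgue measure under $w\mapsto\theta$, over the region cut out by $\|A(\theta_0)(Z-s_T) + A(\theta_0)v_T^{-1}\Sigma_T(\theta_0)\nabla_\theta b(\theta_0)(\theta-\theta_0)\|\le v_T\varepsilon_T$, i.e.\ (after the linear change of variables and using $A(\theta_0)^\intercal A(\theta_0)=V(\theta_0)^{-1}$, as recorded in the Remark following Assumption \ref{[A6]}) a Euclidean ball in the $w$-coordinates of radius $v_T\varepsilon_T$ centred at a point determined by $s_T$. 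Taking the ratio, the common volume factor and the common centring cancel, and one is left with the centre of mass of that ball pushed back to $\theta$-space, which is exactly $\{\nabla_\theta b(\theta_0)^\intercal\nabla_\theta b(\theta_0)\}^{-1}\nabla_\theta b(\theta_0)^\intercal$ applied to the relevant limiting quantity; re-attaching the scaling $v_T$ and identifying $v_T^{-1}\Sigma_T(\theta_0)\to A(\theta_0)$ turns $s_T$ back into $v_T\{\eta(y)-b(\theta_0)\}$, giving the claimed identity with $o_p(1)$ error. The case $k_\eta>k_\theta$ is exactly what makes $\nabla_\theta b(\theta_0)$ non-square, so the generalised inverse $\{\nabla_\theta b(\theta_0)^\intercal\nabla_\theta b(\theta_0)\}^{-1}\nabla_\theta b(\theta_0)^\intercal$ appears in place of a two-sided inverse.

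For the second assertion, note that $v_T\{\eta(y)-b(\theta_0)\}\rightsquigarrow\mathcal N(0,V(\theta_0))$ by Assumption \ref{[A4]} applied at $\theta_0$ together with Assumption \ref{[A5]}, so the limiting variance of the linear estimator is $G(G^\intercal G)^{-1}\,V(\theta_0)\,(G^\intercal G)^{-1}G^\intercal$ with $G=\nabla_\theta b(\theta_0)$, against which the Gauss--Markov / GLS bound $(G^\intercal V(\theta_0)^{-1}G)^{-1}$ is optimal; the difference being positive semi-definite is the standard statement that ordinary least squares is inefficient relative to generalised least squares unless the projection matrices coincide, which is precisely the stated condition $(G^\intercal G)^{-1}G^\intercal\ne G^\intercal$ (here the normalisation making $G^\intercal G$ act like the identity is implicit). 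I would prove this by the usual completion-of-squares argument: write the OLS weight as the GLS weight plus a correction orthogonal, in the $V(\theta_0)$-inner product, to the column space, so that the cross term vanishes and the excess variance is a Gram matrix, hence positive semi-definite. The main obstacle is the first step: justifying that the $O_p(1)$ shifts $s_T$ and $r_T$ inside the diverging ellipsoid contribute only $o_p(1)$ to the ratio after multiplication by $v_T$ — this requires careful handling of the boundary layer of the ball of radius $v_T\varepsilon_T$ (an annulus of relative width $O(1/(v_T\varepsilon_T))$), the $\beta$-Hölder remainder in $b$ and $\pi$, and the integrability/tightness needed to pass limits through the ratio, and it is here that the hypotheses $v_T\varepsilon_T\to\infty$, $v_T\varepsilon_T^2=o(1)$, $\kappa>\beta+1$, and the Lipschitz regularity of $b$ and $\pi$ are all used.
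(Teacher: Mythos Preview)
Your overall strategy is the paper's: set $G_0=\nabla_\theta b(\theta_0)$, $Z_T^0=v_T\{\eta(y)-b_0\}$, define $x(\theta)=v_T(\theta-\theta_0)-(G_0^\intercal G_0)^{-1}G_0^\intercal Z_T^0$, show $E_{\Pi_\varepsilon}\{x(\theta)\}=o_p(1)$, and split the numerator according to how $\|G_0 x(\theta)\|$ compares with $v_T\varepsilon_T\pm M$. Your centre-of-mass reading handles the interior $\{\|G_0 x(\theta)\|\le v_T\varepsilon_T-M\}$, where the acceptance probability is $1-O(M^{-\kappa})$ and the region is symmetric about $x=0$; the tail $\{\|G_0 x(\theta)\|>v_T\varepsilon_T+M\}$ falls to Assumption~\ref{[A1']}. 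Your Gauss--Markov argument for the variance comparison is correct and is what the paper intends (it states that claim without proof).

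The gap is the boundary annulus $I_2=\{v_T\varepsilon_T-M\le\|G_0 x(\theta)\|\le v_T\varepsilon_T+M\}$. The relative-volume observation $O(1/(v_T\varepsilon_T))$ is not enough, because the integrand $x(\theta)$ has magnitude $\asymp v_T\varepsilon_T$ there: a crude bound gives only $|I_2|\lesssim M\varepsilon_T^{k_\theta}$ against $D_T\asymp\varepsilon_T^{k_\theta}$, so $|I_2|/D_T=O(M)$, not $o_p(1)$. Cancellation, not smallness, is what is needed. The paper obtains it by writing
\[
v_T^2\|\eta(z)-\eta(y)\|^2=\|Z_T-P_{G_0}^\perp Z_T^0\|^2+\|G_0 x(\theta)\|^2-2\langle Z_T,G_0 x(\theta)\rangle,
\]
passing to polar coordinates $r=\|G_0 x(\theta)\|$, $u=(G_0^\intercal G_0)^{1/2}x(\theta)/r\in\mathcal S^{k_\theta-1}$, and using the uniform central limit theorem (Assumption~\ref{[A4]}) to show that, on the annulus, the acceptance probability converges to $P\{\langle\Gamma_0 Z_\infty,u\rangle\ge r-v_T\varepsilon_T\}+o(1)$ with $\Gamma_0=(G_0^\intercal G_0)^{-1/2}G_0^\intercal$ and $Z_\infty$ centred Gaussian. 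Pairing $u$ with $-u$ and using the symmetry of $Z_\infty$ then gives $\int_{\mathcal S}u\,P(\cdot)\,du=0$, whence $I_2=o(\varepsilon_T^{k_\theta})$. This sphere-symmetry step---which uses the Gaussian limit, not just a size bound---is the missing ingredient in your proposal.
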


A consequence of Theorem \ref{multivariatemean} is that, for a fixed choice of summaries,
the two-stage procedure advocated by Fearnhead and Prangle (2012) will not reduce the asymptotic variance over a point estimate produced via Algorithm 1. However, this two-stage procedure does reduce the Monte Carlo error inherent in estimating the approximate posterior distribution $\Pi_{\varepsilon}\{\cdot\mid \eta(y)\}$ by reducing the dimension of the statistics on which the matching in approximate Bayesian computation is based.

\section{Practical Implications of the Results\label{pract}}

\subsection{General \label{sec:implications}}

The approximate Bayesian computation approach in Algorithm 1 is
typically not applied in practice. Instead, the acceptance step in Algorithm
1 is commonly replaced by the nearest-neighbour selection step and with $d_{2}\{\eta(z),\eta(y)\}=\|\eta(z)-\eta(y)\|$, see,
e.g., Biau {et al.}
(2015):
\smallskip

\noindent (3$'$) select all $\theta ^{i}$ associated with the $%
\alpha =\delta /N$ smallest distances $\|\eta(z)-\eta(y)\|$ for some $%
\delta $. \smallskip

This nearest-neighbour version accepts draws of $\theta $
associated with an empirical quantile over the simulated distances $\|\eta(z)-\eta(y)\|$ and defines the acceptance probability
for Algorithm 1. A key practical insight of our asymptotic results is
that the acceptance probability, 
 $\alpha _{T}=\text{pr}\left\{\|\eta(z)-\eta(y)\|\leq
\varepsilon _{T}\right\}$, is only affected by the dimension of $\theta $,
as formalized in Corollary \ref{implication:accept}.

\begin{corollary}
\label{implication:accept} Under the conditions in Theorem \ref{normal_thm}:

\noindent{(i)} If $\varepsilon _{T}\asymp v_{T}^{-1}$ or $\varepsilon
_{T}=o(1/v_{T})$, then the acceptance rate associated with the threshold $%
\varepsilon _{T}$ is 
\begin{equation*}
\alpha _{T}=\text{\em pr}\left\{\|\eta(z)-\eta(y)\| \leq
\varepsilon _{T}\right\} \asymp (v_{T}\varepsilon _{T})^{k_{\eta }}\times
v_{T}^{-k_{\theta }}\lesssim v_{T}^{-k_{\theta}}.
\end{equation*}

\noindent{(ii)} If $\varepsilon _{T}\gg v_{T}^{-1}$ , then 
\begin{equation*}
\alpha _{T}=\text{\em pr}\left\{\|\eta(z)-\eta(y)\|\leq
\varepsilon _{T}\right\} \asymp \varepsilon _{T}^{k_{\theta }}\gg v_{T}^{-k_{\theta}}.
\end{equation*}
\end{corollary}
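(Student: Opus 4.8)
The plan is to express the acceptance rate $\alpha_T = \text{pr}\{\|\eta(z)-\eta(y)\| \leq \varepsilon_T\}$ as an integral over $\theta$ against the prior of the probability $P_\theta\{\|\eta(z)-\eta(y)\|\leq\varepsilon_T\}$, and then localize. First I would note that by the triangle inequality and the deviation bounds of Assumption \ref{[A1']}, the dominant contribution comes from $\theta$ in a shrinking neighbourhood of $\theta_0$ of size controlled by $\lambda_T$ from Theorem \ref{thm1}; outside this neighbourhood, $\|b(\theta)-b(\theta_0)\|$ is bounded below and the deviation bound $P_\theta[\|\Sigma_T(\theta_0)\{\eta(z)-b(\theta)\}\|>u]\lesssim u^{-\kappa}$ together with the analogous control on $\eta(y)-b(\theta_0)$ makes that region's contribution negligible. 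On the retained neighbourhood, I would write $\|\eta(z)-\eta(y)\| \approx \|b(\theta)-b(\theta_0)\| + O_P(1/v_T)$ using continuity of $b$ and Assumption \ref{[A5]}, and approximate $b(\theta)-b(\theta_0) \approx \nabla_\theta b(\theta_0)(\theta-\theta_0)$ via Assumption \ref{[A3']}, so that the event becomes, up to lower-order corrections, $\|\nabla_\theta b(\theta_0)(\theta-\theta_0)\| \lesssim \max(\varepsilon_T, 1/v_T)$.

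Next I would carry out the two cases separately. In case (i), when $\varepsilon_T \asymp v_T^{-1}$ or $\varepsilon_T = o(1/v_T)$, the fluctuation $1/v_T$ of $\eta(z)-b(\theta)$ is of the same order or larger than $\varepsilon_T$, so both the Gaussian spread and the tolerance ball matter. I would change variables to $w = v_T(\theta-\theta_0)$, so that $d\Pi(\theta) \asymp v_T^{-k_\theta}\pi(\theta_0)\,dw$ by Assumption \ref{[A2]} (or positivity of $\pi$ at $\theta_0$), and the probability $P_\theta\{\|\eta(z)-\eta(y)\|\leq\varepsilon_T\}$ becomes, after rescaling by $\Sigma_T(\theta_0)\asymp v_T A(\theta_0)$, an integral of $\varphi_{k_\eta}$ over a region of the form $\{z : \|A(\theta_0)(z - w') \| \lesssim v_T\varepsilon_T\}$ intersected appropriately — a set of Lebesgue measure $\asymp (v_T\varepsilon_T)^{k_\eta}$ when $v_T\varepsilon_T$ is bounded. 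Integrating this over $w$ picks up a further factor that is $O(1)$ (since $w$ ranges over an $O(1)$-sized effective region once the Gaussian density localizes it), giving $\alpha_T \asymp (v_T\varepsilon_T)^{k_\eta}\times v_T^{-k_\theta}$, and the final bound $\lesssim v_T^{-k_\theta}$ follows since $v_T\varepsilon_T = O(1)$. In case (ii), when $\varepsilon_T \gg v_T^{-1}$, the tolerance ball dwarfs the Gaussian fluctuation, so the event $\|\eta(z)-\eta(y)\|\leq\varepsilon_T$ is essentially $\|\nabla_\theta b(\theta_0)(\theta-\theta_0)\|\lesssim\varepsilon_T$ with the $O_P(1/v_T)$ correction negligible. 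Here I would change variables to $w = \varepsilon_T^{-1}(\theta-\theta_0)$, so $d\Pi(\theta)\asymp\varepsilon_T^{k_\theta}\pi(\theta_0)\,dw$, and the inner probability is $1+o(1)$ uniformly over the effective $w$-region by Assumption \ref{[A4]}, yielding $\alpha_T \asymp \varepsilon_T^{k_\theta}$, which is $\gg v_T^{-k_\theta}$ since $v_T\varepsilon_T\to\infty$.

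The main obstacle I anticipate is making the interchange of the $\theta$-integration and the limiting Gaussian/ball geometry rigorous and uniform: one needs the convergence in Assumption \ref{[A4]} (and the dominated-convergence control of Assumption \ref{[A6]} in the borderline sub-case where $v_T\varepsilon_T\to 0$) to hold uniformly over the $w$-neighbourhood after rescaling, and one must confirm that the random centering by $\eta(y)-b(\theta_0) = O_P(1/v_T)$ only shifts the relevant region by an $O_P(1)$ amount in the $w$-coordinate and hence does not change the order of $\alpha_T$. I would handle this by conditioning on $\eta(y)$, treating $v_T\{\eta(y)-b(\theta_0)\}$ as a fixed $O_P(1)$ quantity — exactly as in the proof of Theorem \ref{normal_thm} — and invoking the same localization and change-of-variables machinery developed there, so that Corollary \ref{implication:accept} follows essentially as a byproduct of the normalizing-constant estimates already established for Theorem \ref{normal_thm}.
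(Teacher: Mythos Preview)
Your proposal is correct and follows essentially the same route as the paper's proof: both express $\alpha_T$ as $\int_\Theta P_\theta\{\|\eta(z)-\eta(y)\|\leq\varepsilon_T\}\,d\Pi(\theta)$, localize near $\theta_0$, linearize $b(\theta)-b(\theta_0)$, and then recycle the normalizing-constant estimates already established in the proof of Theorem~\ref{normal_thm} (the paper simply points to the computations in the relevant cases of that proof and writes down the resulting orders). Your explicit change-of-variables description --- $w=v_T(\theta-\theta_0)$ in case~(i) yielding the factor $(v_T\varepsilon_T)^{k_\eta}v_T^{-k_\theta}$ from the Gaussian-ball intersection and the Jacobian, and $w=\varepsilon_T^{-1}(\theta-\theta_0)$ in case~(ii) with inner probability $1+o(1)$ --- is exactly the mechanism underlying the paper's displays, and your identification of Assumption~\ref{[A6]} as the tool needed for the sub-case $v_T\varepsilon_T\to 0$ matches the paper's invocation of case~(v) of the generalized Theorem~2 proof.
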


This shows that choosing a tolerance $%
\varepsilon _{T}=o(1)$\ is equivalent to choosing an $\alpha _{T}=o(1)$\
quantile of $\|\eta(z)-\eta(y)\| $. It
also demonstrates the role played by the dimension of $\theta $ on
the rate at which $\alpha _{T}$\ declines to zero. In Case {(i)}, if $%
\varepsilon _{T}\asymp v_{T}^{-1}$, then $\alpha _{T}\asymp
v_{T}^{-k_{\theta }}.$ On the other hand, if $\varepsilon _{T}=o(1/v_{T})$%
, as required for the Bernstein--von Mises result in Theorem \ref{normal_thm}, the associated acceptance probability goes to zero at the
faster rate, $\alpha _{T}=o(1/v_{T}^{k_{\theta }}).$ In Case {(ii)}, where $\varepsilon _{T}\gg v_{T}^{-1}$, it follows that $\alpha _{T}\gg
v_{T}^{-k_{\theta }}$.

Linking $\varepsilon _{T}$ and $\alpha _{T}$ gives a means of
choosing the $\alpha _{T}$ quantile of the simulations, or equivalently the
tolerance $\varepsilon _{T}$, in such a way that a particular type of
posterior behaviour occurs for large $T$: choosing $%
\alpha _{T}\gtrsim v_{T}^{-k_{\theta }}$ gives an approximate posterior distribution that
concentrates; under the more stringent condition $\alpha
_{T}=o(1/v_{T}^{k_{\theta }})$ the approximate posterior distribution both concentrates {and%
} is approximately Gaussian in large samples. These results give practitioners an understanding of what to expect from this
procedure, and a means of detecting potential issues if this expected behaviour is not in evidence. Moreover, given that there is
no direct link between $\Pi_{\varepsilon}\{\cdot\mid \eta(y)\}$
and the exact posterior distribution, these results give some understanding of the
statistical properties that $\Pi_{\varepsilon}\{\cdot\mid \eta(y)\}$ should display when it is
obtained from the popular nearest-neighbour version of the algorithm.

Corollary \ref{implication:accept}\ demonstrates that to obtain reasonable statistical behavior, the rate at which $\alpha _{T}$\ declines to zero
must be faster the larger the dimension of $\theta $, with the order of $\alpha _{T}$ unaffected by the dimension of $\eta$. This result provides theoretical evidence of a
curse-of-dimensionality encountered in these algorithms as the dimension of
the parameters increases, with this being the first piece of
work, to our knowledge, to link the dimension of $\theta $ to  certain asymptotic properties for $\Pi_{\varepsilon}\{\cdot\mid \eta(y)\}$. This result provides theoretical justification for dimension reduction methods that
process parameter dimensions individually and independent of the other
dimensions; see, for example, the regression adjustment approaches
of Beaumont et al. (2002), Blum (2010) and Fearnhead and Prangle (2012), and
the integrated auxiliary likelihood approach of Martin et al. (2016).

While Corollary \ref{implication:accept} demonstrates that the order of $\alpha_{T}$ is unaffected by the dimension of the summaries, $\alpha_{T}$ cannot be accessed in practice and so the nearest-neighbour version of Algorithm 1 is implemented using a Monte Carlo approximation to $\alpha_{T}$, which is based on the accepted draws of $\theta$. {This approximation of $\alpha_{T}$ is a Monte Carlo estimate of a conditional expectation, and, as such, will be sensitive to the dimension of ${\eta}(\cdot)$ for any fixed number of Monte Carlo draws $N$; see Biau et al. (2015) for further discussion on this point. In addition,} it can also be shown that if
$\varepsilon_T$ becomes much smaller than $1/v_T$, the dimension of $\eta(\cdot)$ will affect the behavior of Monte Carlo estimators for this
acceptance probability. Specifically, when considering inference on $\theta_0$ using the accept/reject approximate Bayesian computation algorithm, we require a sequence of Monte Carlo trials $N_{T}\rightarrow\infty$ as $T\rightarrow\infty$ that diverges faster the larger is $k_{\eta}$, the dimension of $\eta(\cdot)$. Such a feature highlights the lack of efficiency of the accept/reject approach when the sample size is large or if the dimension of the summaries is large. However, we note here that more efficient sampling approaches exist and could be applied in these settings. For example, Li and Fearnhead (2018a) consider an importance sampling approach to approximate Bayesian computation that yields acceptance rates satisfying $\alpha_{T}=O(1)$, so long as $\varepsilon_{T}=O(1/v_{T})$. Therefore, in cases where the Monte Carlo error is likely to be large, these alternative sampling approaches should be employed. 

Regardless of whether one uses a more efficient sampling procedure than the simple accept/reject approach, Corollary \ref{implication:accept} demonstrates that taking a tolerance sequence as small as possible will not necessarily yield more accurate results. That is,
Corollary \ref{implication:accept} questions the persistent opinion that the tolerance in Algorithm 1 should always
be taken  as small as the computing budget
allows. Once $\varepsilon _{T}$
is chosen small enough to satisfy Case {(iii)} of Theorem \ref%
{normal_thm}, which leads to the most stringent requirement on the tolerance, $%
v_{T}\varepsilon_{T}=o(1)$, there may well be no gain in pushing $%
\varepsilon _{T}$ or, equivalently, $\alpha _{T}$ any closer to
zero, {especially since pushing $\varepsilon_{T}$ closer to zero can drastically increase the required computational burden}. {In the following section we numerically demonstrate this result in a simple example. In particular, we demonstrate that for a choice of tolerance $\varepsilon_{T}$ that admits a Bernstein--von Mises result, there is no gain in taking a tolerance that is smaller than this value, while the computational cost associated with such a choice, for a fixed level of Monte Carlo error, drastically increases. }

\subsection{Numerical Illustration of Quantile Choice}

Consider the simple example where we observe a sample $\{y_{t}\}_{t=1}^{T}$
from $y_{t}\sim \mathcal{N}(\mu ,\sigma )$ with $T=100$. Our goal is
posterior inference on $\theta=(\mu ,\sigma)^{\intercal}$. We use as 
summaries the sample mean and variance, $\bar{x}$ and $s_{T}^{2}$, which
satisfy a central limit theorem at rate ${T}^{1/2}$. In order 
to guarantee asymptotic normality of the approximate posterior distribution, we must choose an $\alpha _{T}$
quantile of the simulated distances according to $\alpha _{T}=o(1/T)$, 
because of the joint inference on $\mu $ and $\sigma $. For the purpose
of this illustration, we will compare inference based on the nearest-neighbour version of Algorithm 1
using four different choices of $\alpha _{T}$, $\alpha _{1}=1/T^{1.1},$ $\alpha
_{2}=1/T^{3/2}$, $\alpha _{3}=1/T^{2}$ and $\alpha _{4}=1/T^{5/2}$.

Draws for $(\mu ,\sigma )$ are simulated on $[0.5,1.5]\times \lbrack
0.5,1.5] $ according to independent uniforms $\mathcal{U}[0.5,1.5]$. The
number of draws, $N$, is chosen so that we retain 250 accepted
draws for each of the different choices ($\alpha _{1},\dots,\alpha _{4}$). The
exact finite sample marginal posterior densities of $\mu $\ and $\sigma $ are
produced by numerically evaluating the likelihood function, normalizing over
the support of the prior and marginalising with respect to each parameter.
Given the sufficiency of $(\bar{x},\;s_{T}^{2})$, the exact marginal
posteriors densities for $\mu$ and $\sigma$ are equal to those based directly on the
summaries themselves. Hence, we are able to assess the impact of the choice of $\alpha$, per se, on the ability of the nearest-neighbour version of Algorithm 1 to replicate the exact marginal posteriors. 

We summarize the accuracy of the resulting approximate posterior density estimates, across the
four quantile choices, using root mean squared error. In particular, over fifty simulated replications, and in the case of the parameter $\mu $, we estimate
the root mean squared error between the marginal posterior density obtained from
Algorithm 1 using $\alpha _{j}$, and denoted by $\widehat{\pi}_{\alpha
_{j}}\{\mu \mid{\eta }({y})\}$, and the exact marginal posterior density, $\pi(\mu \mid {y})$,
using 
\begin{equation}
\textsc{RMSE}_{\mu }(\alpha _{j})=\left[\frac{1}{G}\sum_{g=1}^{G}\big\{ \widehat{\pi}%
_{\alpha _{j}}^{g}\{\mu \mid {\eta }({y})\}-\pi^{g}(\mu \mid {y})\big\}
^{2}\right]^{1/2}. \label{RMSE_def}
\end{equation}%
The term $\widehat{\pi}^{g}_{\alpha_{j}}$ is the ordinate of the density estimate from the nearest-neighbour version of
Algorithm 1 and $\pi^{g}$ is the ordinate of the exact posterior density,
at the $g$-th grid point upon which the density is estimated. $\textsc{RMSE}_{\sigma }(\alpha _{j})$ is computed analogously. The value of $\textsc{RMSE}_{\mu }(\alpha _{j})$ is averaged over
fifty replications to account for sampling variability. For each replication, we fix $T=100$ and generate observations using the
parameter values $\mu _{0}=1,\;\sigma _{0}=1$.

Before presenting the replication results, it is instructive to
consider the graphical results of one particular run of the algorithm for
each of the $\alpha _{j}$ values. Figure \ref{fig1} plots the resulting
marginal posterior estimates and compares these with the exact finite
sample marginal posterior densities of $\mu $ and $\sigma $. At the end of Section \ref{sec:implications}, we argued that for large enough $T$, once $\varepsilon _{T}$ reaches a certain
threshold, decreasing the tolerance further will not necessarily result in
more accurate estimates of these exact posterior densities. This implication is
 evident in Fig.~\ref{fig1}: in the case of $\mu $, there is a
clear visual decline in the accuracy with which approximate Bayesian computation estimates the exact
marginal posterior densities when choosing quantiles smaller than $\alpha _{2}$;{\ }%
whilst in the case of $\sigma $, the worst performing estimate is that
associated with the smallest value of $\alpha _{j}.$%
%\begin{sidewaysfigure}
\begin{figure}[tbp]
\centering 
\setlength\figureheight{2.25cm} 
\setlength\figurewidth{6.0cm} 
\input{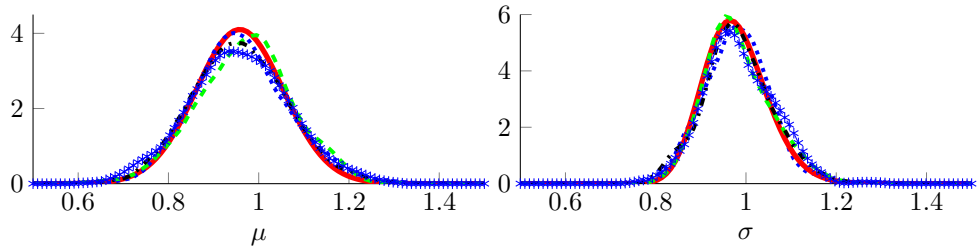}
\caption{Comparison of exact and approximate posterior densities for various tolerances. Exact marginal posterior densities ({\color{red}\textbf{\----}}). Approximate Bayesian computation posterior densities based on $\alpha_{1}=1/T^{1.1.}$ ({\color{blue}$\cdots$}); $\alpha_{2}=1/T^{3/2}$ (\textbf{\color{green}- - -}); $\alpha_{3}=1/T^{2}$ (\textbf{\color{black}\---- $\cdot$ \----}); $\alpha_{4}=1/T^{5/2}$ {\color{blue}\----$*$\----}}
\label{fig1}
\end{figure}
%\end{sidewaysfigure}

The results in Table \ref{tab1} report average root mean squared error, relative
to the average value associated with $\alpha
_{4}=1/T^{5/2}.$ Values smaller than one indicate that the larger, and
less computationally burdensome, value of $\alpha _{j}$ yields a more accurate estimate than that obtained using $\alpha
_{4} $. In brief, Table \ref{tab1} paints a similar picture to that of
Fig.~\ref{fig1}: for $\sigma $, the estimates based on $\alpha _{j}$, $
j=1,2,3,$ are all more accurate than those based on $\alpha
_{4} $; for $\mu $, estimates based on $
\alpha _{2}$ and $\alpha _{3}$ are both more accurate that those based on $
\alpha _{4}$.

These numerical results have important implications for implementation of approximate Bayesian computation. In particular, to keep the level of Monte Carlo error constant across the $\alpha_{j}$ quantile choices, as we have done in this simulation setting via the retention of 250 draws, this requires taking: $N=210e03$ for $\alpha _{1}$, $N=1.4e06$ for $%
\alpha _{2}$, $N=13.5e06$ for $\alpha _{3}$, and $N=41.0e06$
for $\alpha _{4}$. That is, the computational burden associated
with decreasing the quantile in the manner indicated increases dramatically: approximate
posterior densities based on $\alpha _{4}$\ for example require a value of $N$ that
is three orders of magnitude greater than those based on $\alpha _{1}$, but
this increase in computational burden yields no, or minimal, gain in
accuracy. The extension of such explorations to more scenarios is beyond the
scope of this paper; however, we speculate that, with due consideration
given to the properties of both the true data generating process and the
chosen summary statistics and, hence, of the sample sizes for which Theorem %
\ref{normal_thm} has practical content, similar qualitative results
will continue to hold.

\begin{table}[tbph]
\def~{\hphantom{0}}
\tbl{Ratio of the average root mean square
error for marginal approximate posterior density estimates relative to the average root mean square
error based on the smallest {quantile}, $\protect\alpha_{4}=1/T^{5/2}$}{
\begin{tabular}{rrrrrrr}
& $\alpha _{1}=1/T^{1.1}$ &  & $\alpha _{2}=1/T^{1.5}$ &  & $\alpha
_{3}=1/T^{2}$ &  \\ &&&&&&\\
$\textsc{AVG-RMSE}_{\mu }(\alpha _{j})$ & 1.17 &  & 0.99 &  & 0.98 &  \\ 
$\textsc{AVG-RMSE}_{\sigma }(\alpha _{j})$ & 0.86 &  & 0.87 &  & 0.91 &  \\ 
\end{tabular}}%
\begin{tabnote}
\textsc{AVG-RMSE} is the ratio of the average root mean square errors as defined in \eqref{RMSE_def}.
\label{tab1}
\end{tabnote}
\end{table}

\section*{Acknowledgment}

This research was supported by Australian Research Council Discovery and
l'Institut Universitaire de France grants. The third author is further
affiliated with the University of Warwick and the fourth with the Universit\'e
Paris Dauphine.  We are grateful to Wentao Li and Paul Fearnhead for spotting a
mistake in a previous version of Theorem \ref{mean_thm} and to the editorial
team for its help.

%\section*{Supplementary Material}
%Supplementary Material available at \textit{Biometrika} online includes proofs of all results and numerical examples that illustrate the theoretical results.  
%

\bigskip

\Large
\begin{center}
{\bf Asymptotic Properties of Approximate Bayesian Computation:
Supplementary Material}
\end{center}
\normalsize

\bigskip

\begin{abstract}
This supplementary material contains proofs of Theorems 1--4 {and Corollary 1%
} in the paper. In addition, we illustrate the implications of Theorems 1--3
in the paper with a series of simulated examples {based on} {the }%
moving average model {of Example 1.}
\end{abstract}

\section{Proofs}

\label{sec:pf}

\subsection{Proof of Theorem 1}

%\begin{proof}[proof of Theorem 1]
Let $\varepsilon _{T}>0$, where, by assumption $\varepsilon_{T}=o(1)$, 
and assume that ${y}\in \Omega
_{\varepsilon }=\{{y}:d_{2}\{{\eta }({y}),{b}({\theta }_{0})\}\leq
\varepsilon _{T}/3\}$. From assumption \ref{[A1]} and $\rho _{T}(\varepsilon
%\varepsilon _{T}/3\}$. From assumption 1 and $\rho _{T}(\varepsilon
_{T}/3)=o(1)$, $P_{0}\left( \Omega _{\varepsilon }\right) =1+o(1)$. Consider
the joint event $A_{\varepsilon }(\delta ^{\prime })=\{({z},{\theta }%
):d_{2}\{{\eta }({z}),{\eta }({y})\}\leq \varepsilon _{T}\}\cap d_{2}\{{b}({%
\theta }),{b}({\theta }_{0})\}>\delta ^{\prime }\}$. For all $({z},{\theta }%
)\in A_{\varepsilon }(\delta ^{\prime })$ 
\begin{equation*}
\begin{split}
d_{2}\{{b}({\theta }),{b}({\theta }_{0})\}& \leq d_{2}\{{\eta }({z}),{\eta }(%
{y})\}+d_{2}\{{b}({\theta }),{\eta }({z})\}+d_{2}\{{b}({\theta }_{0}),{\eta }%
({y})\} \\
& \leq 4\varepsilon _{T}/3+d_{2}\{{b}({\theta }),{\eta }({z})\}.
\end{split}%
\end{equation*}%
{Hence} $({z},{\theta })\in A_{\varepsilon }(\delta ^{\prime })$
implies that 
\begin{equation*}
d_{2}\{{b}({\theta }),{\eta }({z})\}>\delta ^{\prime }-4\varepsilon _{T}/3
\end{equation*}%
and choosing $\delta ^{\prime }\geq 4\varepsilon _{T}/3+t_{\varepsilon }$
leads to%
\begin{equation*}
\text{pr}\left\{ A_{\varepsilon }(\delta ^{\prime })\right\} \leq
\int_{\Theta }P_{{\theta }}\left[d_{2}\{{b}({\theta }),{\eta }({z}%
)\}>t_{\varepsilon }\right] d\Pi ({\theta }),
\end{equation*}%
and 
\begin{flalign}
\Pi [d_{2} \{{b}({\theta }),{b}(
{\theta }_{0})\}&>4\varepsilon _{T}/3+t_{\varepsilon }\mid d_{2}\{
{\eta }({y}),{\eta }({z})\}\leq
\varepsilon _{T}] =\Pi _{\varepsilon }\left[ d_{2}\{{b}(
{\theta }),{b}({\theta }_{0})\}>4
\varepsilon _{T}/3+t_{\varepsilon }\mid{\eta(y) }_{}\right] 
\notag \\
&\leq {\int_{\Theta }P_{\theta }\left[ d_{2}\{{b}(
{\theta }),{\eta }({z})\}>t_{\varepsilon
}\right] d\Pi ({\theta })}\Big/{\int_{\Theta }P_{\theta }\left[ d_{2}\{
{\eta }({z}),{\eta }({y})\}\leq
\varepsilon _{T}\right] d\Pi ({\theta })}.  \label{ineq1}
\end{flalign}Moreover, since 
\begin{equation*}
d_{2}\{{\eta }({z}),{\eta }({y})\}\leq d_{2}\{{b}({\theta }),{\eta }({z}%
)\}+d_{2}\{{b}({\theta }_{0}),{\eta }({y})\}+d_{2}\{{b}({\theta }),{b}({%
\theta }_{0})\}\leq \varepsilon _{T}/3+\varepsilon _{T}/3+d_{2}\{{b}({\theta 
}),{b}({\theta }_{0})\},
\end{equation*}%
provided $d_{2}\{{b}({\theta }),{\eta }({z})\}\leq \varepsilon _{T}/3$, then 
\begin{equation*}
\begin{split}
\int_{\Theta }P_{{\theta }}\left[ d_{2}\{{\eta }({z}),{\eta }({y})\}\right.
& \leq \left. \varepsilon _{T}\right] d\Pi ({\theta })\geq \int_{d_{2}\{{b}({%
\theta }),{b}({\theta }_{0})\}\leq \varepsilon _{T}/3}P_{{\theta }}\left[
d_{2}\{{\eta }({z}),b(\theta )\}\leq \varepsilon _{T}/3\right] d\Pi ({\theta 
}) \\
& \geq \Pi \left[d_{2}\{{b}({\theta }),{b}({\theta }_{0})\}\leq \varepsilon
_{T}/3\right] -\rho _{{T}}(\varepsilon _{T}/3)\int_{d_{2}\{{b}({\theta }),{b}%
({\theta }_{0})\}\leq \varepsilon _{T}/3}c({\theta })d\Pi ({\theta }).
\end{split}%
\end{equation*}%
If part {(i)} of assumption \ref{[A1]} holds, 
\begin{equation*}
\int_{d_{2}\{{b}({\theta }),{b}({\theta }_{0})\}\leq \varepsilon _{T}/3}c({%
\theta })d\Pi ({\theta })\leq c_{0}\Pi \left[ d_{2}\{{b}({\theta }),{b}({%
\theta }_{0})\}\leq \varepsilon _{T}/3\right] 
\end{equation*}%
and for $\varepsilon _{T}$ small enough, or for $T$ large enough, 
so that $\rho _{T}(\varepsilon _{T}/3)$ is {small}, 
$$%\begin{equation}
\int_{\Theta }P_{{\theta }}\left[d_{2}\{{\eta }({z}),{\eta }({y})\}\leq
\varepsilon _{T}\right] d\Pi ({\theta })\geq {\Pi \left[ d_{2}\{{b}({\theta }%
),{b}({\theta }_{0})\}\leq \varepsilon _{T}/3\right] }/{2},  %\label{up-i}
$$%\end{equation}%
which, combined with \eqref{ineq1} and assumption \ref{[A2]}, leads to% 
\begin{equation}
\Pi\left[ d_{2}\{{b}({\theta }),{b}({\theta }%
_{0})\}>4\varepsilon _{T}/3+t_{\varepsilon }\mid d_{2}\{\eta(z),\eta(y)\}\leq \varepsilon_{T}\right] \lesssim
\rho _{T}(t_{\varepsilon })\varepsilon _{T}^{-D}\lesssim {1}/{M}
\label{ineq-i}
\end{equation}%
by choosing $t_{\varepsilon }=\rho _{T}^{-1}(\varepsilon _{T}^{D}/M)$ with $M
$ large enough. If part {(ii)} of assumption \ref{[A1]} holds, a H\"{o}lder
inequality implies that 
\begin{equation*}
\int_{d_{2}\{{b}({\theta }),{b}({\theta }_{0})\}\leq \varepsilon _{T}/3}c({%
\theta })d\Pi ({\theta })\lesssim \Pi \left[d_{2}\{{b}({\theta }),{b}({%
\theta }_{0})\}\leq \varepsilon _{T}/3\right] ^{a/(1+a)}
\end{equation*}%
and if $\varepsilon _{T}$ satisfies 
\begin{equation*}
\rho _{T}(\varepsilon _{T})=o\left\{ \varepsilon _{T}^{D/(1+a)}\right\}
=O\left(\Pi \left[ d_{2}\{{b}({\theta }),{b}({\theta }_{0})\}\leq
\varepsilon _{T}/3\right] ^{1/(1+a)}\right),
\end{equation*}%
then \eqref{ineq-i} remains valid. %\end{proof}

\subsection{Generalization of Theorem 2 and its Proof}

We obtain a generalization of Theorem 2 that allows differing rates of convergence for $\eta (y)$.
We assume here that there exists a sequence of $%
k_{\eta }\times k_{\eta }$ positive definite matrices ${\Sigma }_{T}({\theta 
})$ such that for all ${\theta }$ in a neighbourhood of ${\theta }_{0}$,
where $\theta_0$ is in  the interior of $\Theta$, 
\begin{equation}
c_{1}{D}_{T} \leq \Sigma _{T}({\theta })\leq c_{2} {D}_{T},\quad {D}_{T}=%
\mbox{diag}\{v _{T}(1),\ldots ,v_{T}(k)\},  \label{sigma_T}
\end{equation}%
with $0<c_{1},c_{2}<\infty $, $v_{T}(j)\rightarrow \infty $ for all $j$ and
the $v_{T}(j)$ are possibly all distinct. For square matrices $A,B$, 
$A\leq B$ means that the matrix $B-A$ is positive semi-definite. Thus,
%C%R% in the presentation and proof of this generalization of Theorem 2, we do not
%C%R% restrict ourselves to
this generalization of Theorem 2 does not require identical convergence rates for the components of the
statistic ${\eta }({z}) $. For simplicity, we order the components so that 
\begin{equation}
v_{T}(1)\leq \ldots \leq v_{T}(k_{\eta }).  \label{d_values}
\end{equation}%
%
%
%% For all $j$, we assume $\lim_{T\to\infty}v_{T}(j)\varepsilon _{T}=\limsup_{T\to\infty}v_{T}(j)\varepsilon _{T}$. 
For any square matrix ${A}$ of dimension $k_{\eta }$, if $q\leq k_{\eta }$, $%
{A}_{[q]}$ denotes the $q\times q$ square upper sub-matrix of ${A}$. Also,
let $j_{\max }=\max \{j:\lim_{T\to\infty}v_{T}(j)\varepsilon _{T}=0\}$ and
if, for all $j$, $\lim_{T\to\infty}v_{T}(j)\varepsilon _{T}>0$ then $j_{\max
}=0$.

In addition to assumption \ref{[A2]} in Section 3 of the text, the following
conditions are needed to establish this generalization of Theorem 2.

\medskip

\begin{assumption}
\label{[A1"]} Assumption 1 holds and
 the sequence of positive definite matrices $\{\Sigma _{T}({\theta }_{0})\}_{T\geq1}$ in \eqref{sigma_T} exists. For $\kappa >1$ and $\delta >0$, such that for all $\Vert {\theta }-{%
\theta }_{0}\Vert \leq \delta $, $P_{{\theta }}\left[ \Vert \Sigma _{T}({%
\theta }_{0})\{{\eta }({z})-{b}({\theta })\}\Vert >u\right] \leq {c_{0}
}/{u^{\kappa }}$ for all $0<u\leq \delta v_{T}(1)$ and $c_{0}<\infty$.\end{assumption}

\begin{assumption}\label{[A3"]} Assumption \ref{[A3]} holds, the
function ${b}(\cdot )$ is continuously differentiable at ${\theta
_{0}}$, and the Jacobian $\nabla _{\theta }b({\theta }_{0})$ has full column rank $%
k_{\theta }$.\end{assumption} 

\begin{assumption}\label{[A4"]} Given the sequence of $k_{\eta }\times k_{\eta }$ positive definite matrices ${%
\Sigma }_{T}({\theta })$ defined in (\ref{sigma_T}), for some $\delta >0$ and all $\Vert {\theta }-{\theta }_{0}\Vert \leq \delta $, the convergence in distribution: for all open sets $B$
\begin{equation*}
\sup_{|\theta - \theta_0|\leq \delta} \left| P_\theta \left[{\Sigma }_{T}({\theta })\{{\eta }({z})-{b}({\theta })\} \in B\right] - P\left\{ \mathcal{N%
}(0,I_{k_{\eta }}) \in B\right\} \right|  \longrightarrow 0 
\end{equation*}%
holds, where $I_{k_{\eta }}$ is the $k_{\eta }\times k_{\eta }$ identity
matrix.\end{assumption}

\begin{assumption}\label{[A5"]} For all $\Vert \theta -\theta
_{0}\Vert \leq \delta $, the sequence of functions ${\theta }\mapsto {\Sigma 
}_{T}({\theta })D_{T}^{-1}$ converges to some positive definite matrix $A(\theta )$
and is equicontinuous at ${\theta }_{0}$. \end{assumption}

\begin{assumption}\label{[A6"]} For some positive $\delta $ and
all $\Vert {\theta }-{\theta }_{0}\Vert \leq \delta $, and for all ellipsoids 
\begin{equation*}
B_{T}=\left\{(t_{1},\ldots ,t_{j_{\max }}):\tsum_{j=1}^{j_{\max }}\, t_{j}^{2}/h_{T}(j)^{2}\leq 1\right\}
\end{equation*}%
with $\lim_{T\to\infty} h_{T}(j) = 0$, for all $j\leq j_{\max }$ and all $u\in \mathbb{%
R}^{j_{\max }}$ fixed, 
\begin{equation*}
\begin{split}
\lim_{T\to\infty}\sup_{|\theta -\theta_0|\leq \delta } \left| \frac{P_{{\theta }}\left[ \{{\Sigma }_{T}({\theta })\}_{[j_{\max }]}\{{%
\eta }({z})-{b}({\theta })\}-u\in B_{T}\right] }{\prod_{j=1}^{j_{\max
}}h_{T}(j)} - \varphi _{j_{\max }}(u)\right| & =0, \\
\frac{P_{{\theta }}\left[ \{{\Sigma }_{T}({\theta })\}_{[j_{\max }]}\{{\eta }({%
z})-{b}({\theta })\}-u\in B_{T}\right] }{\prod_{j=1}^{j_{\max }}h_{T}(j)}&
\leq H(u),\quad \int H(u)du<\infty ,
\end{split}
%\label{dens:cond"}
\end{equation*}%
for $\varphi _{j_{\max}}(\cdot )$ the density of a $j_{\max }$-dimensional
normal random variate.
\end{assumption}

\setcounter{theorem}{5} 
\begin{theorem}
\label{normal_thm"} Assume that Assumptions 2, \ref{[A1"]}, with $\kappa >k_{\theta }$, and  \ref{[A3"]}--\ref{[A5"]}, are satisfied. The following results hold with $P_0$ probability approaching one: \medskip

\noindent {(i)} if $\lim_{T\to\infty}v_{T}(1)\varepsilon _{T}=\infty $, the posterior distribution of $\varepsilon
_{T}^{-1}({\theta }-{\theta }_{0})$ converges to the uniform distribution
over the ellipse $\{w :w^{\intercal}B_{0}w\leq 1\}$ with $B_{0}=\{\nabla _{\theta }{b}({%
\theta }_{0})^{^{\intercal }}\nabla _{\theta }{b}({\theta }_{0})\}$. Hence, for
all $f(\cdot)$ continuous and bounded,
\begin{equation}
\int f\{\varepsilon _{T}^{-1}({\theta }-{\theta }_{0})\}d\Pi _{\varepsilon
}\{\theta  \mid {\eta(y) }_{}\}{\rightarrow } {\int_{u^{\intercal}B_{0}u\leq
1}f(u)du}\big/{\int_{u^{\intercal}B_{0}u\leq 1}du}.  \label{aneninf"}
\end{equation}

\noindent {(ii)} if there exists $k_{0}<k_{\eta }$ such that $%
\lim_{T\to\infty}v_{T}(1)\varepsilon _{T}=\lim_{T\to\infty}v_{T}(k_{0})\varepsilon
_{T}=c $, $0<c<\infty $, and $\lim_{T\to\infty}v_{T}(k_{0}+1)\varepsilon
_{T}=\infty $, assuming
$$\text{\em Leb}\left( \sum_{j=1}^{k_{0}}\left[ \left\{ {\nabla _{\theta }{b}({\theta }
_{0})}({\theta }-{\theta _{0}})\right\} _{[j]}\right] ^{2}\leq c\varepsilon
_{T}^{2}\right) =\infty,$$ then 
$$%\begin{equation}
\Pi _{\varepsilon }\left[ {\Sigma }_{T}({\theta }_{0})\{{b}({\theta })-{b}({%
\theta }_{0})\}-\Sigma_{T}(\theta_0)\{\eta(y)-b(\theta_0)\}\in B \mid {\eta(y) }_{}\right] \rightarrow 0,
%\label{aneninfc"}
$$%\end{equation}%
for all bounded measurable sets $B$, where $\text{\em Leb}(.)$ denotes the Lebesgue measure.\medskip

\noindent {(iii)} if there exists $j_{\max }<k_{\eta }$ such that $%
\lim_{T\to\infty}v_{T}(j_{\max })\varepsilon _{T}=0$ and $\lim_{T\to\infty}v_{T}(j_{\max
}+1)\varepsilon _{T}=\infty $, if assumption \ref{[A6"]} is satisfied and  $\kappa$ is such that
$$\left\{\prod_{j=1}^{j_{\max }}v
_{T}(j) \right\}^{-1/(\kappa+ j_{\max} )} v_T(j_{\max }+1)^{-\kappa/(\kappa+j_{\max} )}  = o(\varepsilon _{T}),$$
 then \eqref{aneninf"} is satisfied.\medskip

\noindent {(iv)} if $\lim_{T\to\infty}v_{T}(j)\varepsilon _{T}=c>0$ for all $%
j\leq k_{\eta }$ or  if case (ii) holds with  %\newline
$$\text{ \em Leb}\left( \sum_{j=1}^{k_{0}}\left[ \left\{ {\nabla
_{\theta }{b}({\theta }_{0})}({\theta }-{\theta _{0}})\right\} _{[j]}\right]
^{2}\leq c\varepsilon _{T}^{2}\right) <\infty ,$$ 
then there exists a non-Gaussian probability distribution on $\mathbb{R}^{k_{\eta
}}$, $Q_{c}$, such that
$$%\begin{equation}
\Pi _{\varepsilon }\left[ {\Sigma }_{T}({\theta }_{0})\{{b}({\theta })-{b}({%
\theta }_{0})\}-\Sigma_{T}(\theta_0)\{\eta(y)-b(\theta_0)\}\in B \mid {\eta(y) }_{}\right] \rightarrow Q_{c}(B).
%\label{anenc"}
$$%\end{equation}%
In particular, 
\begin{equation*}
Q_{c}(B) \propto \int_{B}\int_{\mathbb{R}^{k_{\eta}} }\1_{(z-x)^{\intercal}A(\theta_0)^{\intercal}A(\theta_0)(z-x)}{\varphi}_{k_\eta}(z) dzdx.
\end{equation*}

\noindent {(v)} if $\lim_{T\to\infty}v_{T}(k_{\eta})\varepsilon _{T}=0$ and under assumption \ref{[A6"]}
holding for $j_{\max }=k_{\eta }$, then, for $\Phi_{k_{\eta}}(\cdot)$ the  cumulative distribution function of the $k_{\eta}$-dimensional standard normal 
$$%\begin{equation}
\lim_{T\to\infty}\Pi_{\varepsilon }\left[\Sigma_{T}(\theta_{0})\{{b}(\theta
)-b(\theta_{0})\}-\Sigma_{T}(\theta_0)\{\eta(y)-b(\theta_0)\}\in B \mid {\eta(y) }_{}\right] =\Phi _{k_{\eta}}({B}).
%\label{anennorm"}
$$%\end{equation}
\end{theorem}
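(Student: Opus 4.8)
The plan is to write the approximate posterior as the ratio
$$\Pi_\varepsilon\{A\mid\eta(y)\}=\frac{\int_A\pi(\theta)\,g_T(\theta)\,d\theta}{\int_\Theta\pi(\theta)\,g_T(\theta)\,d\theta},\qquad g_T(\theta):=P_\theta[\,\|\eta(z)-\eta(y)\|\le\varepsilon_T\,],$$
and to reduce everything, via Theorem~1, to a shrinking neighbourhood $\{\|\theta-\theta_0\|\le u_T\}$ of $\theta_0$ on which $\Pi_\varepsilon$ places mass $1+o_P(1)$; here $u_T\to0$ is the concentration rate, and in case~(iii) the stated condition on $\kappa$ is exactly what makes the anisotropic version of Theorem~1 deliver a $u_T$ small enough that the complement is negligible at the relevant scale. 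On that neighbourhood I would Taylor-expand $b(\theta)=b(\theta_0)+\nabla_\theta b(\theta_0)(\theta-\theta_0)+o(\|\theta-\theta_0\|)$ using Assumption~\ref{[A3"]}, and replace $\Sigma_T(\theta)$ by $\Sigma_T(\theta_0)$ up to a factor $I+o(1)$ by the equicontinuity of $\theta\mapsto\Sigma_T(\theta)D_T^{-1}$ in Assumption~\ref{[A5"]}. After these reductions the problem becomes the analysis of $g_T$ as a function of the rescaled variable $w=\varepsilon_T^{-1}(\theta-\theta_0)$ in case~(i), and of $u=\Sigma_T(\theta_0)\nabla_\theta b(\theta_0)(\theta-\theta_0)-\Sigma_T(\theta_0)\{\eta(y)-b(\theta_0)\}$ in cases (ii)--(v).

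The core estimate is the behaviour of $g_T(\theta)$. Substituting $Z_T(\theta)=\Sigma_T(\theta)\{\eta(z)-b(\theta)\}$, the acceptance event $\|\eta(z)-\eta(y)\|\le\varepsilon_T$ becomes $Z_T(\theta)\in x_T(\theta)+E_T$, where $x_T(\theta)=\Sigma_T(\theta)\{\eta(y)-b(\theta)\}$ and $E_T=\Sigma_T(\theta)\{v:\|v\|\le\varepsilon_T\}$ is an ellipsoid whose extent in coordinate $j$ is of order $\varepsilon_T v_T(j)$. I would split the index set into the ``diverging'' block $\{j:v_T(j)\varepsilon_T\to\infty\}$, the ``bounded'' block $\{j:v_T(j)\varepsilon_T\to c\in(0,\infty)\}$ (present only in cases (ii) and (iv)), and the ``vanishing'' block $\{j\le j_{\max}\}$. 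On the diverging block the central limit theorem of Assumption~\ref{[A4"]} applies and the constraint imposed by $E_T$ becomes asymptotically vacuous (the ellipsoid swallows the whole relevant range of $Z_T$), contributing a factor tending to $1$; on the vanishing block the ellipsoid shrinks and Assumption~\ref{[A6"]} (a local limit theorem) gives, up to a $\theta$-free factor, $g_T(\theta)=\{\prod_{j\le j_{\max}}h_T(j)\}\,\varphi_{j_{\max}}(\text{the rescaled }u)\{1+o(1)\}$; on the bounded block one is left with a Gaussian measure of a fixed-size ellipsoid, which after integration over $\theta$ produces the convolution $Q_c$. Combining the three blocks, while tracking the $o(1)$ from the $\Sigma_T(\theta)\mapsto\Sigma_T(\theta_0)$ replacement and from the linearisation of $b$, yields a uniform (over the shrinking neighbourhood) expansion of $g_T$.

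Substituting this expansion into the ratio and using regularity of $\pi$ near $\theta_0$ then identifies the limit in each case: in case~(i), $g_T(\theta)\to\1\{\|\nabla_\theta b(\theta_0)(\theta-\theta_0)\|\le\varepsilon_T(1+o(1))\}$ after noting $\|\eta(y)-b(\theta_0)\|=o_P(\varepsilon_T)$, so $\Pi_\varepsilon$ is asymptotically the prior restricted to an ellipsoid and, rescaled by $\varepsilon_T$, is uniform on $\{w:w^{\intercal}B_0w\le1\}$; in case~(v), $g_T(\theta)\propto\varphi_{k_\eta}(\Sigma_T(\theta_0)\{\eta(y)-b(\theta)\})$ up to a $\theta$-free factor, so the posterior of $u$ is standard Gaussian; case~(iii) is the hybrid of a ``free'' diverging block and a Gaussian vanishing block, again giving \eqref{aneninf"}; and cases (ii), (iv) give $Q_c$ or the degenerate limit $0$ according to whether $\mathrm{Leb}\bigl(\sum_{j=1}^{k_0}[\{\nabla_\theta b(\theta_0)(\theta-\theta_0)\}_{[j]}]^2\le c\varepsilon_T^2\bigr)$ stays finite or diverges, which is precisely what decides whether the normalising constant remains bounded or the mass escapes to infinity in the rescaled picture. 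The main obstacle is the simultaneous treatment of the three blocks: making the block decomposition of the Gaussian-approximated $Z_T(\theta)$ rigorous, and controlling the errors from replacing $\Sigma_T(\theta)$ by $\Sigma_T(\theta_0)$ and $b$ by its linearisation \emph{uniformly} over the $T$-dependent neighbourhood of $\theta_0$, while some ellipsoid semi-axes $\varepsilon_T v_T(j)$ diverge and others vanish --- the interplay between the deviation bound of Assumption~\ref{[A1"]} (which governs how small the neighbourhood from Theorem~1 can be taken, hence how much is discarded, as the $\kappa$-condition in case~(iii) makes explicit) and the central-limit and local-limit refinements of Assumptions~\ref{[A4"]}--\ref{[A6"]} is the technical heart of the argument.
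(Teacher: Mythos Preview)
Your proposal is essentially the paper's own proof: localise to a $\lambda_T$-neighbourhood via Theorem~1, write $\Sigma_T(\theta_0)\{\eta(z)-\eta(y)\}=Z_T+\Sigma_T(\theta_0)\{b(\theta)-b_0\}-Z_T^0$, linearise $b$, and analyse the acceptance probability $g_T(\theta)$ block by block according to whether $v_T(j)\varepsilon_T$ diverges, stays bounded, or vanishes, invoking respectively the deviation bound of Assumption~\ref{[A1"]}, the uniform CLT of Assumption~\ref{[A4"]}, and the local limit of Assumption~\ref{[A6"]}. The one concrete technique the paper adds beyond your outline is that in case~(i) it does \emph{not} pass to a block decomposition at all: it splits the $\theta$-integral into the three shells $\|x(\theta)\|\le 1-M/(v_T(1)\varepsilon_T)$, the boundary annulus, and $\|x(\theta)\|\ge 1+M/(v_T(1)\varepsilon_T)$, bounding $g_T$ by $1-O(M^{-\kappa})$ on the inner shell and by $c_0\{v_T(1)\varepsilon_T(\|x(\theta)\|-1)\}^{-\kappa}$ on the outer shell via the polynomial tail, which is exactly how $\kappa>k_\theta$ enters; the boundary annulus has negligible volume since its thickness is $O((v_T(1)\varepsilon_T)^{-1})$.
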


\begin{proof}\renewcommand{\qedsymbol}{}
We work with ${b}(\theta)$ instead of ${\theta }$ as the parameter, with injectivity
of ${\theta }\mapsto {b}({\theta })$ required to re-state all results in
terms of ${\theta }.$ For mathematical convenience, we demonstrate this result in the case where $d_2(\eta_1,\eta_2)=\|\eta_1-\eta_2\|$, however, it holds for any metric $d_2$ by the equivalence of all metrics on $\mathcal{B}$. 

We control the approximate Bayesian computation posterior expectation of non-negative and bounded
functions $f_{T}({\theta }-{\theta }_{0})$ by
\begin{equation*}
\begin{split}
E_{\Pi _{\varepsilon }}\left\{ f_{T}({\theta }-{\theta }_{0})\right\} & =\int f_{T}({\theta }-{\theta }_{0})d\Pi _{\varepsilon }\left\{ {%
\theta } \mid \eta(y)\right\} \\
& =\int f_{T}({\theta }-{\theta }_{0})1\!\mathrm{l}_{\Vert {\theta }-{\theta 
}_{0}\Vert \leq \lambda _{T}}d\Pi _{\varepsilon }\left\{ {\theta } \mid {\eta(y) }\right\} +o_{P}(1) \\
& =\frac{\int_{\Vert {\theta }-{\theta }_{0}\Vert \leq \lambda _{T}}\pi({%
\theta })f_{T}({\theta }-{\theta }_{0})P_{{\theta }}\left\{ \Vert {\eta }({z}%
)-{\eta }({y})\Vert \leq \varepsilon _{T}\right\} d{\theta }}{\int_{\Vert {%
\theta }-{\theta }_{0}\Vert \leq \lambda _{T}}\pi({\theta })P_{{\theta }%
}\left\{ \Vert {\eta }({z})-{\eta }({y})\Vert \leq \varepsilon _{T}\right\} d{%
\theta }}+o_{P}(1),
\end{split}%
\end{equation*}%
where the second equality uses the posterior concentration of $\Vert {\theta 
}-{\theta }_{0}\Vert $ at the rate $\lambda _{T}{\gg }
1/v_{T}(1)$. For $b_0=b(\theta_0)$, define 
$$
Z_{T}^{0}={\Sigma }_{T}({\theta }_{0})\{{\eta(y) }-{b}%
_{0}\}, \quad Z_{T}={\Sigma }_{T}({\theta }_{0})\{{\eta }({z})-{b}({\theta })\},$$
with
\begin{equation*}
\begin{split}
{\Sigma }_{T}({\theta }_{0})\{{\eta }({z})-{\eta }({y})\}& ={\Sigma }_{T}({%
\theta }_{0})\{{\eta }({z})-{b}({\theta })\}+{\Sigma }_{T}({\theta }_{0})\{{b%
}({\theta })-{b}_{0}\}-{\Sigma }_{T}({\theta }_{0})\{{\eta }({y})-{b}_{0}\}
\\
& =Z_{T}+{\Sigma }_{T}(%
{\theta }_{0})\{{b}({\theta })-{b}_{0}\}-Z_{T}^{0}.
\end{split}%
\end{equation*}%
For fixed ${\theta }$, 
\begin{align*}
\Vert {\Sigma }_{T}^{-1}({\theta }_{0})&\left[ {\Sigma }_{T}({\theta }_{0})\{{%
\eta }({z})-{b}({\theta })\}-\Sigma_{T}({\theta }_{0})\{{b}({\theta })-{b}_{0}\}\right] \Vert \\
&\qquad \asymp \Vert {D}_{T}^{-1}\left[ {%
\Sigma }_{T}({\theta }_{0})\{{\eta }({z})-{b}({\theta })\}-\Sigma_{T}({\theta }_{0})\{{b}({\theta })-{b}_{0}\}\right] \Vert
\end{align*}%
and 
\begin{equation*}
{\Sigma }_{T}({\theta _{0}})\{{b}({\theta })-{b_{0}}\}-Z_{T}^{0}={\Sigma }%
_{T}({\theta _{0}})\nabla _{\theta }{b}({\theta _{0}})({\theta }-{\theta _{0}%
})\{1+o(1)\}-Z_{T}^{0}\in B.
\end{equation*}

%%%%%%
\noindent {{Case (i)} :} We have $\lim_{T\to\infty}v_{T}(1)\varepsilon
_{T}=\infty $. Consider $x({\theta })=\varepsilon _{T}^{-1}\{{b}({\theta })-{%
b}_{0}\}$ and $f_{T}({\theta }-{\theta }_{0})=f\{\varepsilon _{T}^{-1}({%
\theta }-{\theta }_{0})\}$, where $f(\cdot)$ is a non-negative, continuous and
bounded function. On the event $\Omega _{n,0}(M)=\{\Vert Z_{T}^{0}\Vert \leq
M/2\}$, which has probability smaller than $\epsilon $ by choosing $M$ large
enough, we have that 
\begin{equation*}
P_{{\theta }}\left( \Vert Z_{T}-Z_{T}^{0}\Vert \leq M\right) \geq P_{{\theta 
}}\left( \Vert Z_{T}\Vert \leq M/2\right) \geq 1-\frac{c({\theta })}{%
M^{\kappa }}\geq 1-\frac{c_{0}}{M^{\kappa }}\geq 1-\epsilon 
\end{equation*}%
for all $\Vert {\theta }-{\theta }_{0}\Vert \leq \lambda _{T}$. Since, ${%
\eta }({z})-{\eta }({y})={\Sigma }_{T}^{-1}({\theta }_{0})(Z_{T}-Z_{T}^{0})+%
\varepsilon _{T}x(\theta)$, we have that on $\Omega _{n,0}$, 
\begin{equation*}
\begin{split}
P_{{\theta }}\left\{ \Vert {\Sigma }_{T}^{-1}({\theta }%
_{0})(Z_{T}-Z_{T}^{0})+\varepsilon _{T}x(\theta)\Vert \leq \varepsilon _{T}\right\} &
\geq P_{{\theta }}\left[ \Vert {\Sigma }_{T}^{-1}({\theta }%
_{0})(Z_{T}-Z_{T}^{0})\Vert \leq \varepsilon _{T}\{1-\Vert x(\theta)\Vert \}\right]  \\
& \geq P_{{\theta }}\left[ \Vert Z_{T}-Z_{T}^{0}\Vert \leq v
_{T}(1)\varepsilon _{T}\{1-\Vert x(\theta)\Vert \}\right] \geq 1-\epsilon 
\end{split}%
\end{equation*}%
as soon as $\Vert x(\theta)\Vert \leq 1-M/\{v_{T}(1)\varepsilon _{T}\}$ with $M$ as
above. This, combined with the continuity of $\pi(\cdot )$ at ${\theta }_{0}$
and assumption \ref{[A3"]}, implies that 
%C%R% \begin{equation}
%C%R% \begin{split}
\begin{align}
& \int f\{\varepsilon _{T}^{-1}({\theta }-{\theta }_{0})\}d\Pi _{\varepsilon
}\left\{ {\theta } \mid \eta(y)\right\}  \nonumber\\
& \quad =\frac{\int_{\Vert {\theta }-{\theta }_{0}\Vert _{\leq }\lambda
_{T}}f\{\varepsilon _{T}^{-1}({\theta }-{\theta }_{0})\}P_{{\theta }}\left\{
\Vert {\Sigma }_{T}^{-1}({\theta }_{0})(Z_{T}-Z_{T}^{0})+\varepsilon
_{T}x(\theta)\Vert \leq \varepsilon _{T}\right\} d{\theta }}{\int_{\Vert {\theta }-{%
\theta }_{0}\Vert _{\leq }\lambda _{T}}P_{{\theta }}\left\{ \Vert {\Sigma }%
_{T}^{-1}({\theta }_{0})(Z_{T}-Z_{T}^{0})+\varepsilon _{T}x(\theta)\Vert \leq
\varepsilon _{T}\right\} d{\theta }}\{1+o(1)\}+o_{P}(1) \nonumber\\
& \quad =\frac{\int_{\Vert x({\theta })\Vert \leq 1-M/\{v_{T}(1)\varepsilon
_{T}\}}f\{\varepsilon _{T}^{-1}({\theta }-{\theta }_{0})\}d{\theta }}{%
\int_{\Vert x({\theta })\Vert \leq 1-M/\{v_{T}(1)\varepsilon _{T}\}}d{\theta }%
}\{1+o(1)\} \nonumber\\
&+\frac{\int_{\Vert {\theta }-{\theta }_{0}\Vert _{\leq }\lambda
_{T}}1\!\mathrm{l}_{\Vert x({\theta })\Vert >1-M/\{v_{T}(1)\varepsilon
_{T}\}}f\{\varepsilon _{T}^{-1}({\theta }-{\theta }_{0})\}P_{{\theta }}\left\{
\Vert {\Sigma }_{T}^{-1}({\theta }_{0})(Z_{T}-Z_{T}^{0})+\varepsilon
_{T}x(\theta)\Vert \leq \varepsilon _{T}\right\} d{\theta }}{\int_{\Vert x({\theta }%
)\Vert \leq 1-M/\{v_{T}(1)\varepsilon _{T}\}}d{\theta }} \nonumber\\
&\qquad\qquad %+o_{P}(1).
%C%R% \end{split}
\label{decompEf}
\end{align}
%C%R% \end{equation}%
The first term is approximately equal to 
\begin{equation*}
N_{1}={\int_{\Vert {b}(\varepsilon _{T}u+{\theta }_{0})-{b}_{0}\Vert
\leq 1}f(u)du}\big/{\int_{\Vert {b}(\varepsilon _{T}u+{\theta }_{0})-{b}_{0}\Vert
\leq 1}du}
\end{equation*}%
and the regularity of the function ${\theta }\mapsto {b}({\theta })$
implies that 
\begin{equation*}
\int_{\Vert {b}(\varepsilon _{T}u+{\theta }_{0})-{b}_{0}\Vert \leq
\varepsilon _{T}}du=\int_{\Vert \nabla _{\theta }{b}({\theta _{0}})u\Vert
\leq 1}du+o(1)=\int_{u^{\intercal}B_{0}u\leq 1}du+o(1)
\end{equation*}%
with $B_{0}=\nabla _{\theta }{b}({\theta _{0}})^{\intercal}\nabla _{\theta }{b}({%
\theta _{0}})$. This leads to 
\begin{equation*}
N_{1}={\int_{u^{\intercal}B_{0}u\leq 1}f(u)du}\big/{\int_{u^{\intercal}B_{0}u\leq 1}du}.
\end{equation*}%
The second integral ratio in the right hand side of \eqref{decompEf}
converges to $0$. It can be split into an integral over $1+M/\{v
_{T}(1)\varepsilon _{T}\}\geq \Vert x({\theta })\Vert \geq 1-M/\{v
_{T}(1)\varepsilon _{T}\}$ and another over $1+M/\{v_{T}(1)\varepsilon
_{T}\}\leq \Vert x({\theta })\Vert $. The first part $N_2$ is bounded as
\begin{equation*}
N_{2}\leq \frac{\Vert f\Vert _{\infty }\int_{1+M/\{v_{T}(1)\varepsilon
_{T}\}\geq \Vert x({\theta })\Vert >1-M/\{v_{T}(1)\varepsilon _{T}\}}d{%
\theta }}{\int_{\Vert x({\theta })\Vert \leq 1-M/\{v_{T}(1)\varepsilon
_{T}\}}d{\theta }}\lesssim \{v_{T}(1)\varepsilon _{T}\}^{-1}=o(1)
\end{equation*}%
Since
\begin{equation*}
\begin{split}
& P_{{\theta }}\left\{ \Vert {\Sigma }_{T}^{-1}({\theta }%
_{0})(Z_{T}-Z_{T}^{0})+\varepsilon _{T}x(\theta)\Vert \leq \varepsilon _{T}\right\}
\leq P_{{\theta }}\left\{ \Vert {\Sigma }_{T}^{-1}({\theta }%
_{0})(Z_{T}-Z_{T}^{0})\Vert \geq \varepsilon _{T}\Vert x(\theta)\Vert -\varepsilon
_{T}\right\}  \\
& \qquad \leq P_{{\theta }}\left[ \Vert Z_{T}-Z_{T}^{0}\Vert \geq v
_{T}(1)\varepsilon _{T}\{\Vert x(\theta)\Vert -1\}\right] \leq {c_{0}}{[v
_{T}(1)\varepsilon _{T}\{\Vert x(\theta)\Vert -1\}]^{-\kappa }},
\end{split}%
\end{equation*}%
the second part of the second term, $N_3$, 
which is the integral over $\Vert x({\theta })\Vert >1+M/\{v _{T}(1)\varepsilon _{T}\}$, 
is bounded by
\begin{equation*}
\begin{split}
%N_{3}& =
&\frac{\int_{\Vert {\theta }-{\theta }_{0}\Vert _{\leq }\lambda
_{T}}1\!\mathrm{l}_{\Vert x({\theta })\Vert >1+M/\{v_{T}(1)\varepsilon
_{T}\}}f\{\varepsilon _{T}^{-1}({\theta }-{\theta }_{0})\}P_{{\theta }}\left\{
\Vert {\Sigma }_{T}^{-1}({\theta }_{0})(Z_{T}-Z_{T}^{0})+\varepsilon
_{T}x(\theta)\Vert \leq \varepsilon _{T}\right\} d{\theta }}{\int_{\Vert x({\theta }%
)\Vert \leq 1-M/\{v_{T}(1)\varepsilon _{T}\}}d{\theta }} \\
& \lesssim M^{-\kappa }\varepsilon _{T}^{-k_{{\theta }}}\int_{2\geq \Vert
x(\theta)\Vert >1+M/\{v_{T}(1)\varepsilon _{T}\}}d{\theta }+2^{\kappa }\varepsilon
_{T}^{-k_{{\theta }}}\int_{2\leq \Vert x({\theta })\Vert }\{v
_{T}(1)\varepsilon _{T}\Vert x({\theta })\Vert \}^{-\kappa }d{\theta } \\
& \lesssim M^{-\kappa }+\varepsilon _{T}^{-k_{{\theta }}}\int_{c_{1}%
\varepsilon _{T}\leq \Vert {\theta }-{\theta }_{0}\Vert }\{v_{T}(1)\Vert
\nabla _{\theta }{b}({\theta }_{0})({\theta }-{\theta }_{0})\Vert
\}^{-\kappa }d{\theta }\lesssim M^{-\kappa },
\end{split}%
\end{equation*}%
provided $\kappa > 1$. Since $M$ can be chosen arbitrarily
large, putting $N_{1},N_{2}$ and $N_{3}$ together, we obtain that the approximate Bayesian computation
posterior distribution of $\varepsilon _{T}^{-1}({\theta }-{\theta }_{0})$
is asymptotically uniform over the ellipsoid $\{w:w^{\intercal}B_{0}w\leq 1\}$ and 
{(i)} is proved. \bigskip 

\noindent {{Case (ii)} :} We have $\infty >\lim_{T\to\infty}v
_{T}(1)\varepsilon _{T}=c>0$ and $\lim_{T\to\infty}v_{T}(k_{\eta })\varepsilon
_{T}=\infty $. We consider $$f_{T}({\theta }-{\theta }_{0})=\1_{{%
\Sigma }_{T}({\theta }_{0})\{{b}({\theta })-{b}_{0}\}-Z_{T}^{0}\in B}.$$ 
With an obvious abuse of notation, we let 
$$x={\Sigma }_{T}({\theta }_{0})\{{b}({\theta })-{b}_{0}\}-Z_{T}^{0}.$$

We choose $k_{0}$ such that, for all $j\leq k_{0}$, $\lim_{T\to\infty}v_{T}(j)\varepsilon
_{T}=c$ and for all $j>k_{0}$, $\lim_{T\to\infty}v_{T}(j)\varepsilon _{T}=\infty $. 
We write ${\Sigma }_{T}({\theta }_{0})={A}_{T}({\theta }_{0}){D}_{T}$, so
that ${A}_{T}({\theta }_{0})\rightarrow A({\theta }_{0})$ as $T\rightarrow\infty$, where ${A}({\theta }_{0})$ is positive definite and symmetric. Then,
\begin{equation*}
\begin{split}
P_{{\theta }}\left( \Vert {\Sigma }_{T}^{-1}({\theta }_{0})\left[ {\Sigma }%
_{T}({\theta }_{0})\{{\eta }({z})-{b}({\theta })\}-x\right] \Vert \leq
\varepsilon _{T}\right) & =P_{{\theta }}\left\{ \Vert {D}_{T}^{-1}{A}%
_{T}^{-1}({\theta }_{0})\left( Z_{T}-x\right) \Vert \leq \varepsilon
_{T}\right\} \\
& =P_{{\theta }}\left\{ \Vert {D}_{T}^{-1}(\tilde{Z}_{T}-x_{T})\Vert \leq
\varepsilon _{T}\right\},
\end{split}%
\end{equation*}%
where $\tilde{Z}_{T}={A}_{T}^{-1}({\theta }_{0})Z_{T}\longrightarrow \mathcal{N}%
\{0,{A}({\theta_0 })I_{k_{\eta }}{A}({\theta_0 })^{\intercal }\}$ and $x_{T}={A}%
_{T}^{-1}({\theta }_{0})x={A}^{-1}({\theta }_{0})x+o_{P}(1)$.

We then have for $M_{T}\rightarrow \infty $, such that $M_{T}\{v
_{T}(k_{0}+1)\varepsilon _{T}\}^{-2}=o(1)$, 
\begin{equation}
\begin{split}
P_{{\theta }}\big\{ &\Vert D_{T}^{-1}(\tilde{Z}_{T}-x_{T})\Vert \leq
\varepsilon _{T}\big\}  \leq P_{{\theta }}\left[ \sum_{j=1}^{k_{0}}\{\tilde{%
Z}_{T}(j)-x_{T}(j)\}^{2}\leq v_{T}(1)^{2}\varepsilon _{T}^{2}\right] \\
& \geq P_{{\theta }}\left( \sum_{j=1}^{k_{0}}\{\tilde{Z}_{T}(j)-x_{T}(j)\}^{2}%
\leq v_{T}(1)^{2}\varepsilon _{T}^{2}\left[1-M_{T}\{v
_{T}(k_{0}+1)\varepsilon _{T}\}^{-2}\right]\right) \\
& \quad -P_{{\theta }}\left[\sum_{j=k_{0}+1}^{k}\{\tilde{Z}%
_{T}(j)-x_{T}(j)\}^{2}>M_{T}^{-1}\{\varepsilon _{T}v
_{T}(k_{0}+1)\}^{-2}\right] \\
& \geq P_{{\theta }}\left( \sum_{j=1}^{k_{0}}\{\tilde{Z}_{T}(j)-x_{T}(j)\}^{2}%
\leq v_{T}(1)^{2}\varepsilon _{T}^{2} \left[1-M_{T}\{v
_{T}(k_{0}+1)\varepsilon _{T}\}^{-2}\right]\right) -o(1).
\end{split}
\label{decomp:inf}
\end{equation}%
This implies that, for all $x$ and all $\Vert {\theta }-{\theta }_{0}\Vert
\leq \lambda _{T}$ 
\begin{equation*}
P_{{\theta }}\left\{ \Vert {D}_{T}^{-1}(\tilde{Z}_{T}-x_{T})\Vert \leq
\varepsilon _{T}\right\} =P_{{\theta }}\left( \sum_{j=1}^{k_{0}}\left[
\left\{ {A}^{-1}({\theta }_{0})Z_{T}\right\} (j)-\{{A}^{-1}({\theta }%
_{0})x\}(j)\right] ^{2}\leq c\right) +o(1).
\end{equation*}%
Since ${A}^{-1}({\theta }_{0})x={D}_{T}{\nabla_{\theta}b}({\theta }_{0})({%
\theta }-{\theta }_{0})-{A}^{-1}({\theta }_{0})Z_{T}^{0}$, if 
$$
\text{  Leb}\left(
\sum_{j=1}^{k_{0}}\left[ \left\{ {\nabla_{\theta}b}({\theta }_{0})({\theta }-{%
\theta _{0}})\right\} _{[j]}\right]^{2}\leq c\varepsilon _{T}^{2}\right)
=\infty\,,
$$
then as in case {(i)} we can bound 
\begin{align*}
\Pi _{\varepsilon }&\left[ {\Sigma }_{T}({\theta }_{0})\{{b}(\theta)-{b}%
_{0}\}-Z_{T}^{0}\in B \mid \eta(y)\right] \\
&\quad \leq \frac{\int_{A^{-1}({\theta }%
_{0})x\in B}P_{{\theta }}\left( \sum_{j=1}^{k_{0}}\left[ \left\{
{A}^{-1}({\theta }_{0})Z_{T}\right\} (j)-z(j)\right] ^{2}\leq c\right) d{\theta }}{%
\int_{ \|{\theta } \| \leq M}P_{{\theta }}\left( \sum_{j=1}^{k_{0}}\left[ \left\{
{A}^{-1}({\theta }_{0})Z_{T}\right\} (j)-z(j)\right] ^{2}\leq c\right) d{%
\theta }}+o_{P}(1),
\end{align*}%
which goes to zero when $M$ goes to infinity. Since $M$ can be chosen
arbitrarily large, (12) is proven.

\bigskip

\noindent {{Case (iii)} :} We have $\lim_{T\to\infty}v_{T}(1)\varepsilon
_{T}=0$ and $\lim_{T\to\infty}v_{T}(k_{\eta })\varepsilon _{T}=\infty $. Again we
consider $$f_{T}({\theta }-{\theta }_{0})=\1_{{\Sigma }_{T}({\theta 
}_{0})\{{b}({\theta })-{b}_{0}\}-Z_{T}^{0}\in B}$$ and $x({\theta })={\Sigma }%
_{T}({\theta }_{0})\{{b}({\theta })-{b}_{0}\}-Z_{T}^{0}$. As in the
computations producing \eqref{decomp:inf}, and under assumption \ref{[A6"]}, we have
\begin{equation*}
\begin{split}
P_{{\theta }}& \left\{ \Vert {D}_{T}^{-1}(\tilde{Z}_{T}-x_{T})\Vert \leq
\varepsilon _{T}\right\}  \leq P_{{\theta }}\left[ \sum_{j=1}^{j_{\max }}v
_{T}(j)^{-2}\{\tilde{Z}_{T}(j)-x_{T}(j)\}^{2}\leq \varepsilon _{T}^{2}\right] \\
& \geq P_{{\theta }}\left[ \sum_{j=1}^{j_{\max }}v_{T}(j)^{-2}\{\tilde{Z}%
_{T}(j)-x_{T}(j)\}^{2}\leq \varepsilon _{T}^{2}/2\right]  - P_\theta \left[ \sum_{j \geq j_{\max } }\{\tilde{Z}_{T}(j)-x_{T}(j)\}^{2} > \epsilon_T^2 v_T(j_{\max }+1)^2/2\right]\\
& \geq \varphi _{j_{\max }}(x_{[k_{1}]})\{1+o(1)\}\prod_{j=1}^{j_{\max }}\{v
_{T}(j)\varepsilon _{T}\} -c_0 \{\epsilon_T v_T(j_{\max }+1)/2\}^{-\kappa}
\end{split}
\end{equation*}
uniformly when $\Vert {\theta }-{\theta }_{0}\Vert <\lambda _{T}$ where $\varphi
_{j_{\max }}$ is the zero mean Gaussian density in $j_{\max }$ dimensions,
with covariance $\{{A}({\theta }_{0})^{2}\}_{[j_{\max }]}$ . Since $\{\epsilon_T v_T(j_{\max }+1)/2\}^{-\kappa} = o \left[\prod_{j=1}^{j_{\max }}\{v
_{T}(j)\varepsilon _{T}\} \right]$ this
implies, as in case {(ii)}, that with probability going to one 
\begin{equation*}
\limsup_{T\to\infty}\Pi _{\varepsilon }\left[ {\Sigma }_{T}({\theta }_{0})\{{b}(\theta)-{b}%
_{0}\}-Z_{T}^{0}\in B \mid \eta(y)\right] \lesssim \frac{\int_{{A}({\theta }%
_{0})B}\varphi _{j_{\max }}(x_{[j_{\max }]})dx}{\int_{ \| x \| \leq M}\varphi
_{j_{\max }}(x_{[j_{\max }]})dx}\lesssim  M^{-(k_{\eta }-j_{\max })}
\end{equation*}%
and choosing $M$ arbitrary large leads to equation (11) in the text.\bigskip

\noindent {{Case (iv)} :} We have $\lim_{T\to\infty}v
_{T}(j)\varepsilon _{T}=c>0$ for all $j\leq k_{\eta }$. To prove equation
(13) in the text, we use the computation of case {(ii)} with $%
k_{0}=k_{\eta }$, so that \eqref{decomp:inf} implies that for all $x$
\begin{equation*}
\begin{split}
P_{{\theta }}\left\{ \Vert {D}_{T}^{-1}(\tilde{Z}_{T}-x)\Vert \leq
\varepsilon _{T}\right\} & =P_{{\theta }}\left\{ \Vert \tilde{Z}%
_{T}-x\Vert ^{2}\leq v_{T}(1)^{2}\varepsilon _{T}^{2}\right\} \\
& = P\left\{ \Vert {A}^{-1}({\theta }_{0})  Z_\infty -{A}^{-1}({\theta }_{0})x\Vert
^{2}\leq c^{2}\right\} +o(1)\\
&
\end{split}%
\end{equation*}%
uniformly in $\| \theta - \theta_0\|\leq \delta $ where $Z_\infty \sim \mathcal N(0, I_{k_\eta})$.  

We set $u = \{ \nabla_\theta b(\theta_0)^\intercal \Sigma_T(\theta_0 )^\intercal
\Sigma_T(\theta_0) \nabla b(\theta_0)\}^{1/2}(\theta- \theta_0) $. When
$\|\theta -\theta^*\|\leq \lambda_T$, $x(\theta) = \Sigma_T(\theta_0) \nabla
b(\theta_0) (\theta - \theta_0) ( 1 + o(1)) = x(u) \{1+ o(1)\} $, we can write
$x(\theta) = x(u) $ and $\|x(u)\|= \|u\|$. 
%For all $t >0$ choosing $M$ large enough, and when $T$ is large enough,
Choosing $M$ and $T$ large enough, by the dominated convergence theorem,
\begin{equation*}
\begin{split}
\Pi _{\varepsilon }\left[ {\Sigma }_{T}({\theta }_{0})\{{b}(\theta)-{b}%
_{0}\}\right.&-\left.Z_{T}^{0}\in B \mid \eta(y)\right] \leq \frac{\int_{x(u)\in B}P\left\{ \Vert {A}^{-1}({\theta }_{0})Z_{\infty}-{A}^{-1}({\theta }_{0})x(u) \Vert
^{2}\leq c^{2}\right\} du+o_p(1) }{\int_{ \| u \| \leq M}P\left\{ \Vert {A}^{-1}(%
{\theta }_{0})Z_{\infty}-{A}^{-1}({\theta }_{0})x(u)\Vert ^{2}\leq c^{2}\right\} du +o_p(1)}
\\
& \geq \frac{\int_{x(u) \in B}P\left\{ \Vert {A}^{-1}({\theta }%
_{0})Z_{\infty}-{A}^{-1}({\theta }_{0})x(u)\Vert ^{2}\leq c^{2}\right\} du+ o_p(1)}{%
\int_{ \| u \| \leq M}P\left\{ \Vert {A}^{-1}({\theta }_{0})Z_{\infty}-{A}%
^{-1}({\theta }_{0})x(u) \Vert ^{2}\leq c^{2}\right\} du +o_p(1) }
\end{split}%
\end{equation*}%
Since $M$ can be chosen arbitrarily large and since, when $M$ goes to infinity, we have
\begin{equation*}
\int_{ \| u \| \leq M}P\left\{ \Vert \tilde{Z}_{\infty}-{A}^{-1}({\theta }%
_{0})x(u) \Vert ^{2}\leq c^{2}\right\} du \rightarrow \int_{u\in \mathbb{R}%
^{k_{\theta }}}P\left\{ \Vert \tilde{Z}_{\infty}-{A}^{-1}({\theta }%
_{0})x(u) \Vert ^{2}\leq c^{2}\right\} du <\infty ,
\end{equation*}%
the result follows. \medskip

\noindent {{Case (v)} :} We have $\lim_{T\to\infty}v_{T}(k)\varepsilon
_{T}=0.$ Take ${\Sigma }_{T}({\theta }_{0})={A}_{T}({\theta }_{0}){D}_{T}$.
For some $\delta >0$ and all $\Vert {\theta }-{\theta }_{0}\Vert \leq \delta 
$, 
\begin{flalign*}
P_{\theta}\left[\left\|{D}_{T}^{-1}\{{A}_{T}^{-1}({\theta}_{0}){Z}_{T}-{A}_{T}^{-1}({\theta}_{0})x\}\right\|\leq \varepsilon_{T}\right]&=
%P_{\theta}\left[\left\|{D}_{T}^{-1}\{{A}_{}^{-1}({\theta}_{0}){Z}_{T}-{A}_{}^{-1}({\theta}_{0})x\}\right\|\leq \varepsilon_{T}\right]+o(1)\\&=P_{{\theta}}\left\{\left\|{A}_{}^{-1}({\theta}_{0}){Z}_{T}-{A}_{}^{-1}({\theta}_{0})x\right\|^{2}\leq v_{T}^{2}(k)\varepsilon^{2}_{T}\right\}+o(1)\\&=
P_{{\theta}}\left[\left\{{A}_{T}^{-1}({\theta}_{0}){Z}_{T}-{A}_{T}^{-1}({\theta}_{0})x\right\}\in B_{T}\right]+o(1). 
\end{flalign*}From both assertions of assumption \ref{[A6"]} and by the dominated
convergence theorem, the above implies for $j_{\max }=k_{\eta }$ that
\begin{flalign*}
\frac{1}{\prod_{j=1}^{k_{\eta}}\varepsilon_{T}v_{T}(j) }
\int P_{{\theta}}\left[\left\{{A}_{}^{-1}({\theta}_{0}){Z}_{T}-{A}_{}^{-1}({\theta}_{0})x\right\}\in B_{T}\right]dx=\int_{}\varphi_{k_{\eta}}(x)dx+o(1)=1+o(1)\,.
\end{flalign*}Likewise, similar arguments yield 
\begin{flalign*}
\frac{1}{\prod_{j=1}^{k_{\eta}}\varepsilon_{T}v_{T}(j) }\int_{}\1_{x\in B}P_{{\theta}}\left[\left\{{A}_{}^{-1}({\theta}_{0}){Z}_{T}-{A}_{}^{-1}({\theta}_{0})x\right\}\in B_{T}\right]dx&=\int_{}\1_{x\in B}\varphi_{k_{\eta}}(x)dx+o(1)\\&=\Phi_{k_{\eta}}(B)+o(1). 
\end{flalign*}Together, these two equivalences yield the result in case (v).
\end{proof}

\subsection{Proof of Theorem 3}

%C%R% \begin{proof}

\noindent {{Case (i)} :}
Define $b=b(\theta )$ and $b_{0}=b(\theta _{0})$ and, with a slight abuse of notation, in this proof we let $Z_{T}^{0}=v_{T}\{\eta ({y})-b_0\}$ and $x=v _{T}(b-b_{0})-Z_{T}^{0}.$  We approximate the ratio 
\begin{equation*}
E_{\Pi _{\varepsilon }}\{v_{T}({b}-b_{0})\}-Z_{T}^{0}=\frac{N_{T}}{D_{T}}=
\frac{\int xP_{x}\left\{  \mid \eta ({z})-\eta ({y}) \mid \leq \varepsilon _{T}\right\}
\pi\{b_{0}+(x+Z_{T}^{0})/v_{T}\}dx}{\int P_{{x}}\left\{  \mid \eta ({z})-\eta ({y}
) \mid \leq \varepsilon _{T}\right\} \pi\{b_{0}+(x+Z_{T}^{0})/v_{T}\}dx}
\end{equation*}%
We first approximate the numerator $N_{T}$: $v_{T}\{\eta ({z})-\eta ({y}%
)\}=v_{T}(\eta ({z})-b)+x$ and $b=b_{0}+(x+Z_{T}^{0})/v_{T}$.
Denote $Z_{T}=v_{T}\{\eta ({z})-b\}$, then 
\begin{equation}
\begin{split}
N_{T}& =\int xP_{{x}}\left\{  \mid \eta ({z})-\eta ({y}) \mid \leq \varepsilon
_{T}\right\} \pi\{b_{0}+(x+Z_{T}^{0})/v_{T}\}dx \\
& =\int_{ \mid x \mid \leq v_{T}\varepsilon _{T}-M}xP_{{x}}\left(  \mid Z_{T}+x \mid \leq v
_{T}\varepsilon _{T}\right) \pi\{b_{0}+(x+Z_{T}^{0})/v_{T}\}dx \\
& +\int_{ \mid x \mid \geq v_{T}\varepsilon _{T}-M}xP_{{x}}\left( \mid Z_{T}+x \mid \leq v
_{T}\varepsilon _{T}\right) \pi\{b_{0}+(x+Z_{T}^{0})/v_{T}\}dx,
\end{split}
\label{decomp:NT}
\end{equation}%
where the condition $\lim_{T}v_{T}\varepsilon _{T}=\infty $ is used in
the representation of the real line over which the integral defining $N_{T}$
is specified.

{We start by studying the first integral term in \eqref{decomp:NT}.} If $%
0\leq x\leq v_{T}\varepsilon _{T}-M$, then 
\begin{equation*}
\begin{split}
1\geq P_{{x}}\left(  \mid Z_{T}+x \mid \leq v_{T}\varepsilon _{T}\right) & =1-P_{{x}%
}\left( Z_{T}>v_{T}\varepsilon _{T}-x\right) -P_{{x}}\left( Z_{T}<-v
_{T}\varepsilon _{T}-x\right) \\
& \geq 1-2(v_{T}\varepsilon _{T}-x)^{-\kappa }.
\end{split}%
\end{equation*}%
Using a similar argument for $x\leq 0$, we obtain, for all $ \mid x \mid \leq v
_{T}\varepsilon _{T}-M$, 
\begin{equation*}
1-2(v_{T}\varepsilon _{T}- \mid x \mid )^{-\kappa }\leq P_{{x}}\left(  \mid Z_{T}+x \mid \leq
v_{T}\varepsilon _{T}\right) \leq 1
\end{equation*}%
and choosing $M$ large enough implies that if $\kappa >2$, 
\begin{equation*}
\begin{split}
N_{1}& =\int_{ \mid x \mid \leq v_{T}\varepsilon _{T}-M}x P_{{x}}\left(
 \mid Z_{T}+x \mid \leq v_{T}\varepsilon _{T}\right) \pi\{b_{0}+(x+Z_{T}^{0})/v_{T}\}dx \\
& =\int_{ \mid x \mid \leq v_{T}\varepsilon _{T}-M}x\pi\{b_{0}+(x+Z_{T}^{0})/v_{T}\}dx+O(M^{-\kappa +2})\,.
\end{split}%
\end{equation*}%
A Taylor expansion of $\pi\{b_{0}+(x+Z_{T}^{0})/v_{T}\}$ around $\gamma
_{0}=b_{0}+Z_{T}^{0}/v_{T}$ then leads to, for $\nabla_{b}^{j}\pi(\theta)$ 
denoting the $j$-th derivative of $\pi(b)$ with respect to $b$,  
\begin{equation*}
\begin{split}
N_{1}& =2\sum_{j=1}^{k}\frac{\nabla_{b}^{(2j-1)}\pi(\gamma _{0})}{(2j-1)!(2j+1)v
_{T}^{2j-1}}(\varepsilon _{T}v_{T})^{2j+1}+O(M^{-\kappa
+2})+O(\varepsilon _{T}^{2+\beta }v_{T}^{2})+o_{P}(1) \\
& =2v_{T}^{2}\sum_{j=1}^{k}\frac{\nabla_{b}^{(2j-1)}\pi(\gamma _{0})}{(2j-1)!(2j+1)}%
\varepsilon _{T}^{2j+1}+O(M^{-\kappa +2})+O(\varepsilon
_{T}^{2+\beta }v_{T}^{2})+o_{P}(1),
\end{split}%
\end{equation*}%
where $k=\lfloor \beta /2\rfloor .$ We split the second
integral of \eqref{decomp:NT} over $v_{T}\varepsilon _{T}-M\leq  \mid x \mid \leq
v_{T}\varepsilon _{T}+M$ and over $ \mid x \mid \geq v_{T}\varepsilon _{T}+M$. We
treat the latter as before: with probability going to one, 
\begin{equation*}
\begin{split}
 \mid N_{3} \mid & \leq \int_{ \mid x \mid \geq v_{T}\varepsilon _{T}+M} \mid x \mid P_{{x}}\left(
 \mid Z_{T}+x \mid \leq v_{T}\varepsilon _{T}\right) \pi\{b_{0}+(x+Z_{T}^{0})/v
_{T}\}dx \\
& \leq \int_{ \mid x \mid \geq v_{T}\varepsilon _{T}+M}\frac{ \mid x \mid c\{b_{0}+%
(x+Z_{T}^{0})/v_{T}\}}{( \mid x \mid -v_{T}\varepsilon _{T})^{\kappa }}%
\pi\{b_{0}+(x+Z_{T}^{0})/v_{T}\}dx \\
& \leq c_{0}\sup_{|x|\geq v_{T}\varepsilon_{T}}| \pi(x)|\int_{v_{T}\varepsilon _{T}+M\leq
 \mid x \mid \leq \delta v_{T}}\frac{ \mid x \mid }{( \mid x \mid -v_{T}\varepsilon _{T})^{\kappa }}dx+%
\frac{v_{T}}{(\delta v_{T})^{\kappa -1}}\int c(\theta )d\Pi (\theta)\\
& \lesssim M^{-\kappa +2}+O(v_{T}^{-\kappa +2})\,.
\end{split}%
\end{equation*}%
Finally, we study the second integral term for $N_{T}$ in \eqref{decomp:NT}
over $v_{T}\varepsilon _{T}-M\leq  \mid x \mid \leq v_{T}\varepsilon _{T}+M$.
Using the assumption that $\pi(\cdot )$ is H\"{o}lder we obtain that 
\begin{equation*}
\begin{split}
 \mid N_{2} \mid & =\left\vert \int_{v_{T}\varepsilon _{T}-M}^{v_{T}\varepsilon
_{T}+M}xP_{{x}}\left(  \mid Z_{T}+x \mid \leq v_{T}\varepsilon _{T}\right)
\pi\{b_{0}+(x+Z_{T}^{0})/v_{T}\}dx\right. \\
& \quad \left. +\int_{-v_{T}\varepsilon _{T}-M}^{-v_{T}\varepsilon
_{T}+M}xP_{{x}}\left(  \mid Z_{T}+x \mid \leq v_{T}\varepsilon _{T}\right)
\pi\{b_{0}+(x+Z_{T}^{0})/v_{T}\} dx\right\vert \\
& \leq \pi(b_{0})\left\vert \int_{v_{T}\varepsilon _{T}-M}^{v
_{T}\varepsilon _{T}+M}xP_{{x}}\left(  \mid Z_{T}+x \mid \leq v_{T}\varepsilon
_{T}\right) dx+\int_{-v_{T}\varepsilon _{T}-M}^{-v_{T}\varepsilon
_{T}+M}xP_{{x}}\left(  \mid Z_{T}+x \mid \leq v_{T}\varepsilon _{T}\right)
dx\right\vert \\
& \qquad +L\cdot M\varepsilon _{T}^{1+\beta \wedge 1}v_{T}^{\beta \wedge
1}+o_{P}(1) \\
& \lesssim \left\vert v_{T}\varepsilon _{T}\int_{-M}^{M}\left\{
P_{y}\left( Z_{T}\leq -y\right) -P_{y}\left( Z_{T}\geq -y\right) \right\}
dy\right\vert \\
& +\left\vert v_{T}\varepsilon _{T}\int_{-M}^{M}y\left\{ P_{y}\left(
Z_{T}\leq -y\right) +P_{y}\left( Z_{T}\geq -y\right) \right\} dy\right\vert
+O(M\varepsilon _{T}^{1+\beta \wedge 1}v_{T}^{\beta \wedge 1})+o_{P}(1),
\end{split}%
\end{equation*}%
with $M$ fixed but arbitrarily large. By the dominated convergence theorem
and the locally uniform Gaussian limit of $Z_{T}$,{\ for any arbitrarily large, but fixed $M$%
, } 
\begin{equation*}
\int_{-M}^{M}\left\{ P_{y}\left( Z_{T}\leq -y\right) -P_{y}\left( Z_{T}\geq
-y\right) \right\} dy=Mo(1)
\end{equation*}%
and 
\begin{equation*}
\int_{-M}^{M}y\left\{ P_{y}\left( Z_{T}\leq -y\right) +P_{y}\left( Z_{T}\geq
-y\right) \right\} dy=\int_{-M}^{M}y\left\{ 1+o(1)\right\} dy=M^{2}o(1).
\end{equation*}%
This implies that 
\begin{equation*}
N_{2}\lesssim M^{2}o(v_{T}\varepsilon _{T})+M\varepsilon _{T}^{1+\beta
\wedge 1}v_{T}^{\beta \wedge 1}+o_{P}(1).
\end{equation*}%
Therefore, regrouping all terms, and since $\varepsilon _{T}^{1+\beta \wedge
1}v_{T}^{\beta \wedge 1}=o(v_{T}\varepsilon _{T})$ for all $\beta >0$ and
$\varepsilon _{T}=o(1)$, we obtain the representation 
\begin{equation*}
N_{T}=2v_{T}^{2}\sum_{j=1}^{k}\frac{\nabla_{b}^{(2j-1)}\pi(\gamma _{0})}{(2j-1)!(2j+1)%
}\varepsilon _{T}^{2j+1}+M^{2}o(v_{T}\varepsilon _{T})+O(M^{-\kappa
+2})+O(v_{T}^{-\kappa +2})+O(\varepsilon _{T}^{2+\beta }v
_{T}^{2})+o_{P}(1)\,.
\end{equation*}
We now study the denominator in a similar manner. This leads to 
\begin{equation*}
\begin{split}
D_{T}& =\int P_{{x}}\left\{  \mid \eta ({z})-\eta ({y}) \mid \leq \varepsilon
_{T}\right\} \pi\{b_{0}+(x+Z_{T}^{0})/v_{T}\}dx \\
& =\int_{ \mid x \mid \leq v_{T}\varepsilon _{T}-M}\pi\{b_{0}+(x+Z_{T}^{0})/v
_{T}\}\{1+o(1)\}dx+O(1) \\
& =2\pi(b_{0})v_{T}\varepsilon _{T} \{1+o_{P}(1)\}.
\end{split}%
\end{equation*}%
Combining $D_{T}$ and $N_{T}$, we obtain, 
% since $v_{T}^{\beta\wedge 1-1}\varepsilon _{T}^{\beta\wedge 1} = o(1)$ when
$\varepsilon _{T}=o(1)$, 
\begin{equation}
\frac{N_{T}}{D_{T}}=v_{T}\sum_{j=1}^{k}\frac{\nabla_{b}^{(2j-1)}\pi(b_{0})}{%
\pi(b_{0})(2j-1)!(2j+1)}\varepsilon _{T}^{2j}+o_{P}(1)+O(\varepsilon
_{T}^{1+\beta }v_{T}) . \label{postmean}
\end{equation}%
Using the definition of $N_{T}/D_{T}$, dividing \eqref{postmean} by $v
_{T} $, and rearranging terms yields 
\begin{equation*}
E_{\Pi_{\varepsilon}}\left( b-b_{0}\right) =\frac{Z_{T}^{0}}{v_{T}}%
+\sum_{j=1}^{k}\frac{\nabla_{b}^{(2j-1)}\pi(b_{0})}{\pi(b_{0})(2j-1)!(2j+1)}\varepsilon
_{T}^{2j}+O(\varepsilon _{T}^{1+\beta })+o_{P}(1/v_{T}),
\end{equation*}%ger 
To obtain the posterior mean of $\theta $, we write 
\begin{equation*}
\theta =b^{-1}\{b(\theta )\}=\theta _{0}+\sum_{j=1}^{\lfloor \beta \rfloor }%
\frac{\{b(\theta )-b_{0}\}^{j}}{j!}\nabla_{b}^{(j)}b^{-1}(b_{0})+R(\theta ),
\end{equation*}%
where $ \mid R(\theta ) \mid \leq L \mid b(\theta )-b_{0} \mid ^{\beta }$ provided $ \mid b(\theta
)-b_{0} \mid \leq \delta $. We compute the approximate Bayesian mean of $\theta $ by splitting the
range of integration into $ \mid b(\theta )-b_{0} \mid \leq \delta $ and $ \mid b(\theta
)-b_{0} \mid >\delta $. A Cauchy-Schwarz inequality leads to 
\begin{equation*}
\begin{split}
E_{\Pi _{\varepsilon }}\left\{  \mid \theta -\theta _{0} \mid \right.&\left.1\!\mathrm{l}_{ \mid b(\theta
)-b_{0} \mid >\delta }\right\}\\ 
& =\frac{1}{2\varepsilon _{T}v
_{T}\pi(b_{0})\{1+o_{P}(1)\}}\int_{ \mid b(\theta )-b_{0} \mid >\delta } \mid \theta -\theta
_{0} \mid P_{\theta }\left\{  \mid \eta ({z})-\eta ({y}) \mid \leq \varepsilon _{T}\right\}
\pi(\theta )d\theta  \\
& \leq {2^{\kappa }v_{T}^{-\kappa }\delta ^{-\kappa }\left\{ \int_{\Theta }(\theta -\theta
_{0})^{2}\pi(\theta )d\theta \right\} ^{1/2}\left\{ \int_{\Theta }c(\theta
)^{2}\pi(\theta )d\theta \right\} ^{1/2}} \{1+o_{P}(1)\} \\
& =o_{P}(1/v_{T}),
\end{split}%
\end{equation*}%
provided $\kappa >1$. To control the term over $|b(\theta)-b_0|\leq\delta$, we use computations
similar to earlier ones so that 
\begin{equation*}
E_{\Pi _{\varepsilon }}\left\{ (\theta -\theta _{0})1\!\mathrm{l}%
_{ \mid b(\theta )-b_{0} \mid \leq \delta }\right\} =\sum_{j=1}^{\lfloor \beta \rfloor }%
\frac{\nabla_{b}^{(j)}b^{-1}(b_{0})}{j!}E_{\Pi _{\varepsilon }}\left[ \{b(\theta
)-b_{0}\}^{j}\right] +o_{P}(1/v_{T}),
\end{equation*}%
where, for $j\geq 2$ and $\kappa >j+1$, 
\begin{equation*}
\begin{split}
E_{\Pi _{\varepsilon }}\left[ \{b(\theta )-b_{0}\}^{j}\right] & =\frac{1}{%
v_{T}^{j}}\frac{\int_{ \mid x \mid \leq \varepsilon _{T}v
_{T}-M}x^{j}\pi\{b_{0}+(x+Z_{T}^{0})/v_{T}\}dx}{2\varepsilon _{T}v
_{T}\pi(b_{0})}+o_{P}(1/v_{T}) \\
& =\sum_{l=0}^{k}\frac{\nabla_{b}^{(l)}\pi(b_{0})}{2\varepsilon _{T}v
_{T}^{j+l+1}\pi(b_{0})l!}\int_{ \mid x \mid \leq \varepsilon _{T}v
_{T}-M}x^{j+l}dx+o_{P}(1/v_{T})+O(\varepsilon _{T}^{1+\beta }) \\
& =\sum_{l=\lceil j/2\rceil }^{\lfloor (j+k)/2\rfloor }\frac{\varepsilon
_{T}^{2l}\nabla_{b}^{(2l-j)}\pi^{}(b_{0})}{\pi(b_{0})(2l-j)!}+o_{P}(1/v_{T})+O(\varepsilon
_{T}^{1+\beta }).
\end{split}%
\end{equation*}%
This implies, in particular, that 
\begin{equation*}
E_{\Pi _{\varepsilon}}( \theta -\theta _{0}) =\frac{
Z_{T}^{0}\{\nabla_{b}b^{-1}(b_{0})\}}{v_{T}}+\sum_{j=1}^{\lfloor \beta \rfloor }
\frac{\nabla_{b}^{(j)}b^{-1}(b_{0})}{j!}\sum_{l=\lceil j/2\rceil }^{\lfloor
(j+k)/2\rfloor }\frac{\varepsilon _{T}^{2l}\nabla_{b}^{(2l-j)}\pi(b_{0})}{\pi(b_{0})(2l-j)!
}+o_{P}(1/v_{T})+O(\varepsilon_{T}^{1+\beta}).
\end{equation*}
Hence, if $\varepsilon _{T}^{2}=o(1/v_{T})$ and $\beta \geq 1$, 
\begin{equation*}
E_{\Pi _{\varepsilon }}\left( \theta -\theta _{0}\right)= \{\nabla_{\theta}b(\theta _{0})\}^{-1}{Z_{T}^{0}}/{v_{T}}+o_{P}(1/v_{T})
\end{equation*}%
and $E_{\Pi _{\varepsilon }}\left\{ v_{T}(\theta -\theta _{0})\right\}
\longrightarrow \mathcal{N}[0,V(\theta_{0})/\{\nabla_{\theta}b(\theta _{0})\}^{2}]$, while if $%
v_{T}\varepsilon _{T}^{2}\rightarrow \infty $ 
\begin{equation*}
E_{\Pi _{\varepsilon }}\left( \theta -\theta _{0}\right) =\varepsilon
_{T}^{2}\left[ \frac{\nabla_{b}\pi(b_{0})}{3\pi(b_{0})\nabla_{\theta}b(\theta _{0})}-%
\frac{\nabla_{\theta}^{(2)}b(\theta _{0})}{2\{\nabla_{\theta}b(\theta _{0})\}^{2}}\right]
+O(\varepsilon _{T}^{4})+o_{P}(1/v_{T}),
\end{equation*}%
assuming $\beta \geq 3$.
\medskip 

\noindent {{Case (ii)} :} Recall
that $b=b(\theta )$, $b_0=b(\theta_0)$, and define 
\begin{equation*}
E_{\Pi _{\varepsilon }}\left( b\right) =\int \frac{bP_{b}\left\{  \mid \eta ({y}%
)-\eta ({z}) \mid \leq \varepsilon _{T}\right\} \pi(b)db}{\int P_{b}\left\{  \mid \eta ({y%
})-\eta ({z}) \mid \leq \varepsilon _{T}\right\} \pi(b)db}.
\end{equation*}%
Considering the change of variables $b\mapsto x=v_{T}(b-b_{0})-Z_{T}^{0}$
and using the above equation we have 
\begin{equation*}
E_{\Pi _{\varepsilon }}\left( b\right) =\int \frac{( b_{0}+(x+Z_{T}^{0})/v
_{T})P_{x}\left\{  \mid \eta ({y})-\eta ({z}) \mid \leq \varepsilon _{T}\right\}
\pi\{b_{0}+(x+Z_{T}^{0})/v_{T}\}dx}{\int P_{x}\left\{  \mid \eta ({y})-\eta ({z}%
) \mid \leq \varepsilon _{T}\right\} \pi\{b_{0}+(x+Z_{T}^{0})/v_{T}\}dx},
\end{equation*}%
which can be rewritten as 
\begin{flalign*}
E_{\Pi _{\varepsilon }}\left\{v_{T}({b}-b_{0})\right\}-Z^{0}_{T}=\int \frac{x
P_{x}\left\{ \mid \eta({y})-\eta({z}) \mid \leq\varepsilon_{T}\right\}\pi\{ b_0 +
(x+Z^{0}_{T})/v_{T}\}dx}{\int
P_{x}\left\{ \mid \eta({y})-\eta({z}) \mid \leq\varepsilon_{T}\right\}\pi\{ b_0 +
(x+Z^{0}_{T})/v_{T}\}dx}.
\end{flalign*}
Recalling that $v_{T}\{\eta ({z})-\eta ({y})\}=v_{T}\{\eta (%
{z})-b\}+v_{T}(b-b_{0})-Z_{T}^{0}=Z_{T}+x$ we have 
\begin{flalign*}
E_{\Pi _{\varepsilon }}\left\{v_{T}(b-b_{0})\right\}-Z^{0}_{T}=\int \frac{x P_{x}\left( \mid Z_{T}+x \mid \leq v_{T}\varepsilon_{T}\right)\pi\{b_{0}+(x+Z_{T}^{0})/v_{T}\} dx}{\int P_{x}\left( \mid Z_{T}+x \mid \leq v_{T}\varepsilon_{T}\right)\pi
\{ b_0+ (x+Z^{0}_{T})/v_{T}\} dx}=\frac{N_{T}}{D_{T}}.
\end{flalign*}By injectivity of the map $\theta \mapsto b(\theta )$
in assumption \ref{[A3]} and assumption \ref{[A4]}, the result follows
when $E_{\Pi _{\varepsilon }}\left\{ v_{T}({b}-b_{0})\right\} -Z_{T}^{0}=o_{P}(1)$.

Consider first the denominator. Define $h_{T}=v_{T}\varepsilon _{T}$ and $V_0=V(\theta_0)=\lim_{T\to\infty}\text{var}[v_{T}\{\eta(y)-b_0\}]$. Using arguments that mirror those in the proof of Theorem 2 part (v), by
assumption \ref{[A6"]} and the dominated convergence theorem 
\begin{flalign*}
\frac{D_{T}}{\pi(b_{0})h_{T}}=h_T^{-1}\int {P_{x}( \mid Z_{T}+x \mid \leq h_{T})}dx+o_P(1)=\int
\varphi(x/V_{0}^{1/2}) dx+o_P(1)=1+o_P(1),
\end{flalign*}where the second equality follows from assumption \ref{[A6]}
and the dominated convergence theorem. The result follows if $N_{T}/h_{T}=o_{P}(1)$. To this
end, define $P_{x}^{\ast }( \mid Z_{T}+x \mid \leq h_{T})=P_{x}( \mid Z_{T}+x \mid \leq
h_{T})/h_{T}$ and, if $h_{T}=o(1)$ by assumptions \ref{[A6]} and \ref{[A7]}, 
\begin{flalign*}
\frac{N_{T}}{h_{T}}&= \int {x P_{x}^{*}( \mid Z_{T}+x \mid \leq h_{T})\pi\{b_{0}+(x+Z_{T}^{0})/v_{T}\}} dx\\
&=\pi(b_{0})\int x 
\varphi(x/V_{0}^{1/2}) dx+\int x\left\{ P_{x}^{*}( \mid Z_{T}+x \mid \leq h_{T})-\varphi(x/V_{0}^{1/2})\right\}\\
&\qquad \times \pi\{b_{0}+(x+Z_{T}^{0})/v_{T}\} dx +o_P(1).
\end{flalign*}
If $h_{T}\rightarrow c>0$, then 
\begin{align}
%C%R% \begin{equation}
%C%R% \begin{split}
\frac{N_{T}}{h_{T}}& =\pi(b_{0})\int x\cdot\text{pr}\left\{ \mid \mathcal{N}(0,1)+x/V_{0}^{1/2} \mid \leq c/V_{0}^{1/2}\right\} dx \nonumber\\
& \quad +\int x\left[ P_{x}^{\ast }( \mid Z_{T}+x \mid \leq h_{T})-\text{pr}\{ \mid \mathcal{N}%
(0,1)+x/V_{0}^{1/2} \mid \leq c/V_{0}^{1/2}\}\right]\pi\{b_{0}+(x+Z_{T}^{0})/v_{T}\} dx \nonumber\\
&\qquad +o_{P}(1).
%C%R% \end{split}
\label{bis}
%C%R% \end{equation}
\end{align}
The result follows if 
$$\int x\left\{ P_{x}^{\ast }( \mid Z_{T}+x \mid \leq
h_{T})-\varphi ( x/V_{0}^{1/2})\right\} \pi\{b_{0}+(x+Z_{T}^{0})/v
_{T}\}dx=o_{P}(1),$$ 
respectively, $P_{x}^{\ast }( \mid Z_{T}+x \mid \leq h_{T})-\text{pr}\{ \mid \mathcal{N}%
(0,1)+x/V_{0}^{1/2} \mid \leq c/V_{0}^{1/2}\}=o(1)$, for which a
sufficient condition is that 
\begin{equation}
\int  \mid x \mid \left\vert P_{x}^{\ast }( \mid Z_{T}+x \mid \leq h_{T})-\varphi
(x/V_{0}^{1/2}) \right\vert \pi\{b_{0}+(x+Z_{T}^{0})/v_{T}\}dx=o_{P}(1),
\label{show}
\end{equation}%
or the equivalent in the case $h_{T}\rightarrow c>0$.

To show that the integral in \eqref{show} is $o_{P}(1)$ we break the region
of integration into three areas: (i) $ \mid x \mid \leq M$; (ii) $M\leq  \mid x \mid \leq \delta
v_{T}$; (iii) $ \mid x \mid \geq \delta v_{T}$.

\medskip

\noindent{Area
(i):} When $ \mid x \mid \leq M$, the following equivalences are satisfied: 
\begin{flalign*}
\sup_{x:  \mid x \mid \leq
M}& \mid \pi\{b_{0}+(x+Z_{T}^{0})/v_{T}\}-\pi(b_{0}) \mid =o_{P}(1),\\
\sup_{  \mid \theta  -\theta^* \mid \leq
1/v_T}& \mid P^{*}_{\theta}( \mid Z_{T}+x \mid \leq h_{T})-\varphi( x/V_{0}^{1/2}) \mid =o_{P}(1).%, \quad \forall \mid x\mid \leq M 
\end{flalign*}The first equation is satisfied by assumption \ref{[A7]} and the fact
that by assumption \ref{[A4]} $Z_{T}^{0}/v_{T}=o_{P}(1)$. The second term follows
from assumption \ref{[A7]} . We can now conclude that equation \eqref{show}
is $o_{P}(1)$ over $ \mid x \mid \leq M$, using the dominated convergence theorem.

The same holds for the first term in equation \eqref{bis}, without requiring assumption \ref{[A7]}.

\medskip

\noindent{Area (ii):} When $M\leq |x| \leq \delta v_{T}$, the integral of the
second term is finite and can be made arbitrarily small for $M$
large enough. Therefore, it suffices to show that 
\begin{equation*}
\int_{M\leq |x| \leq \delta v_{T}}|x| P_{x}^{\ast }(\mid Z_{T}+x\mid \leq
h_{T})\pi\{b_{0}+(x+Z_{T}^{0})/v_{T}\}dx
\end{equation*}%
is finite.

When $|x| >M$, $\mid Z_{T}+x\mid \leq h_{T}$ implies that $\mid Z_{T}\mid >|x| /2$ since $%
h_{T}=O(1)$. Hence, using assumption \ref{[A1"]}, 
\begin{equation*}
|x| P_{x}^{\ast }(\mid Z_{T}+x\mid \leq h_{T})\leq |x| P_{x}^{\ast
}(\mid Z_{T}\mid >|x| /2)\leq c_{0}\frac{|x| }{|x| ^{\kappa }},
\end{equation*}%
which in turns implies that 
\begin{equation*}
\int_{M\leq |x| \leq \delta v_{T}}P^{\ast }(\mid Z_{T}+x\mid \leq
h_{T})\pi\{b_{0}+(x+Z_{T}^{0})/v_{T}\} dx\leq C\int_{M\leq |x| \leq \delta
v_{T}}\frac{1}{|x| ^{\kappa -1}}dx\leq M^{-\kappa +2}.
\end{equation*}%
The same computation can be conducted in case \eqref{bis}.

\medskip

\noindent{Area (iii):} When $|x| \geq \delta v_{T}$ the second term is again
negligible for $\delta v_{T}$ large. Our focus then becomes 
\begin{equation*}
N_{3}={\frac{1}{h_{T}}}\int_{|x| \geq \delta v_{T}}{|x| }P_{x}^{\ast
}(\mid Z_{T}+x\mid \leq h_{T})\pi\{b_{0}+(x+Z_{T}^{0})/v_{T}\}dx.
\end{equation*}%
By assumption \ref{[A1']}, for some $\kappa >2$ we can bound $N_{3}$ as follows: 
\begin{flalign*}
N_{3}=&\frac{1}{h_{T}}\int_{|x| \geq \delta v_{T}}|x| P_{x}(\mid x+Z_{T}\mid \leq
h_{T})\pi\{b_{0}+(x+Z_{T}^{0})/v_{T}\}dx\\&\leq \frac{1}{h_{T}}\int_{|x| \geq
\delta v_{T}}\frac{ |x| c(b_{0}+(x+Z^{0}_{T})/v_{T})
}{(1+|x| -h_{T})^{\kappa}}\pi\{b_{0}+(x+Z_{T}^{0})/v_{T}\}dx\\&\lesssim\frac{v_{T}^{2}}{h_{T}}\int_{\mid b-\eta({y})\mid \geq\delta}\frac{{c(b)}
\mid b-\eta({y})\mid  }{\{1+v_{T}\mid b-\eta({y})\mid -h_{T}\}^{\kappa}}\pi(b)db.
\end{flalign*}Since $\eta ({y})=b_{0}+O_{P}(1/v_{T})$ we have, for $T$
large,

\begin{flalign*}
N_{3}&\lesssim\frac{v_{T}^{2}}{h_{T}}\int_{\mid b-b_{0}\mid \geq\delta/2}\frac{{c(b)}|b| \pi(b)}{(1+v_{T}\delta-h_{T})^{\kappa}}db\lesssim  \frac{v_{T}^{2}}{h_{T}}\left\{\int {c(b)} |b| \pi(b)db\right\}O(v_{T}^{-k})\lesssim O(v_{T}^{1-\kappa}\varepsilon_{T})=o(1),
\end{flalign*}
where assumptions \ref{[A7]} and \ref{[A1"]} ensure $\int {c(b)} |b| \pi(b)db<\infty$.
The same computation can be conducted in case \eqref{bis}.

Combining the results for the three areas we can conclude that $%
N_{T}/D_{T}=o_{P}(1)$ and the result follows.
%C%R% \end{proof}

\subsection{Proof of Theorem 4}
\label{sec:prTh4}

The proof follows the same lines as the proof of Theorem 3, with some extra
technicalities due to the multivariate nature of $\theta$. Define $%
G_0=\nabla_{\theta}b(\theta_0)$, $b_0=b(\theta_0)$ and let $Z_{T}^{0}=v_{T}\{\eta(y)-b_0\}$ and $$x(\theta) = v_T(\theta-\theta_0) -
(G_0^\intercal G_0)^{-1}G_0^\intercal Z_T^0.$$
% and where  $A(\theta_0) = \lim_T v_T^{-1} \Sigma_T(\theta_0) $. 
We show that $E_{\Pi_{\epsilon}}\left\{ x(\theta) \right\} = o_p(1)$. We
write 
\begin{equation*}
E_{\Pi_{\epsilon}}\left\{ x(\theta) \right\} = \frac{ \int_{\Theta}
x(\theta) P_\theta\left\{ \| {\eta}( z)- {\eta(y)} \|\leq \varepsilon_T
\right\} \pi(\theta)d\theta }{ \int_{\Theta} P_\theta\left\{\| {\eta}( z)- {%
\eta(y)} \|\leq \varepsilon_T \right\} \pi(\theta)d\theta } = \frac{ N_T}{ D_T%
},
\end{equation*}
and study the numerator and denominator separately. Since for all $\epsilon
>0$ there exists $M_\epsilon>0$ such that, for all $M> M_\epsilon$, $%
P_{\theta_0}( \| Z_T^0\|>M/2) < \epsilon$, we can restrict ourselves to the
event $\|Z_T^0\|\leq M/2$ for some $M$ large.

We first study the numerator $N_T$ and we split $\Theta$ into $\{ \|G_0
x(\theta) \|\leq v_T \varepsilon_T - M\}$, $\{v_T\varepsilon_T - M \leq
\|G_0 x(\theta) \|\leq v_T \varepsilon_T + M\}$ and $\{ \|G_0 x(\theta) \|>
v_T \varepsilon_T + M\}$. The first integral is equal to 
\begin{equation*}
\begin{split}
I_1 &=\pi(\theta_0) \int_{ \|G_0 x(\theta) \|\leq v_T \varepsilon_T - M}
\{x(\theta) + O(v_T\varepsilon_T^2)\} P_\theta\left\{\| {\eta}( z)- {\eta(y)}%
\|\leq \varepsilon_T \right\} d\theta \\
& = \pi(\theta_0) \int_{ \|G_0 x(\theta) \|\leq v_T \varepsilon_T - M}
\{x(\theta) + O(v_T\varepsilon_T^2)\} d\theta \\
&-\pi(\theta_0) \int_{ \|G_0 x(\theta) \|\leq v_T \varepsilon_T - M}
\{x(\theta) + O(v_T\varepsilon_T^2)\} P_\theta\left\{ \| {\eta}( z)- {\eta(y)}
\|> \varepsilon_T \right\} d\theta .
\end{split}%
\end{equation*}
The first term in $I_{1}$ can be made arbitrarily small for $M$ large
enough. For the second term in $I_{1}$, we note 
\begin{equation*}
\begin{split}
v_{T}\varepsilon_{T}<\| v_T\{{\eta}( z)- {\eta(y)}\} \| & =\| Z_T - Z_T^0 +
v_TG_0 ( \theta - \theta_0) \| + O(\|\theta- \theta_0\|^2 ) \\
&= \| Z_T - P^{\perp}_{G_0} Z_T^0 + G_0x(\theta) \| + O(\|\theta-
\theta_0\|^2 ) \\
& \leq \|Z_T\| + \|P^{\perp}_{G_0} Z_T^0\| + \|G_0x(\theta) \| + O(\|\theta-
\theta_0\|^2 ) \\
& \leq \|Z_T\| + M/2 + \|G_0x(\theta) \| + O(\|\theta- \theta_0\|^2 ) \,,
\end{split}%
\end{equation*}
where $P^{\perp}_{G_0}$ is the orthogonal projection onto the space that is
orthogonal to $G_0$. Therefore, if $%
\|G_0x(\theta) \|\leq v_T \varepsilon_T - M$, then 
\begin{equation*}
M/2\leq v_T \varepsilon_T - M/2 - \|G_{0}x(\theta) \|\leq\|Z_T \|.
\end{equation*}
Hence, the second term of the right hand side of $I_1$ is bounded by a
%C%R% constant times 
term proportional to
\begin{equation*}
\begin{split}
& \int_{ \|G_{0} x(\theta) \|\leq v_T \varepsilon_T - M} 2\|G_{0} x(\theta)
\| P_\theta\left\{ \|Z_T\| > \varepsilon_Tv_T - M/2 - \|G_{0}x(\theta) \|
\right\} d\theta \\
& \lesssim \int_{ \|G_{0} x(\theta) \|\leq v_T \varepsilon_T - M} \frac{
\|G_{0} x(\theta) \|}{\{ v_T\varepsilon_T - M/2 - \|G_{0}x(\theta) \|
\}^\kappa } d\theta \\
&\lesssim v_T^{-k_\theta} \int_0^{v_T \varepsilon_T - M} \frac{ r^{k_\theta}%
}{ ( v_T\varepsilon_T - M/2 -r)^\kappa}dr \lesssim \varepsilon_T^{k_\theta}
M^{-\kappa}\,.
\end{split}%
\end{equation*}
The integral over $\{ \|G_{0} x(\theta) \|> v_T \varepsilon_T + M\}$, $I_3$,
is treated similarly. This leads to $\|I_1+I_3\| \leq M^{-\kappa}
\varepsilon_T^{k_\theta}$.

Likewise, using similar arguments we can show 
\begin{equation*}
D_T \gtrsim \int_{ \|G_{0} x(\theta) \|\leq v_T \varepsilon_T -
M}P_\theta\left\{\| \eta(z)-\eta(y) \|\leq \varepsilon_T \right\}
d\theta\gtrsim \varepsilon_T^{k_\theta}.
\end{equation*}

All that remains is to prove that the second integral $I_2$, the integral
over $\{v_{T}\varepsilon_{T}-M\leq \|G_0x(\theta)\|\leq
v_{T}\varepsilon_{T}+M\}$, is $o_p(\varepsilon_T^{k_\theta})$, with 
\begin{equation*}
\begin{split}
I_2 &= \int_{v_T \varepsilon_T - M\leq \|G_{0} x(\theta) \|\leq v_T
\varepsilon_T + M}\{x(\theta)+O(v_T\varepsilon_T^2)\} P_\theta\left\{\|
\eta(z)-\eta(y) \|\leq \varepsilon_T \right\} d\theta.
\end{split}%
\end{equation*}
Since 
\begin{equation*}
v_T^2\| \eta(z)-\eta(y) \|^2 = \| Z_T - P^{\perp}_{G_0}Z_T - G_{0} x(\theta)
\|^2 = \| Z_T- P^{\perp}_{G_0}Z_T^0\|^2 +\| G_{0} x(\theta) \|^2- 2 \langle
Z_T,G_{0} x(\theta) \rangle,
\end{equation*}%
where $\langle\cdot,\cdot\rangle$ is the inner product, setting $u =
(G_{0}^\intercal G_{0})^{1/2} x(\theta) \|G_{0} x(\theta)\|^{-1}$, $r =
\|G_{0} x(\theta)\|$, $\Gamma_0 = (G_{0}^\intercal
G_{0})^{-1/2}G_{0}^\intercal$, then, for $\mathcal{S }= \{u \in \mathbb{R}%
^{k_\theta} : \|u\|=1\}$, noting that $\theta = \theta(u,r)$ 
\begin{equation*}
\begin{split}
I_2 &= v_T^{-k_\theta}(G_{0}^\intercal G_{0})^{-1/2} \int_{v_T \varepsilon_T
- M}^{v_T \varepsilon_T + M}r^{k_\theta} \int_{u \in \mathcal{S}} u
P_\theta\left(\| Z_T- P^{\perp}_{G_0}Z_T^0\|^2 +r^2- 2r \langle\Gamma_0Z_T,u
\rangle \leq v_T^2\varepsilon_T^2 \right) dudr\\%d\text{ Leb}(u) dr \\
& \quad + O\left( v_T\varepsilon_T^{2+k_\theta}\right) \\
& = v_T^{-k_\theta}(G_{0}^\intercal G_{0})^{-1/2} \int_{ - M}^{ M}(v_T
\varepsilon_T+ r)^{k_\theta}\times \\
& \quad \int_{u \in \mathcal{S}} u P_\theta\left(\| Z_T-
P^{\perp}_{G_0}Z_T^0\|^2 - 2r \langle\Gamma_0Z_T,u \rangle - 2 \varepsilon_T
v_T \langle\Gamma_0Z_T,u \rangle \leq - r^2 - 2 rv_T\varepsilon_T \right) dudr \\
& \quad + O\left( v_T\varepsilon_T^{2+k_\theta}\right),
\end{split}%
\end{equation*}
where $du$ denotes the Lebesgue measure on $\mathcal{S}$.
Moreover, we have 
\begin{equation*}
\begin{split}
& P_\theta\left(\| Z_T- P^{\perp}_{G_0}Z_T^0\|^2 - 2r\langle\Gamma_0Z_T,u
\rangle - 2 \varepsilon_T v_T \langle\Gamma_0Z_T,u \rangle \leq - r^2 - 2
rv_T\varepsilon_T \right) \\
& = P_\theta\left\{ \langle\Gamma_0Z_T,u \rangle \geq \frac{r\varepsilon_T
v_T}{r+\varepsilon_T v_T} + \frac{\| Z_T- P^{\perp}_{G_0}Z_T^0\|^2 + r^2 }{
2(v_T\varepsilon_T+r) } \right\}
\end{split}%
\end{equation*}
and for any $a_T>M$ with $a_T = o(v_T \varepsilon_T)$, 
\begin{equation*}
\begin{split}
& P_{\theta}\left(\| Z_T- P^{\perp}_{G_0}Z_T^0\|^2 \geq a_T \right) \lesssim
c_0 a_T^{-\kappa/2}, \\
& \left| P_\theta\left( \langle\Gamma_0Z_T,u \rangle \geq r \right) -
P_\theta\left\{ \langle\Gamma_0Z_T,u \rangle \geq r - 2a_T/(v_T
\varepsilon_T) \right\} \right| = o(1),
\end{split}%
\end{equation*}
with for all $r$ and $u $, $P_\theta\left( \langle\Gamma_0 Z_T,u \rangle
\geq r \right) = \{1 -\Phi( r/ \| \Gamma_0 A(\theta_0)^{1/2}\| )\}+ o(1),$ uniformly over $\|\theta - \theta_0\|\leq \delta$
and $A(\theta_0)$ as in the proof of Theorem \ref{normal_thm}. Since for all 
$r \in [-M,M]$, $(v_T \varepsilon_T+ r)^{k_\theta} = (v_T
\varepsilon_T)^{k_\theta} + O(M (v_T \varepsilon_T)^{k_\theta-1}),$ the
dominated convergence theorem implies 
\begin{equation*}
\begin{split}
I_2 & = \varepsilon_T^{k_\theta}(G_{0}^\intercal G_{0})^{-1/2} \int_{ - M}^{
M} \int_{u \in \mathcal{S}} u \left[1 -\Phi\{ r/ \| \Gamma_0 A(\theta_0)^{1/2}\|
\}\right]dudr + o(\varepsilon_T^{k_\theta}) =
o(\varepsilon_T^{k_\theta}),
\end{split}%
\end{equation*}
which completes the proof.

\subsection{Proof of Corollary 1}

%\begin{proof}[proof of Corollary 1]
Consider first the case where $\varepsilon _{T}=o(v_{T}^{-1})$. Using the
same types of computations as in the proof of Theorem \ref{normal_thm"}, case (v), in this
Supplementary Material, we have, for $Z_{T}={\Sigma }_{T}({\theta }_{0})\{{%
\eta }({z})-{b}({\theta })\},$ 
\begin{equation*}
\begin{split}
\alpha _{T}& =\int_{{\Theta }}P_{{\theta }}\left[ \Vert
Z_{T}-Z_{T}^{0}-v_{T}\{{b}({\theta })-{b}({\theta }_{0})\}\Vert \leq
\varepsilon _{T}v_{T}\right] \pi ({\theta })d{\theta } \\
& \asymp (\varepsilon _{T}v_{T})^{k_{\eta }}\int_{{\Theta }}\varphi
\{Z_{T}^{0}+v_{T}\nabla _{\theta }{b}({\theta }_{0})({\theta }-{\theta }%
_{0})\}d{\theta _{{}}}\asymp \varepsilon _{T}^{k_{\eta }}v_{T}^{k_{\eta }-k_{%
{\theta }}}\lesssim v_{T}^{-k_{\theta }}.
\end{split}%
\end{equation*}%
In the case where $\varepsilon _{T}\gtrsim v_{T}^{-1}$, then the computations in cases (ii) and (iii) in the proof of
Theorem \ref{normal_thm"} imply
\begin{equation*}
\alpha _{T}=P_{\theta }\left\{ \Vert {\eta }({z})-{\eta }({y})\Vert \leq
\varepsilon _{T}\right\} \asymp \int_{\Theta }\varphi \{Z_{T}^{0}+
\nabla_{\theta}{b}({\theta }_{0})v_{T}({\theta }-{\theta }_{0})\}d{\theta }\asymp
\varepsilon _{T}^{k_{{\theta }}}.
\end{equation*}

%\end{proof}

\section{Illustrative Example}

In this section we illustrate the implications of Theorems 1--3 in the
moving model of order two that was introduced in Example 1. Consider
observations from the data generating process 
\begin{equation}
y_{t}=e_{t}+\theta _{1}e_{t-1}+\theta _{2}e_{t-2}\;\;(t=1,\dots,T) ,
\label{MA2"}
\end{equation}%
where $e_{t}\sim \mathcal{N}(0,1)$ is independently and identically
distributed. Our prior belief for $\theta =(\theta _{1},\theta
_{2})^{\intercal }$ is uniform over the invertibility region 
\begin{equation}
\left\{(\theta_1,\theta_2)^\intercal:\ -2\leq \theta _{1}\leq 2,\;\theta
_{1}+\theta _{2}\geq -1,\theta _{1}-\theta _{2}\leq 1\right\}.
\label{const1"}
\end{equation}%
We follow Marin \textit{et al.} (2011) and choose as summary statistics for
Algorithm 1 the sample autocovariances $\eta _{j}({y})=\frac{1}{T}%
\sum_{t=1+j}^{T}y_{t}y_{t-j}$, for $j=0,1,2$, so that ${\eta }( {y}) =\{\eta
_{0}({y}),\eta _{1}({y}),\eta _{2}({y})\}^{\intercal }$. The binding
function $b(\theta )$ then has the simple analytical form: 
\begin{equation*}
\theta \mapsto b(\theta )=%
\begin{bmatrix}
E_{\theta }(z_{t}^2) \\ 
E_{\theta }(z_{t}z_{t-1}) \\ 
E_{\theta }(z_{t}z_{t-2})%
\end{bmatrix}%
\equiv 
\begin{pmatrix}
1+\theta _{1}^{2}+\theta _{2}^{2} \\ 
\theta _{1}+\theta _{1}\theta _{2} \\ 
\theta _{2}%
\end{pmatrix}%
.
\end{equation*}

The following subsections demonstrate the implications of the limit results
in the main text within the confines of the above example. By simultaneously
shifting the sample size $T$ and the tolerance parameter $\varepsilon _{T}$
we can graphically illustrate Theorems 1--3.

Each demonstration considers minor variants of the following general
simulation design: the true parameter vector generating the observed data is
fixed at $\theta _{0}=(\theta _{1,0},\theta _{2,0})^{\intercal
}=(0.6,0.2)^{\intercal }$; for a given sample size, $T$, of 500, 1000,
50000, observed data, ${y}=(y_{1},\dots ,y_{T})^{\intercal }$, is generated
from the process in equation \eqref{MA2"}; the posterior density is estimated via Algorithm 1
with the tolerance chosen to be a particular order of $T$, and using $%
N=50,000$ Monte Carlo draws taken from uniform priors satisfying %
\eqref{const1"}. In these examples we take $d_{2}\{\eta (z),\eta (y)\}=\Vert
\eta (z)-\eta (y)\Vert $.

%C%R% simplification (?)
A central result in the main text is that the choice of $\varepsilon _{T}$ 
drives the large sample behavior of the approximate
posterior distribution and its mean. To highlight this fact, our numerical
experiments will use different choices for the tolerance. In particular, and
with reference to the illustration of Theorem 2, in the main text, the choices of $\varepsilon
_{T}$ are $\{1/T^{0.4},1/T^{0.5},1/T^{0.55}\}$. In this example, we
have that $v_{T}=T^{0.5}$ and the three tolerance choices represent 
{respectively } cases {(i)}, {(ii)} and {(iii)} of Theorem
2. Our use of {different} tolerances relays
%C%R% simplification (?)
the distinction between what condition on $%
\varepsilon _{T}$ is required for posterior concentration, and what is
required to yield asymptotic normality of the posterior measure. With regard
to Theorem 3,\ the choice $\varepsilon _{T}=1/T^{0.4}$ 
%C%R% is used to
highlights that asymptotic normality of the posterior mean can be achieved
despite a lack of Gaussianity for the posterior measure itself.

\subsection{Theorem 1}

Theorem 1 implies that under regularity, as $T\rightarrow \infty $, the
posterior measure $\Pi _{\varepsilon
}\{\cdot \mid \eta(y)\}$ concentrates on sets
containing $\theta _{0}$, namely $\Pi _{\varepsilon }\{d_{1}(\theta
,\theta)\leq\delta\mid\eta(y)\}=1+o_{P}(1)$ for all $\delta >0$,
%C%R%  For this to be satisfied it must be that 
provided $\varepsilon _{T}=o(1)$. To demonstrate this
concentration result, we take $\varepsilon _{T}=1/T^{0.4}$ and run Algorithm
1, taking $N=50,000$ draws from the prior. The results are presented in Fig. %
\ref{fig1"}. To keep the Monte Carlo error at a constant level, for each
sample size we retain 100 simulated values of $\theta $\ that lead to
realizations of $\Vert \eta (y)-\eta (z)\Vert $\ below the tolerance, in
agreement with the nearest-neighbor interpretation of algorithm 1.%

\begin{figure}[tbp]
\centering 
\setlength\figureheight{4.0cm} 
\setlength\figurewidth{6.5cm} 
\input{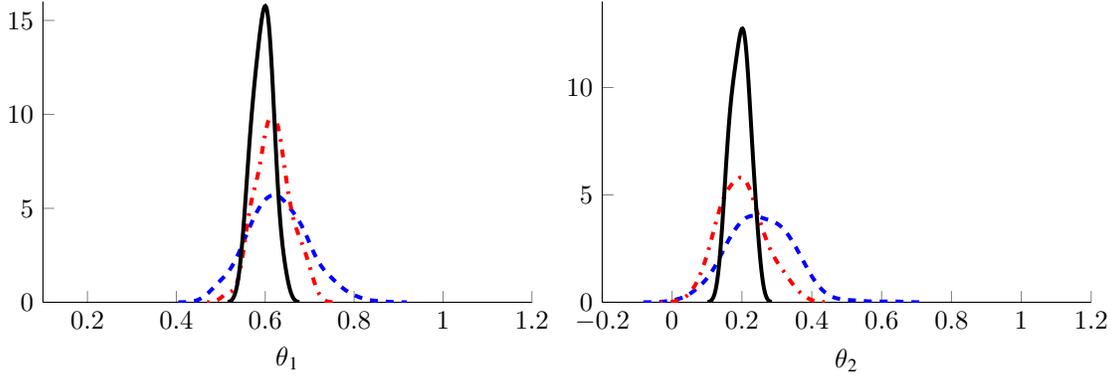}
\caption{Posterior concentration demonstration. Estimated approximate
posterior distributions across sample sizes $T$=500 ({\color{blue}\textbf{-
- -}}); $T$=1000 ({\color{red}\textbf{\--- $\cdot$ \---}}); $T$=5000 ({\ 
\color{black}\textbf{\----}}).}
\label{fig1"}
\end{figure}

Figure \ref{fig1"} shows that the posterior measure $\Pi _{\varepsilon
}\{\cdot \mid \eta(y)\}$ is concentrating on $\theta
_{0}=(0.6,0.2)^{\intercal }$ as $T$ increases. The results in 
Fig. \ref{fig1"}
reflect the fact that a tolerance proportional to $\varepsilon _{T}=1/T^{0.4}
$ will be small enough to yield posterior concentration. However, and with
reference to Theorem 2, this tolerance may or may not yield {%
asymptotic normality and, hence, correct asymptotic coverage of credible
intervals.} We explore this issue in the following section.

%ABC_consistency1

\subsection{Theorem 2}

Theorem 2 states that the shape of the approximate posterior measure is
determined in large part by the speed at which $\varepsilon _{T}$ goes to $0$. 
If this convergence is too slow, then the posterior measure will have {a
non-Gaussian} limiting shape. 
%The implication of this result is that if one were to construct confidence sets from this posterior measure, they will not have proper frequentist coverage (asymptotically). 

This result can be visualized by considering two alternative values for the
tolerance: $\varepsilon _{T}=1/T^{0.4}$ {and} $\varepsilon
_{T}=1/T^{0.55}$. Figure \ref{fig2} and Fig. \ref{fig3} display the resulting
approximate posterior density estimates using these two tolerance rules for
sample size $T=500$ and $T=1000$, respectively. 
\begin{figure}[h]
\centering 
\setlength\figureheight{3.2cm} 
\setlength\figurewidth{5.6cm} 
\input{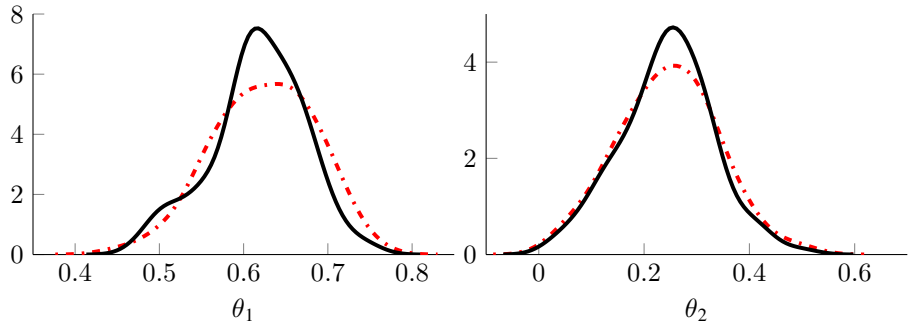}
\caption{Comparison of two tolerance rules for $\protect\varepsilon %
_{T}$ : $\protect\varepsilon _{T}=1/T^{0.4}$ ({\color{red}\textbf{\--- $%
\cdot $ \---}}); $\protect\varepsilon _{T}=1/T^{0.55}$ ({\color{black}%
\textbf{\----}}); The sample size is $T=500$. }
\label{fig2}
\end{figure}
\begin{figure}[h]
\centering 
\setlength\figureheight{3.2cm} 
\setlength\figurewidth{5.6cm} 
\input{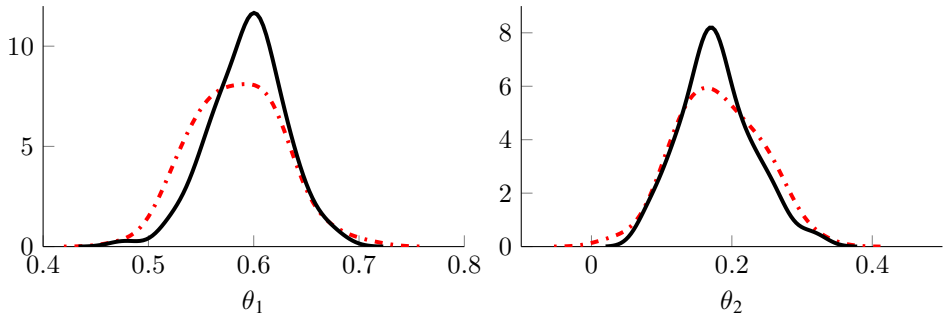}
\caption{Same information as Fig. \protect\ref{fig2} but for $T=1000$.}
\label{fig3}
\end{figure}
Figure \ref{fig2} demonstrates that at $T=500$ neither posterior density,
for $\theta _{1}$ or $\theta _{2}$ and across both tolerance rules, has a
shape that is particularly Gaussian. However, at $T=1000$, and for both $%
\theta _{1}$ and $\theta _{2}$, Fig. \ref{fig3} demonstrates that the
posterior densities based on the tolerance $\varepsilon _{T}=1/T^{0.55}$,
which satisfies the conditions for the Bernstein--von Mises result, appear
to be approximately Gaussian. In contrast, the approximate posterior
densities constructed from $\varepsilon _{T}=1/T^{0.4}$ display non-Gaussian
features.%

From a practical perspective, the key result of Theorem 2 is that for
credible regions built from $\Pi _{\varepsilon }\{\cdot \mid \eta(y)\}$ to
have asymptotically correct frequentist coverage, it must be that $%
\varepsilon _{T}=o(1/v_{T})$, where $v_{T}$ is such that $\Vert \eta
(z)-b(\theta )\Vert =O_{P}(1/v_{T})$. In the moving average model example, $%
v_{T}=T^{0.5}$ and Theorem 2 implies that choosing a tolerance $%
\varepsilon_{T}=1/T^{0.4}$, which corresponds to case (i) of Theorem 2, will
yield credible sets whose coverage converges to one asymptotically; choosing
a tolerance of $\varepsilon_{T}=1/T^{.55}$, which corresponds to case (iii)
of Theorem 2, will lead to asymptotically correct coverage rates; a
tolerance of $\varepsilon_{T}=1/T^{0.5}$ will yield coverage that is
asymptotically of the correct magnitude, in that the coverage will not be
zero or one, but will in general differ from the nominal level.

To demonstrate this point we generate 1000 observed artificial data sets
with sample sizes $T=500$ and $T=1000$, and for each data set we run
Algorithm 1 for {all} three alternative values of $\varepsilon _{T}$.
For a given sample, and a given tolerance, we produce the approximate
Bayesian computation posterior density in the manner described above and
compute the 95\% credible intervals for $\theta _{1}$ and $\theta _{2}$. The
average length and the Monte Carlo coverage rate, across the 1000
replications, is then recorded in Table \ref{tab2} for each scenario. The
average length of the credible regions is clearly larger, and the Monte
Carlo coverage further from the nominal value of 95\%, the further is the
tolerance from the value required to produce asymptotic Gaussianity, namely $%
\varepsilon _{T}=1/T^{0.55}$, which provides numerical support for the
theoretical results.

\begin{table}[tbph]
\tbl{Gaussianity of the approximate posterior distributions: the tolerances are $\protect\varepsilon _{1}=1/T^{0.4}$, $\protect\varepsilon_{2}=1/T^{0.5}$ and $\protect\varepsilon _{3}=1/T^{0.55}$}{\begin{tabular}{rrrrrrr}
& Width &  & &Cov. & & \\ 
T=500 & $\varepsilon_1$ & $\varepsilon_2$ & $\varepsilon_3$ & $\varepsilon_1$ & $\varepsilon_2$ & $\varepsilon_3$ \\ 
$\theta_1$ & 0.2602 & 0.2294 & 0.2198 & 96.30 & 95.60 & 95.60 \\ 
$\theta_2$ & 0.3212 & 0.3108 & 0.3086 & 98.30 & 97.00 & 96.00 \\ 
T=1000 &  &  &  &  &  &  \\ 
$\theta_1$ & 0.1823 & 0.1573 & 0.1484 & 96.80 & 96.20 & 95.50 \\ 
$\theta_2$ & 0.2366 & 0.2244 & 0.2219 & 96.60 & 94.30 & 94.50 \\&&&&&& \\
\end{tabular}}%
\begin{tabnote}
Width stands for average length and Cov.~for Monte Carlo coverage rate.
\end{tabnote}
\label{tab2}
\end{table}

\subsection{Theorem 3}

The key result of Theorem 3 is that even when $\Pi _{\varepsilon }\{\cdot \mid
\eta(y)\}$ is not asymptotically Gaussian, the posterior mean
associated with Algorithm 1, $\hat{\theta}=E_{\Pi_{\varepsilon} }(\theta),$
can still be asymptotically Gaussian, and asymptotically unbiased so long as 
$\lim_{{T}}v_{T}\varepsilon _{T}^{2}=0$. However, as proven in Theorem 2,
the corresponding confidence regions and uncertainty measures built from $%
\Pi _{\varepsilon }\{\cdot \mid\eta(y)\}$ will only be an adequate
reflection on the actual uncertainty associated with $\hat{\theta}$ if $%
\varepsilon _{T}=o(1/v_{T})$.

In this section we once again generate 1000 observed data sets of a given
sample size ($T=500$ and $T=1000$) according to equation \eqref{MA2"} and $\theta _{0}=(0.6,0.2)^{\intercal }$,
and produce 1000 posterior densities based on the tolerance $\varepsilon _{T}
$ being one of $\{1/T^{0.4},1/T^{0.5},1/T^{0.55}\}$. For each of the three
values of $\varepsilon _{T}$, and for a sample size of $T=500$, we record
the posterior mean across the 1000 replications and plot the relevant
empirical densities in Fig. \ref{fig4}. Figure \ref{fig5} contains the results for $T=1000$%
.

%C%R%
Figure \ref{fig4} demonstrates that the standardized Monte Carlo sampling
distribution of $\hat{\theta}=E_{\Pi_{\varepsilon} }(\theta)$, over the 1000
replications, and for each of the three values of $\varepsilon _{T} $, is
approximately Gaussian for both parameters and centered at zero. This
accords with the theoretical results, which only require that $\lim_{{T}%
}\varepsilon_{T}=0$, for asymptotic Gaussianity, and $\lim_{{T}%
}v_{T}\varepsilon _{T}^{2}=0$, for zero asymptotic bias, a condition that is
satisfied for each of the three tolerance values. This result is also in
evidence for $T=1000$, as can be seen in Fig. \ref{fig5}.

\begin{figure}[h!]
\centering 
\setlength\figureheight{3.2cm} 
\setlength\figurewidth{5.6cm} 
\input{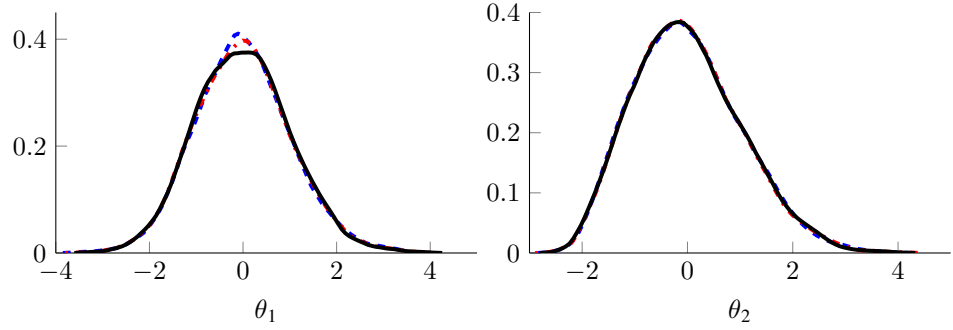}
\caption{Comparison of different tolerance rules for $\protect\varepsilon%
_{T} $: $\protect\varepsilon_{T}=1/T^{0.4}$ ({\color{blue}\textbf{- - -}}); $%
\protect\varepsilon_{T}=1/T^{0.5}$ ({\color{red}\textbf{\--- $\cdot$ \---}}%
); $\protect\varepsilon_{T}=1/T^{0.55}$ ({\color{black}\textbf{\----}}); The
sample size is $T=500$.}
\label{fig4}
\end{figure}

\begin{figure}[h!]
\centering 
\setlength\figureheight{3.2cm} 
\setlength\figurewidth{5.6cm} 
\input{ABC_mean1_1000_BKA.tikz}
\caption{Same information as Fig. \protect\ref{fig4} but for $T=1000$.}
\label{fig5}
\end{figure}

\end{document}